\documentclass[11pt,twoside]{article}
\usepackage{fancyhdr}
\usepackage{epsfig,graphicx}
\usepackage{afterpage}
\usepackage{amsfonts,amsmath,amssymb,amsthm} 
\usepackage{fullpage}
\usepackage[T1]{fontenc} 
\usepackage{epsf} 
\usepackage{graphics} 
\usepackage[sort,numbers]{natbib} 

\usepackage{psfrag,xspace}
\usepackage{color,etoolbox}
\usepackage{graphicx}
\usepackage{caption}
\usepackage{subcaption}
\usepackage{svg,adjustbox}
\usepackage{xspace}
\usepackage{bm}
\usepackage{bbm}

\usepackage{setspace}

\usepackage{marginnote} 
\setlength{\marginparwidth}{2.5cm}

\setlength{\textwidth}{\paperwidth}
\addtolength{\textwidth}{-6cm}
\setlength{\textheight}{\paperheight}
\addtolength{\textheight}{-4cm}
\addtolength{\textheight}{-1.1\headheight}
\addtolength{\textheight}{-\headsep}
\addtolength{\textheight}{-\footskip}
\setlength{\oddsidemargin}{0.5cm}
\setlength{\evensidemargin}{0.5cm}

\theoremstyle{plain}
\usepackage{sansmath}
\usepackage{mathrsfs}

\usepackage[colorlinks,citecolor=blue,urlcolor=blue,linkcolor=blue,bookmarks=false]{hyperref}

\renewcommand{\d}{\,{\mathrm d}}

\newcommand{\Bern}{\mbox{$\mathsf{Bern}$}}

\newcommand{\Geom}{\mbox{$\mathsf{Geom}$}}

\newcommand{\Normal}{\mbox{$\mathsf{N}$}}

\newtheorem{theorem}{Theorem}

\newtheorem{assumption}{Assumption}

\newtheorem{example}{Example}
\newtheorem{proposition}{Proposition}
\newtheorem{corollary}{Corollary}

\makeatletter
\newcommand*{\rom}[1]{\expandafter\@slowromancap\romannumeral #1@}
\makeatother

\newcommand{\tr}[1]{\operatorname{tr}\left({#1}\right)}
\newcommand{\pnorm}[2]{\left\|#2\right\|_{#1}}

\newcommand{\supnorm}[1]{\pnorm{\infty}{#1}}

\renewcommand{\complement}{c}
\let\hat\widehat
\let\tilde\widetilde

\newcommand{\Chat}{\widehat{C}}

\DeclareMathOperator*{\argmax}{arg\,max}
\DeclareMathOperator*{\argmin}{arg\,min}
\DeclareMathOperator*{\arginf}{arg\,inf}
\DeclareMathOperator*{\diam}{diam}

\newcommand{\Truth}{\text{clean}}

\newcommand{\Redi}{\textsc{ReDI}\xspace}
\newcommand{\Wald}{\textsc{wald}\xspace}

\newcommand{\bbE}{\mathbb{E}}
\newcommand{\bbP}{\mathbb{P}}
\newcommand{\bbV}{\mathbb{V}}
\newcommand{\bbR}{\mathbb{R}}

\newcommand{\I}{\mathbb{I}}

\newcommand{\thetahat}{\widehat{\theta}}
\DeclareMathOperator*{\evidence}{ev}
\newcommand{\sgntilde}[1]{\widetilde{\operatorname{sgn}}\left({#1}\right)}

\newcommand{\calH}{\mathcal{H}}
\newcommand{\calP}{\mathcal{P}}
\newcommand{\calF}{\mathcal{F}}
\newcommand{\calA}{\mathcal{A}}
\newcommand{\calL}{\mathcal{L}}
\newcommand{\calS}{\mathcal{S}}
\newcommand{\calX}{\mathcal{X}}
\newcommand{\calY}{\mathcal{Y}}
\newcommand{\calM}{\mathcal{M}}

\newcommand{\calI}{\mathcal{I}}

\newcommand{\scH}{\mathscr{H}}

\newcommand{\KL}{\mathsf{KL}}
\newcommand{\DP}{\mathsf{DP}}
\newcommand{\TV}{\mathsf{TV}}
\newcommand{\HEL}{{\sf H}}
\newcommand{\Was}{\mathsf{W}}
\newcommand{\IPM}{{\sf IPM}}

\newcommand{\truedist}{P^*}
\newcommand{\projdist}{\widetilde{P}}
\newcommand{\trueclass}{\mathcal{Q}}
\newcommand{\model}{\mathcal{P}}
\newcommand{\pilot}{\widehat{P}_1}
\newcommand{\talpha}{t_\alpha}
\newcommand{\Data}{\mathcal{D}}

\newcommand{\projtheta}{\tilde{\theta}}

\newcommand{%
    
    \import{./svg-inkscape/}{.pdf_tex}
}[1]{%
    
    \import{./svg-inkscape/}{#1.pdf_tex}
}


\begin{document}

\begin{center} {\LARGE{\bf{Robust Universal Inference For Misspecified Models}}}
\\

\vspace*{.3in}

{\large{
\begin{tabular}{ccccc}
Beomjo Park{$^*$}, Sivaraman Balakrishnan{$^{*,\dagger}$}, and Larry Wasserman{$^{*,\dagger}$}\\
\end{tabular}

\vspace*{.1in}

\begin{tabular}{ccc}
$^*$Department of Statistics \& Data Science \\
$^\dagger$Machine Learning Department \\
\end{tabular}

\begin{tabular}{c}
Carnegie Mellon University \\
Pittsburgh, PA 15213.
\end{tabular}

\vspace*{.2in}

\begin{tabular}{c}
{\texttt{beomjop@alumni.cmu.edu}, \texttt{\{siva,larry\}@stat.cmu.edu}}
\end{tabular}
}}

\vspace*{.2in}

\today
\vspace*{.2in}
\begin{abstract}
In statistical inference, it is rarely realistic that the hypothesized statistical model is
well-specified, and consequently, it is important to understand the effects of misspecification on inferential procedures. 
When the hypothesized statistical model is misspecified, the natural target of inference is a projection of the data generating distribution onto the model. 
We present a general method for constructing valid confidence sets for such projections, under weak regularity conditions, despite possible model misspecification.
Our method builds upon the universal inference method
and 
is based on inverting a family of split-sample tests of relative fit. We study settings in which our methods yield either exact or approximate,
finite-sample valid confidence sets for various projection distributions. We study rates at which the resulting confidence sets shrink around their target of inference and complement these results 
with a simulation study and a study of causal discovery using a linear causal model with the \textsc{CausalEffectPairs} dataset.
\end{abstract}
\end{center}

\section{Introduction}
One of the broad goals of statistical inference is to draw conclusions about a
distribution $\truedist$ from a sample of the population. This goal is typically facilitated by the use of a
statistical model $\mathcal{P}$, a collection of distributions, which the statistician hypothesizes will contain a useful approximation to the data generating distribution.
The well-specified case is when $\truedist \in \model$ and the misspecified case is when this does not necessarily hold. 

In the misspecified case, the target of inference is usually
a \emph{projection distribution}. Formally, given a divergence $\rho$ which maps a pair of distributions to $\mathbb{R}_+$, we can define the projection\footnote{We tacitly assume that the projection exists and is unique. When the projection is not unique our inferential guarantees always hold for any (arbitrary) fixed choice of the projection $\projdist$. Characterizing the existence of a projection distribution (for $f$-divergences) has received some attention in past work~\cite{li1999estimation}.} of the distribution 
$\truedist$ onto the statistical model as:
\begin{align}
\label{eqn:target}
\projdist := \arginf_{P \in \model} \rho(\truedist \| P). 
\end{align}
{
For a parametric model
${\model} = \{P_\theta:\ \theta\in\Theta\subseteq \bbR^d \}$, we define the \emph{projection parameter}
$\projtheta\equiv \projtheta(P^*)$ as the element of $\Theta$ that
minimizes
$\rho(\truedist \|P_\theta)$.
} 
The general goal of our paper is to construct \emph{uniformly} valid confidence sets for $\projdist$ assuming only weak regularity conditions
on the distribution $\truedist$ and the statistical model $\model$.
We let $X_1,\ldots, X_{n}$ be an i.i.d sample from a distribution $\truedist \in \trueclass$ 
 where $\mathcal{Q} \supseteq \mathcal{P}$ is a class of distributions 
satisfying weak regularity conditions. 
We wish to construct (honest)
confidence sets, $C_{\alpha}(X_1,\ldots, X_{n})$ 
such that,
\begin{align}
\label{eqn:exact_honest_valid_set}
\inf_{\truedist \in \trueclass} \mathbb{P}_{\truedist}(\projdist \in C_{\alpha}(X_1,\ldots, X_{n})) \geq 1 - \alpha. 
\end{align} 
{In some cases, we construct
\emph{asymptotic} confidence sets which are valid as $n \rightarrow \infty$ even for irregular models.}

{

In parametric models minimum distance methods
\citep{beran_minimum_1977, basu1994}
explicitly aim to estimate {projection parameter} $\projtheta$.
The maximum likelihood estimator can be viewed
as a minimum distance estimator when taking $\rho$ to be the Kullback-Leibler (KL) divergence.
However, absent strong regularity conditions, the KL projection
is highly non-robust to small perturbations in $\truedist$,
even tiny perturbations to $\truedist$ can lead to vastly different projections $\projdist$ \citep{beran_minimum_1977,basu1994}. From a practical standpoint, these instabilities can make the KL projection an
undesirable target, and in these cases it is essential to develop a flexible family of methods
that can target other (more stable) projection distributions. Each choice
of the divergence yields a different target. In some cases, these targets will have
drastically different properties. This in turn poses significant challenges in constructing a
unified framework for statistical inference in the misspecified setting.}

{
Classically the Huber-White approach to inference under model misspecification relies on a Taylor expansion of the log-likelihood to characterize the limiting distribution of the maximum likelihood estimator, together with a ``sandwich estimator'' of the limiting variance~\citep{huber_robust_1965,white_maximum_1982,freedman_so-called_2006}.
The validity of this approach relies
on the model ${\model}$ and the sampling distribution $\truedist$ 
satisfying a number of strong regularity conditions. 
As a simple illustration, consider the following example.}
{
\begin{example} \label{ex:mix_unident}
Suppose we observe
$Y_1,\ldots, Y_n \overset{iid}{\sim} P^*$
where $Y_i\in \{0,1,\ldots \}$, and $P^*$ is a Negative Binomial distribution with mean 4 and variance 18. We want to approximate $\truedist$
with a mixture of {Geometric} distributions
$(1-\pi) \Geom(\lambda_1) + \pi \Geom(\lambda_2)$
where $\pi \in [0,1]$
and $\Geom(\lambda)$ denotes a {Geometric} with mean $\lambda \in (0,1]$.
The parameter is $\theta  = (\pi,\lambda_1,\lambda_2)$.
This model
fails to satisfy the usual regularity conditions and $\theta$ is not identifiable in general. Our choice of $P^*$ is well-approximated by a single
Geometric distribution, and consequently the $L^2$ projection parameter $\projtheta$ is a set of
parameter values $\projtheta_{L_2} = ([0,1], 0.19, 0.19)$.
Standard asymptotic methods for constructing confidence sets for
the projection parameter fail. 
Figure \ref{fig::mix}
shows a valid confidence set
for $L_2$ projection parameter $\projtheta$
computed using the numerical algorithm in Section \ref{section::prac}.
This unusual confidence set
correctly mitigates the two main issues --- the lack of identifiability and model misspecification --- to yield a valid confidence set. Specifically, the confidence set shows that if $\lambda_1\approx 0.2$ then
$\lambda_2$ is essentially unconstrained and vice versa.
\end{example}
}
\begin{figure}[!htb]
\begin{center}
\includegraphics[trim={0 19 0 59},clip,scale=.40]{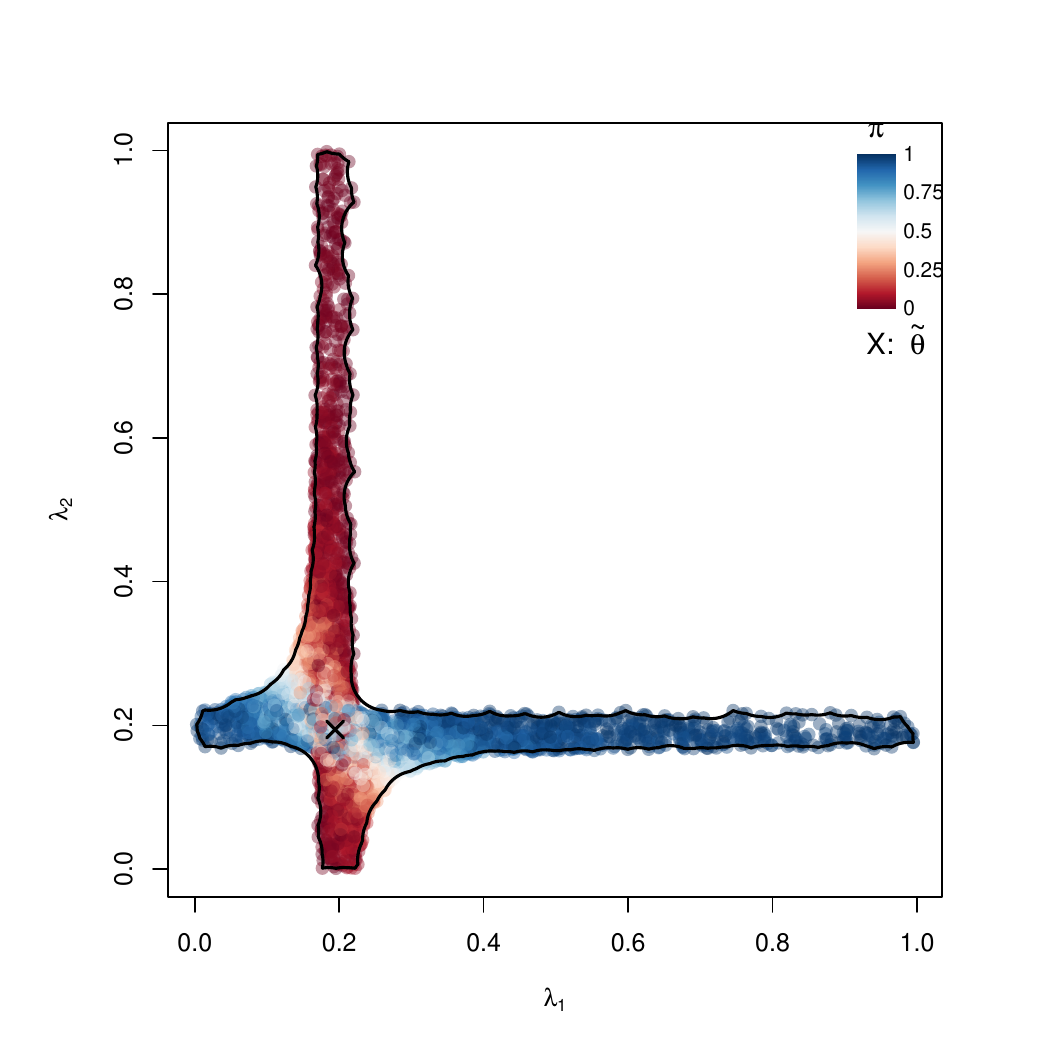}
\end{center}
\caption{{Robust Universal Confidence set for the $L_2$ projection $\projtheta_{L_2} = ([0,1], 0.19, 0.19)$
when 
the true distribution is a Negative Binomial with mean 4 and variance 18.
The Wald confidence set for KL projection $\projtheta_{\KL} = ([0,1], 0.2, 0.2)$ using the Huber-White sandwich estimator is not meaningful since in this example the Fisher information matrix is degenerate.
}
}
\label{fig::mix}
\end{figure}

\noindent {The mixture of Geometric distributions 
is a model that arises in the estimation of microbial species diversity \citep{bunge_estimating_2014, bunge_estimating_2012, hong_predicting_2006}. More broadly, inferential problems with unidentifiable parameters, including those in mixture models, arise in diverse applications \citep{prakasa_rao_identifiability_1992,catchpole_detecting_1997,algeri_searching_2020}. 
See \citet{brazzale_likelihood_2024} for a comprehensive review of irregular problems where likelihood asymptotics fail.}

In recent work~\citep{wasserman_universal_2020}, we introduced a universal inference procedure (described in more detail in Section~\ref{sec:Background})
based on data-splitting to construct uniformly, finite-sample valid likelihood-ratio confidence sets under no regularity conditions. This work showed that, 
in the well-specified setting, sample-splitting can yield practical, finite-sample valid inference, even for irregular statistical models, often at a surprisingly small statistical price. {As highlighted by recent work~\citep{tse_note_2022,dunn2022gaussian,strieder_choice_2022} sample-splitting reduces efficiency
compared to standard methods in regular problems and universal inference is often conservative in these settings. Universal inference, and the methods we propose in this paper, are most useful for irregular problems
where standard methods fail.}

To address the challenges in statistical inference under model misspecification and weak regularity conditions, we develop 
a re-interpretation of the universal inference method \citep{wasserman_universal_2020} as inverting a particular family of pairwise likelihood-ratio tests. This interpretation
brings into focus the key building block of universal inferential methods---pairwise hypothesis tests. Building on this insight
we develop \emph{robust} universal inference procedures by inverting appropriate families of \emph{robust pairwise tests}. 
{Our approach provides a unified framework for constructing robust confidence sets, regardless of
the specific form of the misspecification.} 
We then study 
the design and properties of robust pairwise tests, and relate them to the coverage and size properties of the proposed robust universal inference method.

\subsection{Related Work}
Asymptotic statistical inference, in both the well-specified and misspecified cases, is a topic 
of classical interest. 
Some entry points to the vast literature on this topic include the reference books \cite{huber_robust_1981,vaart_asymptotic_1998}. Results in this literature \citep{kleijn_misspecification_2006,kleijn_bernstein-von-mises_2012}
typically leverage strong regularity conditions to determine the asymptotic distribution of a point estimate (such as the Maximum Likelihood Estimator (MLE)), and use the asymptotic distribution of the estimate 
to construct (asymptotically valid) confidence sets. 
Our work is motivated in part by a recent line of work \citep{rinaldo_bootstrapping_2019,chakravarti_gaussian_2019,wasserman_universal_2020,kuchibhotla_hulc_2021}, and {more classical work~\citep{cox1975} where sample-splitting is used to avoid the strong regularity conditions typically needed for valid statistical inference.}

The focus on statistical inference under weaker regularity conditions, despite model misspecification, is the central theme of work in robust statistics~\citep{tukey_mathematics_1975,huber_robust_1981,hampel_robust_1986}.
One of the best understood methods for constructing robust estimators is to select, from a set of candidates, one which wins a carefully setup tournament -- an idea which goes back to \citet{lecam_convergence_1973,birge_sur_1979,yatracos_rates_1985,donoho_automatic_1988,devroye_combinatorial_2001}, and 
others. At the heart of these tournament estimators are pairwise selectors, which attempt to robustly select one of a pair of candidates, which provide a better \emph{relative} fit to the sampling distribution. 
These robust pairwise tests have been used to great effect in robust estimation, and our work highlights their usefulness in constructing assumption-light confidence sets.

\section{Background} \label{sec:Background}

We let $X_1,\ldots, X_{n}$ be an i.i.d sample from a distribution $\truedist \in \trueclass$
, and 
we let $\model$ denote our working statistical model.
{We assume that the distributions in $\model$ and $\trueclass$ have densities (denoted with lowercase symbols) with respect to a common dominating measure. }
Throughout the paper, the collection of distributions $\trueclass$ will be quite general, typically only satisfying some weak regularity conditions. 

\subsection{Universal Inference}
Our starting point is our prior work~\citep{wasserman_universal_2020} which 
introduced a procedure
based on data-splitting to construct uniformly, finite-sample valid confidence sets under weak regularity conditions. 
Importantly, the validity guarantees of universal inference 
require the statistical model $\model$ to be 
correctly specified. The universal inference procedure is to:
\begin{enumerate}
\item Split the data $\Data := \{X_1,\ldots,X_n\}$ into two sets $\Data_0$ and $\Data_1$. 
\item On the set $\Data_1$ calculate any estimate $\pilot$ (e.g., $\pilot$ could be the MLE in the model $\model$). 
\item 
Let $\mathcal{L}_0(P)$ denote the likelihood of the distribution $P$ evaluated on the samples in $\Data_0$,
$\mathcal{L}_0(P) := \prod_{i \in \Data_0} p(X_i),$
and define $\mathcal{L}_0(\pilot)$ analogously. 
Then construct the confidence set,
\begin{align}
\label{eqn:univ_set}
C_{\alpha}(X_1,\ldots,X_n) = \left\{ P: \frac{\mathcal{L}_0(P)}{\mathcal{L}_0(\pilot)} \geq \alpha \right\}.
\end{align} 
\end{enumerate}
In the well-specified case, \citet{wasserman_universal_2020} show (in their Theorem 1) that,
under no additional regularity conditions, $C_{\alpha}$ is a finite-sample 
valid $1 - \alpha$ confidence set for 
the distribution~$\truedist$.

\vspace{.2cm}

\subsection {A Re-Interpretation of Universal Inference} 
\label{sec:reinterpret_ui}

To motivate the main proposal of this paper, it is useful to re-visit (and generalize)
the procedure described above, via the lens of inverting a family of hypothesis tests. The basic idea is classical, and is sometimes referred
to as the duality between confidence sets and hypothesis tests. 
Formally, given samples $X_1,\ldots, X_n \sim \truedist$,
suppose we have a family of tests 
\begin{align*}
\phi_P: \{X_1,\ldots,X_n\} \mapsto \{0,1\}
\end{align*} 
for testing the null hypothesis $H_0: \truedist = P$.  Here the test function 
 $\phi_P$ takes the value 1 to indicate a rejection of the null hypothesis and takes the value 0 otherwise.
If the family of tests is valid, i.e. they control the Type I error, 
\begin{align}
\label{eqn:typeone}
\bbE_P \left[\phi_P(X_1,\ldots,X_n) \right]\leq \alpha, ~~~\forall~P \in \model,
\end{align}
then the following confidence set,
\begin{align}
\label{eqn:univ_set_two}
C_{\alpha}(X_1,\ldots,X_n) := \left\{ P \in \model: \phi_P = 0 \right\}, 
\end{align}
is uniformly valid when the statistical model is correctly specified, i.e. 
\begin{align*}
\inf_{\truedist \in \model} \mathbb{P}_{\truedist}(\truedist \in C_{\alpha}(X_1,\ldots, X_{n})) \geq 1 - \alpha. 
\end{align*} 
Although this is a general recipe for constructing valid confidence sets, it does not provide the statistician much guidance in designing 
tests which might lead to small confidence sets. 

Universal inference is based on the idea that one can use a separate sample 
to construct an accurate estimate $\pilot$. We can then construct our family of tests, on the remaining samples, to have high power in distinguishing 
the sampling distribution from this pilot estimate. Formally, we could choose our family of tests 
to have high power to distinguish the hypotheses:
\begin{align*}
H_0: ~\truedist = P,  \qquad \text{versus} \qquad
H_1: ~\truedist = \pilot.
\end{align*}
This use of a separate sample to construct a pilot estimate, 
simplifies the design of the tests to invert considerably since now we can focus on 
tests that have strong guarantees for distinguishing this simple null-versus-simple alternative. Indeed, universal inference uses 
the likelihood-ratio test for distinguishing these hypotheses, resulting in tests $\phi_P$ of the form:
\begin{align*}
\phi_P = \mathbb{I}\left[ \frac{\mathcal{L}_0(\pilot)}{ \mathcal{L}_0(P)} > \talpha(P, \pilot) \right],
\end{align*}
for a choice of the threshold $\talpha(P, \pilot)$ which ensures that the condition in~\eqref{eqn:typeone} is satisfied. Although it is possible in principle to determine \emph{optimal} thresholds in the likelihood-ratio tests above, this can be practically cumbersome since these thresholds depend on both the pilot estimate $\pilot$ and the null
hypothesis $P$ under consideration. The work of \citet{wasserman_universal_2020} further shows that a universal threshold $\talpha = 1/\alpha$ suffices to ensure the condition in~\eqref{eqn:typeone}. To summarize,
one can view the universal inference confidence set~\eqref{eqn:univ_set} as arising by inverting a family of likelihood-ratio tests designed to distinguish
each candidate distribution $P$ from a pilot estimate $\pilot$.  

We emphasize that the universal inference procedure, and its reinterpretation described above rely crucially on correct model-specification to ensure validity.
For instance, inverting a family of tests that satisfies \eqref{eqn:typeone} is no longer meaningful when the model is misspecified. 
However, the testing interpretation suggests that one might develop novel variants of the universal inference procedure which 
are useful despite model-misspecification, by formulating appropriate robust hypotheses and designing robust tests for distinguishing them. We make these ideas
precise in Section~\ref{sec:rui}.

\subsection{Divergences}
Throughout this paper, we make frequent use of different divergences between pairs of 
probability distributions. We briefly introduce them here. We let $P$ and $Q$ be distributions 
with densities $p$ and $q$ with respect to a common dominating measure $\lambda$. 

The Hellinger distance is defined as:
$\HEL(P,Q) = \frac{1}{\sqrt{2}} \left( \int (\sqrt{p} - \sqrt{q})^2 d\lambda \right)^{1/2}, $
and the Kullback-Leibler (KL) divergence is defined as:
\begin{align*}
\KL(P \| Q) = 
\begin{cases} 
\int \left(\log \frac{p}{q} \right) \d P, & \text{if}~P~\text{is dominated by}~Q, \\
\infty, & \text{otherwise}.
\end{cases}
\end{align*}
The family of density power divergences \citep{basu_robust_1998}, are defined for a parameter $\beta \geq 0$ as,
\begin{align*}
  \DP_\beta (P \| Q) &=
    \int \left\{ q^{1+\beta} - \left( 1+\frac{1}{\beta} \right) q^{\beta} p + \frac{1}{\beta} p^{1+\beta} \right\} d \lambda,
      & \beta > 0
\end{align*}
and $\DP_0 = \KL$ is defined by taking the limit of $\beta \to 0$.
{When $\beta=1$,
the density power divergence is the $L_2$ distance
$\int (p - q)^2 \d\lambda$.}
Finally, the family of Integral Probability Metrics \citep[IPMs,][]{muller_integral_1997}, 
are defined as 
\begin{align*}
    \IPM_{\mathcal{F}}(P, Q) = \sup_{f \in \mathcal{F}} \left| \bbE_{P} (f) - \bbE_{Q} (f) \right|
\end{align*}
where $\mathcal{F}$ is a symmetric class (i.e., $f \in \mathcal{F} \implies - f \in \mathcal{F}$) of real-valued bounded measurable functions on the domain of $P$ and $Q$. 
Important special cases of IPMs include the Total Variation distance (TV, where $\mathcal{F}$ is the collection of functions with sup-norm at most 1), the Wasserstein-1 distance (where $\mathcal{F}$ is the collection of 1-Lipschitz functions) and 
the Maximum Mean Discrepancy (where $\mathcal{F}$ is the unit ball of a Reproducing Kernel Hilbert Space).

\section{{Model Misspecification and Unstable Projection Parameters}}
\label{sec:fail}

To provide some motivation and intuition for the methods we propose in this paper, it is useful to understand some of the failures of the universal inference framework 
when the statistical model is misspecified, and the target of inference is the KL projection. {All proofs accompanying the examples given in this section are in Supplementary Material~\ref{app:examples}.}

\subsection{Unbounded Likelihood-Ratios}
The behavior of likelihood-ratio based methods can be sensitive to the tail behavior of likelihood-ratios.
The following simple example illustrates that under model misspecification, universal inference can fail to cover the KL projection parameter. These pathologies
do not arise when the statistical model is correctly specified, and the challenges in this example arise due to an interplay between poorly behaved likelihood-ratios and model misspecification.
This example also serves to highlight the fact that the KL projection parameter can in some cases be an undesirable inferential target. We let $\Bern(p)$ denote the Bernoulli distribution with parameter~$p$.

\begin{example}\label{ex: instability_lr}
Suppose we observe 
$X_1,\ldots,X_n \sim \truedist := \Bern (\epsilon_n)$ for some non-negative $0 < \epsilon_n < (1-\alpha)/n$. 
We use the statistical model $\model = \left\{ \Bern(p) : p \in \left\{0, 1/2 \right\} \right\}$. 
Suppose we consider the pilot estimator to be the MLE,
\begin{align}\label{eq:pilot_MLE}
\pilot = \argmax_{p \in \model} \mathcal{L}_1(p).
\end{align}
Then, for all sufficiently large $n \geq n_0$ where $n_0$ only depends on $\alpha$ the split LRT confidence set in~\eqref{eqn:univ_set}, with an equal sized split into $\Data_0$ and $\Data_1$, 
fails to cover the KL projection $\projdist$ at the nominal level.
\end{example}
\noindent In this example the KL projection distribution $\projdist$ is $\Bern(1/2)$. 
For $\epsilon_n \ll 1/n$, with high probability the samples $X_1,\ldots,X_n$ are all $0$. Consequently, the MLE $\pilot$ with high-probability 
will be $\Bern(0)$. Furthermore, the split sample likelihood $\mathcal{L}_0$ will be much higher for $\Bern(0)$ than $\Bern(1/2)$, and consequently $\Bern(1/2)$ 
will not be included in the universal set.

In this example likelihood-ratios are unbounded and as a consequence the KL divergence is an unstable function of the model parameters, i.e. 
when $\epsilon_n = 0$, $\KL(\Bern(\epsilon_n) \| \Bern(0))$ is 0, but is $\infty$ for any $\epsilon_n > 0$. In such cases, 
the finite-sample (log)-likelihood-ratio is a poor estimate of the population KL divergence, and this poses significant challenges for 
finite-sample valid inference. From a practical standpoint, a more reasonable inferential target could be a different, stabler projection distribution 
(e.g., the Hellinger or TV projection distribution) and we address this in Sections~\ref{sec:hellinger} and \ref{sec:IPM}.

{
Huber's $\epsilon$-contamination model yields another canonical illustration of the non-robustness of the KL projection. The following example shows that universal inference can fail in this setting.
\begin{example}\label{ex:normal_contam}
Suppose we observe 
$X_1,\ldots,X_n \sim \truedist := (1-\epsilon_n) N(0,1) + \epsilon_n Q_{a_n}$ for some small positive $0 <\epsilon_n \leq c_1/n$, $|a_n| \geq c_2 n^2$ where $c_1, c_2 >0$ are small positive constants, and $Q_{a}$ is uniform on
$[a-\delta, a+\delta]$ for some $\delta>0$.
Consider a Gaussian location model $\model = \left\{ N(\theta, 1) : \theta \in \bbR \right\}$. 
Suppose we take the pilot estimator to be the MLE~\eqref{eq:pilot_MLE}.
Then, for all sufficiently large $n \geq n_0$ where $n_0$ only depends on $\alpha$ the split LRT confidence set in~\eqref{eqn:univ_set}, with an equal sized split into $\Data_0$ and $\Data_1$, 
fails to cover the KL projection $\projdist$ at the nominal level.
\end{example}
\noindent The KL projection parameter in this example is $\projtheta=\epsilon_n a_n$ which tends to $\infty$ as $n \to\infty$ despite $\epsilon_n \ll 1/n$. With high probability, all samples are nearly indistinguishable from those of $\Normal(0,1)$. The MLE $\thetahat_1=\overline{X}_{n_1}$ will be much closer to zero than $\projtheta$, resulting in a higher split sample likelihood $\calL_0$ for $\thetahat_1$ than $\projtheta$. Consequently $\projtheta$ will not be included in the universal set.
The fundamental issue again lies in the instability of the KL projection, which arises from the sensitivity of the expected log-likelihood ratio to its tail behavior.
In contrast to the KL projection, the Hellinger projection, for example, tends to the uncontaminated distribution $N(0,1)$ \citep{beran_minimum_1977} as $\epsilon_n \rightarrow 0$ and is a better inferential target. In Section~\ref{sec:Empirical_analysis}, we empirically demonstrate this issue and show that targeting a more stable projection alleviates the problem. 
}

\subsection{Failure Despite Bounded Likelihood-Ratios}

In the previous example it is clear that unbounded likelihood-ratios can result in pathologies that are challenging to address with finite-sample valid inference.
However, even when all likelihood-ratios in the model are well-behaved, universal inference can fail to cover the KL projection parameter. It is important to note that except under the stringent
condition that the underlying model is convex (see Section~6 of \citep{wasserman_universal_2020}), universal inference has no guaranteed coverage when the model is misspecified.

\begin{example}\label{ex: sLRT_fail}
Suppose we obtain $X_1,\ldots,X_n \sim \truedist := \Bern (0.5 + \epsilon_n)$ for some small, positive $0 <\epsilon_n \leq c/n$, where $c > 0$ is a small positive universal constant.
Our hypothesized model consists of two distributions,  $\model = \left\{ \Bern(p) : p \in \left\{1/4, 3/4 \right\} \right\}$. 
Suppose we take the pilot estimator to be the MLE~\eqref{eq:pilot_MLE}.
Then, for all sufficiently large $n$ (depending only on $\alpha$) the split LRT confidence set in~\eqref{eqn:univ_set}, with an equal sized split into $\Data_0$ and $\Data_1$, 
fails to cover the KL projection $\projdist$ at the nominal level.
\end{example}

The KL projection distribution is $\Bern(3/4)$. We show that the pilot estimate $\pilot$ with probability near $1/2$ will
be the distribution $\Bern(1/4)$, and further with probability near $1/2$ the KL projection $\Bern(3/4)$ will have a much smaller split sample likelihood than $\pilot$. As a direct consequence,
universal inference will fail to cover the projection distribution $\Bern(3/4)$. 

In contrast to the previous example, this example is much less pathological. All the relevant likelihood-ratios are bounded, and the log-likelihood is a consistent
estimate of the KL divergence. However, even in this relatively benign example universal inference fails.
We show in Section~\ref{sec:KL} that a simple modification to the universal inference procedure fixes this issue when the relevant likelihood-ratios are bounded, and ensures correct coverage.

In order to focus on the main issues, we have illustrated the failure of universal inference when the pilot estimator is the MLE. Indeed, part of the appeal of universal inference is that its 
coverage guarantees hold, in the well-specified case for any pilot estimate (including the MLE). 
Though we do not pursue this here, it is straightforward to extend these examples to show that both failures persist irrespective of 
how the pilot is chosen, i.e. the failures of universal inference that we highlight are driven by the second stage (of constructing the confidence set) and not by the first stage (of constructing a reasonable pilot estimate).

These examples set the stage for the methodological development of the rest of the paper. To address problems of the first type we recommend targeting a different
projection parameter (for instance, the TV or Hellinger projection, in Sections~\ref{sec:hellinger} and \ref{sec:IPM}), and to address problems of the second type we develop methods which guarantee coverage
of the KL projection parameter when the likelihood-ratios are uniformly upper bounded or more generally have finite $2 + \xi$ moments for some $\xi > 0$ (see Section~\ref{sec:KL}).

\section{Robust Universal Inference} \label{sec:rui}

In this section, we present a simple but powerful pair of general results which yield exact and approximate universal confidence sets. The workhorse of these results 
are tests of \emph{relative fit} which we first briefly introduce before showing how these tests can be inverted to derive robust confidence sets.

\subsection{Tests of Relative Fit}
\label{sec:Test_RF}

Suppose that we are given samples $X_1,\ldots, X_n \sim \truedist$, together with a pair of candidate distributions $(P_0, P_1) \in \model^2$, and a divergence measure $\rho$. 
With this setup in place, we now consider a family of tests $\phi_{P_0, P_1}$ to distinguish the hypotheses:
\begin{align}
\label{eq:exact_relative_test}
H_0: &~\rho(\truedist \| P_0) \leq \rho(\truedist \| P_1), \qquad \text{versus} \qquad
H_1: ~\rho(\truedist \| P_0) > \rho(\truedist \| P_1).
\end{align}

We refer to the tests $\phi_{P_0, P_1}$ as {\em tests of relative fit}.
In contrast to the classical setting, where we hypothesize 
that one of the distributions $(P_0, P_1)$ truly generated 
the samples, in the misspecified setup 
this assumption is no longer tenable.
Instead, we hypothesize that one of the distributions $(P_0, P_1)$ is \emph{closer} to the data generating distribution.
In general, the two hypotheses are no longer simple hypotheses and we need to take some care in designing the family 
of tests $\phi_{P_0, P_1}$. The design of tests of relative fit (and related variants) has a rich history and forms the basis for a class of tournament-based robust estimators 
\citep{birge_sur_1979, devroye_combinatorial_2001, lecam_convergence_1973, lugosi_risk_2019, yatracos_rates_1985}.

For divergences like the Total Variation and the Hellinger distance, designing exact tests of relative fit can require strong regularity conditions akin to those that would be required
to estimate these divergences. Surprisingly, in these cases, it is still possible to design \emph{approximate} tests of relative fit
under weak regularity conditions. More formally, 
suppose that for some $\nu \geq 1$, we can design a test for the following null hypothesis:
\begin{align}
\label{eq:approx_relative_test}
H_0: &~\nu\rho(\truedist \| P_0) \leq \rho(\truedist \| P_1).
\end{align}
We refer to tests for this hypothesis as \emph{approximate} tests of relative fit $\phi_{P_0, P_1,\nu}$. Under the null hypothesis, the distribution $P_0$ is closer than $P_1$ to $\truedist$ by a factor $\nu \geq 1$, which can 
ease the design of valid tests for this hypothesis.

Robust tests for null hypotheses of the form in~\eqref{eq:approx_relative_test} (for the Hellinger distance) were introduced by \citet{lecam_convergence_1973} and are discussed in detail in the work of~\citet{baraud_rho-estimators_2018}. In the context
of estimation these approximate tests yield what are known as non-sharp oracle inequalities. In the context 
of inference, as we explore further in Section~\ref{sec:RUCset}, inverting approximate relative fit tests will yield weaker guarantees.  
In Section~\ref{sec:tests} we consider the design of tests of relative fit in concrete settings, but now proceed to study the implications of designing such tests 
for the construction of robust confidence sets.

\subsection{Exact Robust Universal Confidence Sets} 
\label{sec:RUCset}
We now propose to construct a confidence set by inverting a family of tests of relative fit. This is similar in spirit to 
the procedure described in Section~\ref{sec:reinterpret_ui}.

Suppose, for every $\truedist\in\trueclass$, the family of tests of relative fit $\phi_{P_0, P_1}$ is valid, i.e. it controls the Type I error:
\begin{align}
\label{eqn:typeone_robust}
\bbE_{\truedist} \left[\phi_{P_0, P_1}(X_1,\ldots,X_n) \right]\leq \alpha, ~~~\forall~(P_0, P_1) \in \calS_0
\end{align}
where $\calS_0 = \{ (P_0, P_1) \in \model^2: \rho(\truedist \| P_0) \leq \rho(\truedist \| P_1)\}.$ 
Then, for any fixed $P_1 \in \model$, the confidence set we construct is the set of candidates $P_0$ which we fail to reject:
\begin{align*}
C_{\alpha,n} \equiv C_{\alpha}(X_1,\ldots,X_n) := \left\{ P_0 \in \model: \phi_{P_0, P_1} (X_1,\ldots,X_n) = 0  \right\}. 
\end{align*}
The following result shows that irrespective of the choice of $P_1$ the above construction yields a valid confidence set for the projection distribution:
\begin{proposition}\label{prop:honest_exact}
For any fixed $P_1 \in \model$, $C_{\alpha,n}$ is a uniformly valid $(1-\alpha)$ honest confidence set for the projection $\projdist$.
\end{proposition}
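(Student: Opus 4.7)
The plan is to observe that Proposition~\ref{prop:honest_exact} is essentially a direct corollary of the Type I error control assumed in~\eqref{eqn:typeone_robust}, once we verify that the pair $(\projdist, P_1)$ belongs to the set $\calS_0$ on which that control is asserted. Everything else is bookkeeping.

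First I would unpack the two definitions. By construction of $C_{\alpha,n}$, the event $\{\projdist \in C_{\alpha,n}\}$ coincides exactly with $\{\phi_{\projdist, P_1}(X_1,\ldots,X_n) = 0\}$. Next, I would invoke the definition of the projection: since $\projdist = \argmin_{P \in \model} \rho(\truedist \| P)$ and $P_1 \in \model$, we have
\[
\rho(\truedist \| \projdist) \;\leq\; \rho(\truedist \| P_1),
\]
which is precisely the condition for $(\projdist, P_1) \in \calS_0$. This is the only substantive step, and it holds for every $\truedist \in \trueclass$ (using that the projection is assumed to exist for every $\truedist \in \trueclass$, as noted in the footnote following~\eqref{eqn:target}).

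With $(\projdist, P_1) \in \calS_0$ established, the Type I error bound~\eqref{eqn:typeone_robust} applied at this particular pair gives
\[
\bbP_{\truedist}\bigl(\phi_{\projdist, P_1}(X_1,\ldots,X_n) = 1\bigr) \;=\; \bbE_{\truedist}\bigl[\phi_{\projdist, P_1}(X_1,\ldots,X_n)\bigr] \;\leq\; \alpha,
\]
since $\phi$ is $\{0,1\}$-valued. Passing to complements and then to the infimum over $\truedist \in \trueclass$ yields
\[
\inf_{\truedist \in \trueclass} \bbP_{\truedist}(\projdist \in C_{\alpha,n}) \;\geq\; 1 - \alpha,
\]
which is the honest coverage claim.

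There is really no obstacle here, which is the point worth flagging: the construction has been engineered so that the only nontrivial step is the one-line observation that the projection, by its minimizing property, automatically sits on the null side of the relative-fit hypothesis against any fixed competitor $P_1$. The result therefore isolates the entire statistical difficulty into the design of valid relative-fit tests satisfying~\eqref{eqn:typeone_robust}, which is the task taken up in Section~\ref{sec:tests}.
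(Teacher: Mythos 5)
Your proposal is correct and follows exactly the paper's argument: identify that $(\projdist, P_1) \in \calS_0$ by the minimizing property of the projection, then apply the Type I error control~\eqref{eqn:typeone_robust} to the pair $(\projdist, P_1)$ and pass to complements. The only difference is that you spell out the bookkeeping more explicitly than the paper's one-line proof.
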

\begin{proof}
For any $\truedist \in \trueclass$,
$\bbP_{\truedist}(\projdist \notin C_{\alpha,n} ) 
    =   \bbP_{\truedist}( \phi_{\projdist, P_1} = 1 )
    =   \bbE_{\truedist}( \phi_{\projdist, P_1} ) \le \alpha$
using~\eqref{eqn:typeone_robust} since $(\projdist, P_1) \in \calS_0$ for any choice of $P_1 \in \model$.
\end{proof}

As in the well-specified case discussed earlier, this general result does not provide any guidance on how to choose $P_1$. 
We follow the idea of universal inference and first 
construct an accurate estimate $\pilot$ of $\projdist$ from a 
separate sample $\Data_1$ and then construct 
the family of \emph{split} tests of relative fit $\phi_{P_0, \pilot}$ 
from the remaining samples $\Data_0$. We call the 
resulting confidence set the \emph{exact Robust Universal Confidence set}.
\begin{align}
\label{eq:exact_rob_uni_set}
  C_{\alpha,n} \equiv C_{\alpha} (X_1,\dots, X_n) := \{P_0\in \calP: \phi_{P_0, \pilot} (\Data_0) = 0\}.
\end{align}

\begin{theorem}
\label{thm:honest_exact}
 Let $\pilot \in \model$ be any estimate of $\projdist$ based on $\Data_1$. Then, the exact robust universal confidence set $C_{\alpha,n}$ is a uniformly valid confidence set for $\projdist$, meaning that
\begin{align*}
  \inf_{\truedist \in \trueclass}  \bbP_{\truedist} (\projdist \in  C_{\alpha, n}) \ge 1 - \alpha. \qedhere
\end{align*}
\end{theorem}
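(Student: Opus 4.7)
The plan is to deduce Theorem~\ref{thm:honest_exact} from Proposition~\ref{prop:honest_exact} by conditioning on the auxiliary subsample $\Data_1$ used to build the pilot. The whole point of the sample split is that, conditional on $\Data_1$, the estimate $\pilot$ is a deterministic element of $\model$, while $\Data_0$ remains an i.i.d.\ sample from $\truedist$ that is independent of $\Data_1$. Consequently, the family $\{\phi_{P_0, \pilot}\}_{P_0 \in \model}$ behaves, conditionally on $\Data_1$, exactly like the fixed-$P_1$ family treated in Proposition~\ref{prop:honest_exact}.

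Concretely, I would first write the miscoverage probability via the tower property as
\[
\bbP_{\truedist}(\projdist \notin C_{\alpha,n}) \;=\; \bbE_{\truedist}\!\left[\,\bbP_{\truedist}\!\left(\phi_{\projdist, \pilot}(\Data_0) = 1 \,\middle|\, \Data_1\right)\right].
\]
Next, I would invoke the defining property of the projection $\projdist = \argmin_{P \in \model} \rho(\truedist \| P)$, which gives $\rho(\truedist \| \projdist) \le \rho(\truedist \| \pilot)$ almost surely, no matter what value $\pilot$ takes. In other words, the random pair $(\projdist, \pilot)$ always lies in the null region $\calS_0$. Conditional on $\Data_1$, this pair is fixed and $\Data_0$ is a clean i.i.d.\ sample from $\truedist$, so the assumed Type I error control~\eqref{eqn:typeone_robust} yields
\[
\bbP_{\truedist}\!\left(\phi_{\projdist, \pilot}(\Data_0) = 1 \,\middle|\, \Data_1\right) \le \alpha.
\]
Taking expectation over $\Data_1$ and then the infimum over $\truedist \in \trueclass$ completes the argument.

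The only real care needed is in the invocation of~\eqref{eqn:typeone_robust}: that assumption is phrased for pairs $(P_0, P_1)$ fixed in advance, so one must exploit the independence of $\Data_0$ and $\Data_1$ to reduce the random-pilot situation to the fixed-pair setting. This is precisely the value added over Proposition~\ref{prop:honest_exact}, where $P_1$ was already fixed; here, the split ensures that the pilot's data dependence does not leak into the distribution of the sample on which $\phi_{\projdist,\pilot}$ is evaluated. Aside from this bookkeeping, the proof is a direct elaboration of the proof of Proposition~\ref{prop:honest_exact}, and I do not anticipate any technical obstacle.
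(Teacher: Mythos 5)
Your proposal is correct and follows essentially the same route as the paper: condition on $\Data_1$, observe that $(\projdist,\pilot)\in\calS_0$ always holds by the definition of the projection, apply the Type I error control \eqref{eqn:typeone_robust} to the now-fixed pair via the independence of $\Data_0$ and $\Data_1$, and integrate out $\Data_1$. Your extra remark about why the split reduces the random-pilot case to the fixed-pair case is exactly the point the paper's one-line proof is implicitly relying on.
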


\begin{proof}
The proof is straightforward noticing that conditional on $\Data_1$, the claim reduces to the claim of Proposition~\ref{prop:honest_exact}. Concretely, for any $\truedist \in \trueclass$,
  \begin{align*}
    \bbP_{\truedist} (\projdist \notin C_{\alpha, n}) 
      =\bbE_{\truedist} (\phi_{\projdist, \pilot})  
      =&  \bbE_{\Data_1} \left[ \bbE_{\Data_0}(\phi_{\projdist, \pilot} (\Data_0) \;|\; \Data_1) \right] 
      \le \bbE_{\Data_1} (\alpha) = \alpha.  \qedhere
  \end{align*}
\end{proof}

The robust confidence set will often contain both the pilot estimate $\pilot$ as well as the projection distribution $\projdist$ 
(see Proposition~\ref{prop:min_diam_exact} in Supplementary Material~\ref{app:secfour} for a formal statement). This is similar to the classical universal inference procedure which in the well-specified case will often contain both the 
pilot estimate and the true sampling distribution. In universal inference this suggests that in order to obtain small confidence sets, 
we should aim to design $\pilot$ to be a good estimate of the true sampling distribution $\truedist$. On
the other hand in the misspecified case, this suggests that we should design $\pilot$ to be a good estimate of the projection $\projdist$. Specifically, our pilot estimate should 
be tailored to the divergence measure $\rho$. We investigate the choice of $\pilot$ and its effect on the size of the resulting confidence set further in Section~\ref{sec:size_set}.

\subsection{Approximate Robust Universal Confidence Sets}
In some cases, e.g., for the Hellinger distance and the TV distance, designing exact robust tests will require some (potentially strong) regularity conditions.
However, in these cases one can design approximate tests of relative fit straightforwardly.

Suppose, for any $\truedist\in \trueclass$, the family of approximate tests of relative fit $\phi_{P_0, P_1, \nu}$ which controls the Type~\rom{1} error satisfies \eqref{eqn:typeone_robust} with $\calS_0 = \{ (P_0, P_1) \in \model^2 : \nu \rho(\truedist \| P_0) \leq \rho(\truedist \| P_1)\}$ for some $\nu \geq 1$. We will additionally make the mild assumption that our tests of relative fit do not reject (with probability at least $1-\alpha$) when comparing the relative fit of a distribution to itself, i.e.: 
\begin{align}
\label{ass:powerless}
\sup_{\truedist\in\trueclass}\bbE_{\truedist} [\phi_{P,P,\nu}] \le \alpha~\text{for any fixed}~ P\in\model.  
\end{align}
This condition will be true for all the tests we introduce in Section~\ref{sec:tests}.

Let $\pilot$ be any estimate of $\projdist$ from $\Data_1$.
Then, the approximate robust universal confidence set, akin to \eqref{eq:exact_rob_uni_set}, is obtained by inverting the family of valid split tests $\phi_{P_0, \pilot, \nu}$ constructed from the remaining samples $\Data_0$:
\begin{align}
\label{eqn:robust_approx_set}
C_{\nu,\alpha,n} \equiv C_{\nu,\alpha}(X_1,\ldots,X_n) := \left\{ P_0 \in \model: \phi_{P_0, \pilot, \nu} (\Data_0) = 0  \right\}. 
\end{align}

This confidence set may not cover the projection distribution $\projdist$. We will relax our goal to instead be to 
cover an approximate projection distribution.
More formally, we relax the target of inference to be the \emph{$\nu$-approximate projection set} $\projdist_\nu$ defined as
\begin{align}
\label{eqn:pnu}
   \projdist_\nu = \{P\in \model: \rho(\truedist \| P) \le \nu \rho(\truedist \| \projdist) \}.
\end{align}
If a set $C$ is a $\nu$-approximate confidence set, we define its coverage by
\begin{align*}
  \bbP_{\truedist}\left(Q\in C\ {\rm for\ some\ }Q \in \projdist_\nu\right) =
  \bbP_{\truedist}\left(\projdist_\nu \cap C \neq \emptyset\right). 
\end{align*}
Figure~\ref{fig:schematic_approx_set} shows a schematic diagram to illustrate the notion of approximate coverage. When $\nu = 1$, i.e. we invert an exact test, we guarantee that with probability at least $1 - \alpha$, the set 
$C_{\nu,\alpha,n}$ contains $\projdist$. On the other hand, when $\nu > 1$ we only guarantee that the intersection of $C_{\nu,\alpha,n}$ with the collection of $\nu$-approximate projections (in cyan) is non-empty. 
\begin{figure}[!htb]
  \fontsize{12}{12}\selectfont
  \centering
  \adjustbox{trim=2cm 8cm 6cm 6cm} {
  \includesvg[width=0.8\textwidth,inkscapearea=page]{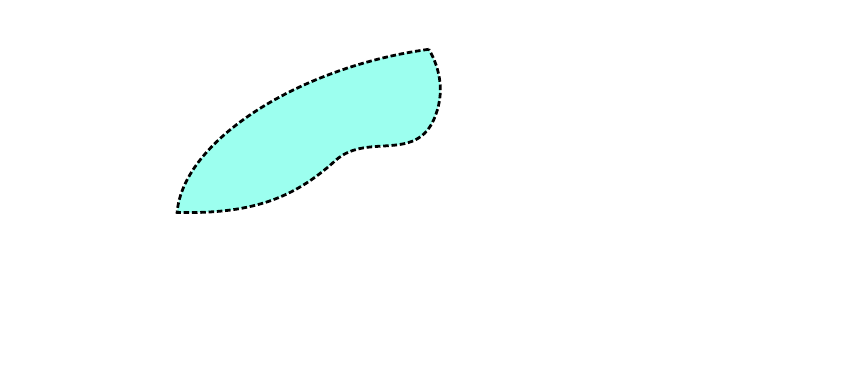}
  }
  \caption{An exact confidence set $C_{\nu,\alpha,n}$ (in blue) targets the distribution $\projdist$ (in red star), and an approximate confidence set targets the set $\projdist_\nu$ (in cyan). 
   }
  \label{fig:schematic_approx_set}
\end{figure}

The set $\projdist_\nu$ is a collection of distributions that are as close to $\truedist$ as $\projdist$ (up to a factor $\nu$). 
The approximate confidence set guarantee
is most meaningful when $\nu$ is close to 1, or when the model misspecification is not too extreme, i.e. $\rho(\truedist \| \projdist)$ is small. 
{
The approximate confidence set requires some care to interpret. In practice we are sometimes interested in constructing a confidence set for a \emph{functional} $\psi: \mathcal{P} \mapsto \mathbb{R}^d$ of the projection distribution (say, its mean). A $\nu$-approximate confidence set for $\widetilde{P}$ does not in general yield an approximate confidence set for $\psi(\widetilde{P})$ except under strong regularity conditions which allow us to relate the divergence $\rho$ between two distributions to a distance between their induced functional values (for instance, differentiability in quadratic mean \citep{vaart_asymptotic_1998}). }

\begin{theorem}
\label{thm:honest_approx}
Let $\pilot \in \model$ be any estimate of $\projdist$ based on $\Data_1$. Suppose that our approximate relative fit tests are valid, and satisfy the condition in~\eqref{ass:powerless}. 
Then, the approximate robust universal confidence set $C_{\nu,\alpha,n}$ is a uniformly valid $\nu$-approximate confidence set for $\projdist$:
\begin{align*}
  \inf_{\truedist \in \trueclass}  \bbP_{\truedist} (\projdist_\nu \cap C_{\nu,\alpha, n} \ne \emptyset) \ge 1 - \alpha.
\end{align*}
\end{theorem}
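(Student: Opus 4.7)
The plan is to mimic the conditioning argument used for Theorem~\ref{thm:honest_exact}, but split the analysis into two cases based on whether the pilot $\pilot$ already lies in the approximate projection set $\projdist_\nu$ or not. The crucial observation is that $\pilot \in \model$, so we can always exhibit at least one candidate distribution that is guaranteed (with probability at least $1-\alpha$) to lie in $\projdist_\nu \cap C_{\nu,\alpha,n}$; which candidate we use depends on the (random) location of $\pilot$.

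First, I would condition on $\Data_1$, which renders $\pilot$ a fixed element of $\model$. Working under this conditioning, consider the event $\{\pilot \in \projdist_\nu\}$. On this event, I want to show that $\pilot$ itself lies in $C_{\nu,\alpha,n}$ with high $\Data_0$-probability. Since $\pilot$ is a fixed element of $\model$ given $\Data_1$, the no-reject-against-self condition~\eqref{ass:powerless} applied with $P = \pilot$ gives
\begin{align*}
\bbE_{\truedist}\bigl[\phi_{\pilot,\pilot,\nu}(\Data_0) \,\big|\, \Data_1\bigr] \le \alpha,
\end{align*}
so $\bbP_{\truedist}(\pilot \in C_{\nu,\alpha,n} \mid \Data_1) \ge 1-\alpha$, and hence $\projdist_\nu \cap C_{\nu,\alpha,n} \neq \emptyset$ on this branch with conditional probability at least $1-\alpha$.

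Next, on the complementary event $\{\pilot \notin \projdist_\nu\}$, the definition of $\projdist_\nu$ in~\eqref{eqn:pnu} gives $\rho(\truedist \| \pilot) > \nu \rho(\truedist \| \projdist)$. This is precisely the null hypothesis in~\eqref{eq:approx_relative_test} with $(P_0, P_1) = (\projdist, \pilot)$, so the validity of the approximate relative-fit test (the analogue of~\eqref{eqn:typeone_robust} under the misspecification assumption) yields
\begin{align*}
\bbE_{\truedist}\bigl[\phi_{\projdist,\pilot,\nu}(\Data_0) \,\big|\, \Data_1\bigr] \le \alpha.
\end{align*}
Thus $\bbP_{\truedist}(\projdist \in C_{\nu,\alpha,n} \mid \Data_1) \ge 1-\alpha$, and since $\projdist \in \projdist_\nu$ trivially (using $\nu \ge 1$), we again obtain $\projdist_\nu \cap C_{\nu,\alpha,n} \neq \emptyset$ with conditional probability at least $1-\alpha$.

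Combining the two branches, the conditional probability $\bbP_{\truedist}(\projdist_\nu \cap C_{\nu,\alpha,n} \neq \emptyset \mid \Data_1) \ge 1-\alpha$ holds pointwise in $\Data_1$, and taking expectations over $\Data_1$ and then the infimum over $\truedist \in \trueclass$ gives the claim. The only subtlety I anticipate is the bookkeeping around conditioning: one must verify that~\eqref{ass:powerless} and the validity condition are stated uniformly enough in the ``fixed $P$'' to be applied to the random $\pilot$ conditional on $\Data_1$. This is immediate since the uniformity is over all $P \in \model$ (resp.\ all $(P_0,P_1) \in \calS_0$) and all $\truedist \in \trueclass$, so the bound transfers verbatim to any measurable, $\Data_1$-determined choice of $P$.
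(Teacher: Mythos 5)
Your proposal is correct and follows essentially the same argument as the paper: both split on the $\Data_1$-measurable event $\{\pilot \in \projdist_\nu\}$, use the no-reject-against-self condition~\eqref{ass:powerless} to place $\pilot$ in the set on that event, and use test validity on $(\projdist,\pilot)\in\calS_0$ on the complement, then combine. The only difference is presentational (you condition on $\Data_1$ pointwise rather than writing the total-probability decomposition over $E$ and $E^{\complement}$), which changes nothing of substance.
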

\begin{proof}
Fix any $\truedist \in \trueclass$. Let the event $E = \{\pilot \in \projdist_\nu\}$. On the event $E$,~\eqref{ass:powerless} implies 
\begin{align*}
  \bbP_{\truedist} (\pilot \notin C_{\nu,\alpha,n} ~|~ E) = \bbE_{\Data_1} (\bbE_{\Data_0} (\phi_{\pilot, \pilot,\nu} (\Data_0) ~|~ \Data_1, E) ~|~ E) \le \alpha.
\end{align*}
On the complement of $E$, i.e., $\pilot \notin \projdist_\nu$, $\calS_0$ contains $(\projdist, \pilot)$. Thus, an analogous argument to that in the proof of Theorem~\ref{thm:honest_exact} can be used.
Combining the two results, we obtain that, for all $\truedist \in \trueclass$,
\begin{align*}
  \bbP_{\truedist} (\projdist_\nu \cap C_{\nu,\alpha, n} = \emptyset) 
   &\leq  \bbP_{\truedist} (\pilot \notin C_{\nu,\alpha, n} ~|~ E) \bbP(E) + \bbP_{\truedist} (\projdist \notin C_{\nu,\alpha, n} ~|~ E^\complement) \bbP(E^\complement)
   \leq  \alpha.   \qedhere
\end{align*}
\end{proof}

As in the construction of the exact robust universal confidence set, one should aim to choose the pilot estimate $\pilot$ as close as possible to $\projdist$. 
In the exact setting, the choice of the pilot estimate does not affect the validity of the resulting set and only affects its size. However, in 
constructing an approximate robust universal set, 
if we can ensure the pilot is accurate, then our approximate validity guarantees improve. Concretely, for some sequence $\kappa_n$ we define:
\begin{align}
\label{eqn:smaller}
\calS(\kappa_{n}) := \{P\in\model : \rho(\truedist \| P) \le \rho(\truedist \| \projdist) + \kappa_{n} \}.
\end{align}
If we can ensure that the pilot estimate is contained in $\calS(\kappa_{n})$ with probability at least $1 - \beta$ for some sequence $\kappa_n$, 
then the constructed confidence set $C_{\nu, \alpha,n}$ will intersect $\calS(\kappa_{n})$ with high probability. For instance, if $\kappa_n \rightarrow 0$ 
as $n$ grows, then rather than simply intersecting the set of approximate projections $\projdist_\nu$, we can now show that $C_{\nu,\alpha,n}$ intersects a shrinking neighborhood
around $\projdist$. More formally we have the following result (we omit its proof since it follows the same arguments as in Theorem~\ref{thm:honest_approx}):

\begin{proposition} \label{cor:small_target}
Let $\calS(\kappa_{n_1})$ be defined as in~\eqref{eqn:smaller}, and suppose that our pilot is accurate, i.e. we can ensure that with probability at least $1 - \beta$, $\pilot \in \calS(\kappa_{n_1})$. 
Suppose further that our approximate relative fit tests are valid, and satisfy the condition in~\eqref{ass:powerless}. Then:
\begin{align*}
  \inf_{\truedist \in \trueclass}  \bbP_{\truedist} \left(\calS(\kappa_{n_1}) \cap C_{\nu,\alpha, n} \ne \emptyset \right) \ge 1 - \alpha - \beta.
\end{align*}
\end{proposition}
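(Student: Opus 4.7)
The plan is to mirror the structure of the proof of Theorem~\ref{thm:honest_approx}, but this time splitting the analysis based on whether the pilot estimate lies in the target neighborhood $\calS(\kappa_{n_1})$ rather than in $\projdist_\nu$. The key insight is that whenever the pilot happens to already lie in $\calS(\kappa_{n_1})$ \emph{and} is not rejected by its own test, it serves as a witness that the intersection $\calS(\kappa_{n_1}) \cap C_{\nu,\alpha,n}$ is non-empty.

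First, I would fix $\truedist \in \trueclass$ and define the good event $F := \{\pilot \in \calS(\kappa_{n_1})\}$, which by assumption on the pilot satisfies $\bbP_{\truedist}(F^\complement) \le \beta$. The crucial observation is the set-theoretic inclusion
\begin{align*}
  \{\calS(\kappa_{n_1}) \cap C_{\nu,\alpha,n} = \emptyset\} \cap F \subseteq \{\pilot \notin C_{\nu,\alpha,n}\},
\end{align*}
since on $F$ the pilot $\pilot$ itself is an element of $\calS(\kappa_{n_1})$. Applying a union bound then yields
\begin{align*}
  \bbP_{\truedist}\!\left(\calS(\kappa_{n_1}) \cap C_{\nu,\alpha,n} = \emptyset\right)
  \le \bbP_{\truedist}\!\left(\pilot \notin C_{\nu,\alpha,n}\right) + \bbP_{\truedist}(F^\complement).
\end{align*}

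Next, I would control the first term on the right-hand side by conditioning on $\Data_1$, so that $\pilot$ becomes a fixed element of $\model$. Since the definition of $C_{\nu,\alpha,n}$ gives $\{\pilot \notin C_{\nu,\alpha,n}\} = \{\phi_{\pilot,\pilot,\nu}(\Data_0) = 1\}$, tower-property followed by the assumption~\eqref{ass:powerless} applied to the fixed distribution $\pilot$ produces
\begin{align*}
  \bbP_{\truedist}(\pilot \notin C_{\nu,\alpha,n})
  = \bbE_{\Data_1}\!\left[\bbE_{\Data_0}\!\left(\phi_{\pilot,\pilot,\nu}(\Data_0) \,\big|\, \Data_1\right)\right]
  \le \bbE_{\Data_1}(\alpha) = \alpha.
\end{align*}
Combining the two bounds gives $\bbP_{\truedist}(\calS(\kappa_{n_1}) \cap C_{\nu,\alpha,n} = \emptyset) \le \alpha + \beta$, which is the desired conclusion after taking the infimum over $\truedist \in \trueclass$.

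There is really no substantive obstacle here: the argument is strictly simpler than that of Theorem~\ref{thm:honest_approx} because the accuracy assumption on $\pilot$ already absorbs the ``bad'' branch on which the pilot is far from the projection, so we never need to invoke the validity of the relative-fit test against a distinct alternative. The only subtlety worth flagging is that the pilot is random, which is why conditioning on $\Data_1$ before invoking~\eqref{ass:powerless} is essential --- that condition is stated for a \emph{fixed} $P \in \model$, and conditioning makes $\pilot$ measurable so the bound applies pointwise.
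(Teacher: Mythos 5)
Your proof is correct and is essentially the argument the paper has in mind: the paper omits the proof, noting only that it ``follows the same arguments as in Theorem~\ref{thm:honest_approx}'', and your adaptation---replacing the event $\{\pilot \in \projdist_\nu\}$ by $\{\pilot \in \calS(\kappa_{n_1})\}$, paying $\beta$ for its complement, and invoking~\eqref{ass:powerless} after conditioning on $\Data_1$---is exactly that adaptation. Your closing remark is also apt: unlike in Theorem~\ref{thm:honest_approx}, the bad branch is handled by the accuracy assumption rather than by the Type~I error control of the test against $(\projdist,\pilot)$.
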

{
The critical shortcoming of a $\nu$-approximate confidence set is that its interpretation is challenging as it does not guarantee coverage of the projection distribution $\widetilde{P}$. In some cases, for example, when the model misspecification is a consequence of data contamination, the statistician might be able to provide a reasonable upper bound on $\rho(P^* \| \widetilde{P})$. As an example when the distribution $P^*$ arises from an $\epsilon$-contamination of a distribution $P \in \mathcal{P}$ then we have the upper bound $\text{TV}(P^*, \widetilde{P}) \leq \epsilon$. In this case one can construct a valid (classical) confidence set by enlarging the $\nu$-approximate confidence set. For a given upper bound $\epsilon$ on $\rho(P^* \| \widetilde{P})$ define:
\begin{align}
\label{eqn:bonafide}
    C_{\epsilon,\alpha,n} := \left\{ P \in \mathcal{P}: \inf_{Q \in C_{\nu,\alpha,n}} \rho(P\| Q) \leq (\nu+1) \epsilon \right\}.
\end{align}
Then we have the following Corollary:
\begin{corollary}
    Suppose that $\rho$ is symmetric and satisfies the triangle inequality, then under the conditions of Theorem~\ref{thm:honest_approx} for $C_{\epsilon,\alpha,n}$ in~\eqref{eqn:bonafide}
\begin{align*}
  \inf_{\truedist \in \trueclass}  \bbP_{\truedist} \left(\widetilde{P} \in C_{\epsilon,\alpha, n} \right) \ge 1 - \alpha.
\end{align*}
\end{corollary}}
In this section, we have shown that inverting exact or approximate tests of relative fit yields robust exact or approximate confidence sets despite model-misspecification. We now turn our attention
to the design and analysis of these tests. 

\section{Designing Tests of Relative Fit}
\label{sec:tests}
In this section,
we design valid exact tests of relative fit in KL and the density power divergences, and design valid approximate tests
for the Hellinger and IPM-based divergences. 

\subsection{Kullback-Leibler  Divergence}
\label{sec:KL}

To design an exact test of relative fit for the KL divergence we make a simple observation
that there is a natural plug-in estimator of the difference in KL divergences. We can rewrite
the difference in KL divergences as:
\begin{align*}
  \KL(\truedist \| P) - \KL(\truedist \| \pilot) = 
\int \log \frac{\widehat{p}_1}{p} \d \truedist
\end{align*}
where $p$ and $\widehat{p}_1$ are the density of $P$ and $\pilot$ with respect to a common dominating measure. When
we obtain samples from $\truedist$ this suggests the following
log split likelihood ratio test: 
\begin{align}\label{eq: def_RIFT}
  \phi_{P} = \I \left[ \frac{1}{n_0} \sum_{i\in \calI_0} T_i (P,\pilot) > t_\alpha (P, \pilot) \right],
  \qquad
  {T_i(P, \pilot) \equiv T(X_i; P, \pilot)
  =  \log \frac{\widehat{p}_1 (X_i)}{p (X_i)}},
\end{align}
where $\calI_0$ is an index set of $\Data_0$ and $t_\alpha (P, \pilot)$ is chosen to ensure validity. This test 
was called the relative information fit test (RIFT) and was used to select between two candidate estimates by \citet{chakravarti_gaussian_2019}. In our paper, 
we invert the same test in order to construct a robust universal confidence set.

When the variance of $T_i(P, \pilot)$ (conditional on $\mathcal{D}_1$) 
is finite and positive, we can derive the asymptotic 
distribution (conditional on $\mathcal{D}_1$) 
of the log split likelihood ratio via the CLT.
Let $\overline{T}_{n_0} (P,\pilot) = \sum_{i\in \calI_0} T_i(P, \pilot) / n_0$.
Conditional on $\Data_1$ and assuming that the variance $\bbV_{\truedist} [T(P_0, P_1)] < \infty$, for any~$(P_0,P_1) \in\model^2$,
\begin{align*}
  \sqrt{n_0} \left( \overline{T}_{n_0} (P,\pilot) - \bbE_{\truedist} T (P, \pilot) \right) \overset{\truedist}{\rightsquigarrow}\Normal \left(0, s_P^2 \right)
  \qquad \text{ as } n_0 \to \infty
\end{align*}
where $s_P^2 \equiv s_P^2 (\Data_1) = \bbE_{\truedist} [T_1^2] - \bbE_{\truedist} T_1^2$ can be estimated by $\hat{s}_P^2 = \frac{1}{n_0} \sum_{i\in\calI_0} (T_i(P, \pilot) - \overline{T}_{n_0})^2$,
and $\rightsquigarrow$ denotes convergence in distribution (conditional on $\mathcal{D}_1$). 

When assessing distributions $P$ that are very similar to the pilot $\pilot$, it might be the case that $s_P^2$ is vanishingly small. Consequently, it is possible that $\widehat{s}_P/s_P$ does
not converge in probability to 1, and the CLT with estimated variance $\widehat{s}_P^2$ need not hold. Following \citet{chakravarti_gaussian_2019} 
we modify each $T_i(P,\pilot)$ by adding a small amount of independent Gaussian noise, i.e. we replace each $T_i(P, \pilot)$ above by $T_i(P, \pilot) + \delta Z_i$ where $Z_1,\ldots,Z_{n_0} \sim N(0,1)$,
for some small positive constant $\delta > 0$. 
{Large values of $\delta$ will lead to slightly larger intervals, while smaller $\delta$ will require larger sample sizes
for the asymptotics to be valid. Optimizing this choice is an open problem, but this modification for small values of $\delta$ (e.g., taking $\delta=0.01$) has no practical effect and simply eases the theoretical analysis.} 
We denote the resulting
statistic by $\overline{T}_{n_0,\delta}(P, \pilot)$ and the corresponding empirical standard deviation by $\widehat{s}_{P,\delta}$. 

Then, we define the KL \emph{Relative Divergence Fit} (\Redi) set as 
\begin{align}\label{eq: ConfSet_Rift}
  \Chat_{\KL\Redi, n} \equiv \Chat_{\alpha, n} (\Data) = \left\{P\in \model : \overline{T}_{n_0, \delta}(P, \pilot) \le  \frac{z_{\alpha}  \hat{s}_{P,\delta}}{\sqrt{n_0}} \right\}
\end{align}
where $z_\alpha$ is a $1-\alpha$ quantile of standard normal distribution. The following result provides asymptotic and non-asymptotic guarantees for the set $\Chat_{\KL\Redi, n}$.
\begin{theorem}
\label{thm:Rift_validity}
Suppose that $\mathcal{Q}$ is such that for some $0 < \xi \leq 1$ the $2+\xi$ moments $\bbE_{\truedist} |T(X; P_0, P_1) - \bbE_{\truedist}T(X; P_0, P_1)|^{2+\xi} \leq M < \infty$ are finite, for any~$(P_0,P_1) \in\model^2$, then 
  \begin{align}\label{eq:Rift_coverage_asymp}
     \inf_{\truedist \in \trueclass} \bbP_{\truedist} (\projdist \in \Chat_{\KL\Redi, n}) \geq 1 - \alpha - C n^{-\xi/2},
  \end{align}
where $C < C' (1 + M) /\delta^{(2+\xi)}$ for a universal constant $C'$.
\end{theorem}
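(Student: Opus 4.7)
The plan is to condition on the pilot-sample $\Data_1$ (so that $\pilot \in \model$ becomes a fixed non-random quantity) and reduce the coverage question to a one-sided Berry--Esseen bound for a Studentized sample mean. Fix any $\truedist \in \trueclass$ and set $W_i := T_i(\projdist,\pilot) + \delta Z_i$ for $i \in \calI_0$. Conditional on $\Data_1$, the $W_i$ are i.i.d.\ (jointly under $\truedist$ and the independent Gaussian noise), and a direct computation gives
\begin{align*}
  \mu_\projdist \;:=\; \bbE_{\truedist}[W_1 \mid \Data_1] \;=\; \KL(\truedist \| \projdist) - \KL(\truedist \| \pilot) \;\le\; 0,
\end{align*}
by the defining property of the KL projection over $\model$ (since $\pilot \in \model$). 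The variance $\sigma_W^2 := \bbV_{\truedist}[W_1 \mid \Data_1]$ satisfies $\sigma_W^2 \ge \delta^2$ because of the added independent noise, and the centered $(2+\xi)$-th moment is controlled by the triangle inequality,
\begin{align*}
  \bbE_{\truedist}|W_1 - \mu_\projdist|^{2+\xi} \;\le\; 2^{1+\xi}\bigl(M_{\projdist,\pilot} + \delta^{2+\xi}\,\bbE|Z_1|^{2+\xi}\bigr),
\end{align*}
which is bounded uniformly over $\pilot \in \model$ by the hypothesis on $\trueclass$.

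Next I would rewrite the non-coverage event in Studentized form. Defining
\begin{align*}
  T^{\star} \;:=\; \frac{\sqrt{n_0}\,(\overline{T}_{n_0,\delta}(\projdist,\pilot) - \mu_\projdist)}{\hat{s}_{\projdist,\delta}},
\end{align*}
the event $\{\projdist \notin \Chat_{\KL\Redi,n}\}$ is exactly $\{T^{\star} > z_\alpha - \sqrt{n_0}\,\mu_\projdist/\hat{s}_{\projdist,\delta}\}$, and since $\mu_\projdist \le 0$ this event is contained in $\{T^{\star} > z_\alpha\}$. Thus it suffices to control the conditional tail $\bbP_{\truedist}(T^{\star} > z_\alpha \mid \Data_1)$ and integrate out $\Data_1$.

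The final step applies a Berry--Esseen bound for Studentized sample means under only $(2+\xi)$-moments (of the Bentkus--G\"otze/Bloznelis type) to obtain
\begin{align*}
  \sup_{z \in \bbR}\bigl|\bbP_{\truedist}(T^{\star} \le z \mid \Data_1) - \Phi(z)\bigr| \;\le\; C'\,\frac{\bbE_{\truedist}|W_1 - \mu_\projdist|^{2+\xi}}{\sigma_W^{2+\xi}}\,n_0^{-\xi/2}.
\end{align*}
Substituting the uniform $(2+\xi)$-moment bound together with $\sigma_W^{2+\xi} \ge \delta^{2+\xi}$ yields a conditional (hence unconditional after tower property) error of the claimed form $C\,n^{-\xi/2}$ with $C \le C'(1 + \sup_{(P_0,P_1)\in\model^2} M_{P_0,P_1})/\delta^{2+\xi}$, after absorbing the $n_0 = n/2$ factor into $C'$. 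The main obstacle is the Studentized Berry--Esseen input itself: the classical form requires a finite third moment, so one has to invoke a specialized $(2+\xi)$-moment statement and verify that its constants factor exactly as the stated inequality demands. The noise jitter by $\delta Z_i$ is precisely what makes this tractable, because without it $\sigma_W$ can vanish when $P_0$ is near $\pilot$, destroying the Studentized Berry--Esseen scaling.
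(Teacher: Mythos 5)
Your proposal is correct and follows essentially the same route as the paper's proof: condition on $\Data_1$, bound the $(2+\xi)$-th moment of the $\delta$-corrupted statistic via Minkowski's (triangle) inequality, use the noise to keep the variance above $\delta^2$, apply the Bentkus--G\"otze/Petrov Berry--Esseen bound for the Studentized mean, and integrate out $\Data_1$ by the tower property. The only difference is that you make explicit the step the paper leaves implicit, namely that $\bbE_{\truedist}[T(\projdist,\pilot)\mid\Data_1]=\KL(\truedist\|\projdist)-\KL(\truedist\|\pilot)\le 0$ since $\pilot\in\model$, which is what lets the non-coverage event be absorbed into the tail event $\{T^{\star}>z_\alpha\}$.
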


We give a formal proof in Supplementary Material~\ref{sec:proof_section_test}. The claim follows as a consequence of the Berry-Esseen bound for the studentized statistic~\citep{bentkus_berry-esseen_1996,petrov_sums_1975}. Some care is required 
to handle the degeneracy (discussed above) {and the randomness in the pilot estimate $\pilot$.}

We can now revisit the failures of universal inference discussed in Section~\ref{sec:fail}. Recall that Example~\ref{ex: instability_lr} illustrates the instability of the KL projection because likelihood ratios may not be bounded. 
The KL \Redi set does not resolve this weakness since the KL \Redi set uses the same split likelihood ratio statistic as for the universal confidence set \citep{wasserman_universal_2020} and its $2 + \xi$ 
moment is not uniformly bounded in Example~\ref{ex: instability_lr}. However, the KL \Redi set does resolve the failure highlighted in Example~\ref{ex: sLRT_fail}.

\begin{corollary}
\label{prop:RIFT_sucess}
Assume the same model as in 
Example~\ref{ex: sLRT_fail}. Suppose we take the pilot estimator to 
be the MLE. The KL \Redi set~\eqref{eq: ConfSet_Rift}, with an equal sized split into $\Data_0$ and $\Data_1$, covers the KL projection $\projdist$ at the nominal level asymptotically.
\end{corollary}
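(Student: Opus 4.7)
The plan is to reduce the claim to a direct application of Theorem~\ref{thm:Rift_validity}, after checking two things: that the KL projection in Example~\ref{ex: sLRT_fail} really is $\Bern(3/4)$, and that the uniform $(2+\xi)$-moment condition on the log-likelihood-ratio statistic $T$ holds in this setting.

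First, I would identify the projection. A short computation gives
\begin{align*}
\KL(\Bern(0.5+\epsilon_n)\|\Bern(1/4)) - \KL(\Bern(0.5+\epsilon_n)\|\Bern(3/4))
= 2\epsilon_n \log 3,
\end{align*}
which is strictly positive for every $\epsilon_n \in (0, c/n]$, so the KL projection onto $\model = \{\Bern(1/4), \Bern(3/4)\}$ is the unique distribution $\projdist = \Bern(3/4)$, matching the statement in the example.

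Second, I would verify the moment condition. Because $\model$ contains only two distributions with parameters in $\{1/4, 3/4\}$ and the sample space is $\{0,1\}$, the density ratio $\widehat{p}_1(x)/p(x)$ takes values in $\{1/3, 1, 3\}$ regardless of the realization of $\pilot$. Hence $T(X; P_0, P_1) \in \{-\log 3, 0, \log 3\}$ almost surely, so for every $\xi \in (0,1]$ the centered moment $M_{P_0, P_1} = \bbE_{\truedist}|T - \bbE_{\truedist} T|^{2+\xi}$ is uniformly bounded by $(2\log 3)^{2+\xi}$, uniformly over $(P_0, P_1) \in \model^2$ and over every sampling distribution of the form $\Bern(0.5 + \epsilon_n)$. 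In particular the supremum $\sup_{(P_0,P_1) \in \model^2} M_{P_0,P_1}$ that appears in the constant of Theorem~\ref{thm:Rift_validity} is bounded by an absolute constant independent of $n$.

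Third, I would invoke Theorem~\ref{thm:Rift_validity} with this bounded $(2+\xi)$ moment and the fixed variance-regularization constant $\delta$, which yields
\begin{align*}
\bbP_{\truedist}(\projdist \in \Chat_{\KL\Redi, n}) \;\geq\; 1 - \alpha - C n^{-\xi/2}
\end{align*}
for a constant $C$ depending only on $\log 3$, $\xi$, and $\delta$. Letting $n \to \infty$ drives the remainder to zero and gives asymptotic coverage at level $1-\alpha$, which is exactly the claim of the corollary.

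The main obstacle, such as it is, is only bookkeeping: one must notice that the argument of Theorem~\ref{thm:Rift_validity} applies after conditioning on $\Data_1$ for any realization of the pilot $\pilot$, and here the deterministic boundedness of $T$ on the atomic sample space makes the moment condition uniform in $\pilot$, $\truedist$, and $(P_0, P_1)$. No delicate analysis of the MLE or of $\epsilon_n \to 0$ is needed, in sharp contrast with the failure of the plain split LRT established in Example~\ref{ex: sLRT_fail}.
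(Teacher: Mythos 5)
Your proposal is correct and follows essentially the same route as the paper, which simply observes that all relevant log-likelihood ratios in this two-point model are uniformly bounded (your explicit values $T \in \{-\log 3, 0, \log 3\}$), so the $(2+\xi)$-moment condition of Theorem~\ref{thm:Rift_validity} holds and the coverage bound $1-\alpha-Cn^{-\xi/2}$ gives asymptotic nominal coverage. Your added verification that the KL projection is indeed $\Bern(3/4)$ via the identity $\KL(\Bern(0.5+\epsilon_n)\|\Bern(1/4)) - \KL(\Bern(0.5+\epsilon_n)\|\Bern(3/4)) = 2\epsilon_n\log 3 > 0$ is a correct detail the paper states without computation.
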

This result follows directly from Theorem~\ref{thm:Rift_validity}, since in this example all of the relevant log likelihood ratios are uniformly upper bounded. 
It is worth noting that both the standard universal set, and the set $\Chat_{\KL\Redi, n}$ are based on essentially the same split likelihood ratio statistic, 
and it is perhaps surprising that the standard universal set fails but $\Chat_{\KL\Redi, n}$ succeeds in guaranteeing coverage. 
Despite being based on the same statistic, the two sets use very different thresholds. It is easy to see that one can rewrite the split
LRT confidence set in universal inference \citep{wasserman_universal_2020} as:
\begin{align*}
  \Chat_{sLRT}= \left\{P\in \model : \overline{T}_{n_0} (P,\pilot) \le \frac{\log (1/\alpha)}{n_0}  \right\}.
\end{align*}
The threshold used in (non-robust) universal inference decays at the fast rate of order $O(1/n_0)$ compared to that of the robust universal confidence set $\Chat_{\KL\Redi, n}$ 
whose threshold decays at the rate $O(1/\sqrt{n_0})$. When the model is misspecified the (non-robust) universal set shrinks too rapidly leading to the failure highlighted in Example~\ref{ex: sLRT_fail}.

The confidence set $\Chat_{\KL\Redi, n}$ is constructed by approximating the distribution of the test statistic in~\eqref{eq: def_RIFT}. 
When likelihood ratios are uniformly upper bounded it is straightforward to construct finite-sample valid sets via an exponential tail bound. 
For example, the finite-sample exact robust universal confidence set based on the Hoeffding bound is:
\begin{align}
\label{eqn:hoef}
  \Chat_{\textsc{HF},B,n} = \left\{P\in \model : \overline{T}_{n_0} (P,\pilot) \le B\sqrt{\frac{ \log \left(1 / \alpha\right)}{2n_0} } \right\},
\end{align}
where $B$ is such that $|T_i (P_0, P_1) - \mathbb{E}_{\truedist} T(P_0,P_1)| \le B$ for all $(P_0,P_1)\in\model^2$. In this case we assume that the 
upper bound $B$ is known. 

{
One practical benefit of the set $\Chat_{\textsc{HF},B,n}$ over the set $\Chat_{\KL\Redi, n}$ is that the former set is based on a test whose threshold is independent of the distribution $P$ under consideration. This fact has an important implication for hypothesis testing.
Concretely, given two collections $\mathcal{P}_0$ and $\mathcal{P}_1$ of distributions, and samples $X_1,\ldots,X_n \sim P^*$ suppose that we wish to test the composite null hypothesis:
\begin{align}
\label{eqn:null}
    H_0: \inf_{P \in \mathcal{P}_0} \KL(P^* \| P) \leq \inf_{Q \in \mathcal{P}_1} \KL(P^* \| Q).
\end{align}
This hypothesis generalizes a goodness-of-fit test to settings with model misspecification, the primary focus of our paper.
Consider the following procedure: 
\begin{enumerate}
    \item Let $\widehat{P}_1 \in \mathcal{P}_1$ be \emph{any} estimate constructed using $\mathcal{D}_1$.
    \item Let $\widehat{P}_0 \in \mathcal{P}_0$ denote the MLE (over $\mathcal{P}_0$) computed on $\mathcal{D}_0$.
    \item Reject the null if:
    \begin{align*}
        \overline{T}_{n_0}(\widehat{P}_0, \widehat{P}_1) > B \sqrt{\frac{ \log(1/\alpha)}{2n_0}}. 
    \end{align*}
\end{enumerate}
The following proposition shows that this test controls the Type I error at $\alpha$. 
\begin{proposition}
    Suppose that all likelihood ratios between distributions in $\mathcal{P}_0$ and $\mathcal{P}_1$ are uniformly upper bounded by $B$. Then the test described above controls the Type I error for testing $H_0$ in~\eqref{eqn:null} at $\alpha.$
\end{proposition}
This test serves as an analogue of the split-LRT of \citet{wasserman_universal_2020}, with the key distinction that it allows for model misspecification. Importantly, this test inherits one of the main benefits of the split-LRT --- if we can maximize the likelihood under $\mathcal{P}_0$ then it is possible to construct a sensible test for the null hypothesis~\eqref{eqn:null}.
}

One can generalize this construction in various ways. 
When the statistic is assumed to only have finite
variance one can use Chebyshev's inequality to construct a finite-sample valid set. 
When in addition to boundedness the statistic might have small variance
one can use empirical Bernstein-type inequalities to construct finite-sample valid confidence sets. 
We explore these further and compare the empirical performance of  $\Chat_{\KL\Redi, n}$ and these finite-sample valid sets in Supplementary Material~\ref{sec:finite_set}. 

\subsection{Density Power (DP) Divergences}
\label{sec:dpd}

We can construct an exact test of relative fit for the family
of DP divergences following the same strategy as in KL case. 
Let $  \overline{T}_{n_0}(P, \pilot)$ be a plug-in estimator of the difference in DP divergences:
\begin{align}
\label{eqn:ha}
  \overline{T}_{n_0}(P, \pilot) 
    &= \int \left\{ p^{1+\beta} - \widehat{p}_1^{1+\beta}  \right\} \d \lambda
     - \left( 1+\frac{1}{\beta} \right) \frac{1}{n_0} \sum_{i\in \calI_0} \left[  p^{\beta} - \widehat{p}_1^{\beta} \right] (X_i).
\end{align}
The split statistics $T_i(P, \pilot)$ encode the difference in average $\beta$-powered densities (penalized with $L_{1+\beta}$ norm) rather than the log-likelihood ratio evaluated on the sample $\Data_0$ when $\beta > 0$.
Then, conditional on $\mathcal{D}_1$, 
$\bbE_{\truedist} T(P,\pilot) = 
\DP_\beta (\truedist \| P) -  \DP_\beta (\truedist \| \pilot)$. 
We define the DP \Redi set $\Chat_{\DP\Redi,n}$ 
exactly as in~\eqref{eq: ConfSet_Rift},
and observe that the analogue of 
Theorem~\ref{thm:Rift_validity} holds (with an identical proof) for  $\Chat_{\DP\Redi,n}$.

Recall that KL \Redi set was unable to resolve the instability problem
in Example~\ref{ex: instability_lr}. This is because the likelihood ratios in this model can blow up. On the other hand the DP set relies on the statistics in~\eqref{eqn:ha}, which are bounded for any $\beta > 0$, provided the relevant
densities are well-defined.
Formally, we have the following result:
\begin{proposition}
\label{prop:DP_ex1}
   Suppose we have the same model as in 
   Example~\ref{ex: instability_lr}.
For sufficiently large $n$, for any pilot estimator $\pilot$, the DP \Redi set~$\Chat_{\DP\Redi,B,n}$ defined as in~\eqref{eqn:hoef} with $B=1 + 1/\beta$, with an equal sized split into $\Data_0$ and $\Data_1$, covers the DP projection $\projdist$ at the nominal level.
\end{proposition}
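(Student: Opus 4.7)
The plan is to apply Hoeffding's inequality to the sample mean $\overline{T}_{n_0}(\widetilde{P}, \pilot)$, conditional on $\Data_1$, exploiting two facts. First, the summands $T_i(P, \pilot)$ from~\eqref{eqn:ha} are uniformly bounded with range at most $B = 1 + 1/\beta$ over $(P,\pilot) \in \model^2$. Second, by the defining property of the projection, $\bbE_{\truedist}[T_i(\widetilde{P}, \pilot) \mid \pilot] = \DP_\beta(\truedist \| \widetilde{P}) - \DP_\beta(\truedist \| \pilot) \le 0$ for any $\pilot \in \model$. Note that the proof does not actually require us to identify the projection explicitly; only this minimizing property matters.

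First I would verify the range bound. Writing $T_i(P, \pilot) = A(P, \pilot) - (1 + 1/\beta)\bigl[p^\beta(X_i) - \widehat{p}_1^\beta(X_i)\bigr]$, where the integral $A(P, \pilot) = \int(p^{1+\beta} - \widehat{p}_1^{1+\beta})\d\lambda$ is deterministic conditional on $\pilot$, only the second term depends on $X_i$. Because the Bernoulli pmfs of $\Bern(0)$ and $\Bern(1/2)$ lie in $[0,1]$, so do $p^\beta$ and $\widehat{p}_1^\beta$; a direct enumeration over the four pairs $(P, \pilot) \in \model^2$ shows that $x \mapsto [p^\beta - \widehat{p}_1^\beta](x)$ on $\{0,1\}$ ranges over an interval of length at most $1$. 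Hence $T_i$ has range at most $1 + 1/\beta = B$.

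Conditional on $\Data_1$ (which determines $\pilot$), the summands $T_i(\widetilde{P}, \pilot)$ for $i \in \calI_0$ are i.i.d.\ with range $\le B$ and nonpositive mean. Hoeffding's inequality then gives $\bbP_{\truedist}\bigl(\overline{T}_{n_0}(\widetilde{P}, \pilot) > B\sqrt{\log(1/\alpha)/(2n_0)} \,\bigm|\, \Data_1\bigr) \le \alpha$. Marginalizing over $\Data_1$ and unfolding the definition of $\Chat_{\DP\Redi, B, n}$ in~\eqref{eqn:hoef} yields $\bbP_{\truedist}(\widetilde{P} \in \Chat_{\DP\Redi, B, n}) \ge 1 - \alpha$, as required.

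The main (though minor) obstacle is really the range computation. It relies crucially on the boundedness of Bernoulli densities by $1$, which forces $p^\beta \le 1$ as well; this is precisely what distinguishes the DP-based Hoeffding set from its KL counterpart, since $\log(\widehat{p}_1/p)$ diverges when $p=0$ and so the analogous KL statistic has no finite range bound in Example~\ref{ex: instability_lr}. This same feature is why the proposition is stated specifically for the Bernoulli model of Example~\ref{ex: instability_lr} rather than as a general result, and why the ``sufficiently large $n$'' qualifier is harmless---it is needed only to ensure that the $n$-dependent $\epsilon_n$ is small enough that the DP projection in this example is uniquely well-defined.
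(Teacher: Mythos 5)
Your proof is correct and follows essentially the same route as the paper's: both establish that the DP split statistic is uniformly bounded with range $1+1/\beta$ over the Bernoulli model and then apply Hoeffding's inequality conditional on $\Data_1$. The only difference is that the paper explicitly computes the DP divergences to identify $\projdist = \Bern(0)$ for large $n$ (which is what the surrounding discussion of stability relies on), whereas you correctly observe that the coverage claim itself needs only the minimizing property $\bbE_{\truedist}T(\projdist,\pilot)\le 0$.
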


A formal proof can be found in Supplementary Material~\ref{sec:proof_section_test}. The key observation is that the DP projection is $\Bern(0)$ for a sufficiently large sample size for any fixed $\beta > 0$. The DP projection in this example is more stable than the KL projection $\Bern(1/2)$, considering that $\epsilon_n$ is much closer to $0$ than $1/2$.
Consequently, we show that the DP \Redi set will cover the target of inference $\Bern(0)$ with high probability. We emphasize that the MLE is also $\Bern(0)$ with high probability, yet both universal split LRT and KL \Redi set based on the MLE fail to cover the KL projection due to the instability of the population projection distribution.

\subsection{Hellinger Distance}
\label{sec:hellinger}

The Hellinger distance (or the difference in Hellinger distances) does not lend itself to a natural plug-in estimator. The usual method of estimating the Hellinger distance proceeds instead via some type of non-parametric density estimation, which in turn 
requires additional smoothness assumptions. Since our goal in this paper is to design assumption-light methods, we instead relax the target of inference. This in turn opens the door for designing approximate tests of relative fit.

Our strategy will be to modify the $\rho$-estimator\footnote{The name ``$\rho$-estimator'' comes from the standard symbol used for the Hellinger affinity.} \citep{baraud_estimator_2011,baraud_new_2017,baraud_rho-estimators_2018} 
which is a density estimator tailored to the Hellinger loss.
Define the \emph{split $\rho$-test statistic}
 \begin{align*}
  \overline{T}_{n_0} (P, \pilot) := \Delta (P, \pilot) + \frac{1}{n_0} \sum_{i\in\calI_0} \psi \left( \sqrt{\frac{\widehat{p}_1}{p}} (X_i) \right),
\end{align*}
where $\Delta (P_0, \pilot) =  
\left[\HEL^2(P_0, \overline{P}) - \HEL^2(\pilot, \overline{P}) \right] / \sqrt{2}$, $\overline{P} = (P_0 + \pilot) / 2$ and 
$\psi: [0,\infty] \mapsto [-1,1]$ is a non-decreasing Lipschitz 
function satisfying $\psi (x) = - \psi (1/x)$. 
The choice of $\psi$ we adopt throughout this paper, is to 
take $\psi(u) = (u-1)/\sqrt{1+u^2}$ which
comes from work on the $\rho$-estimator \citep{baraud_estimator_2011,baraud_new_2017,baraud_rho-estimators_2018}.
The function
$\psi$ is a bounded transformation of the likelihood ratio, and due to this boundedness the split $\rho$-test statistic is tightly concentrated around its expectation.
The following proposition, which follows directly from Proposition 11 of \citet{baraud_tests_2020}, characterizes the expectation of the split $\rho$-statistic. 
\begin{proposition}
\label{prop:hel_ub}
For any $P^*, P_0, P_1$, 
\begin{align*}
  \left(2 + \sqrt{2}\right) \bbE_{\truedist} \overline{T}_{n_0} (P_0,P_1) \le \left(3 + 2\sqrt{2}\right) \HEL^2 (\truedist, P_0) - \HEL^2 (\truedist, P_1).
\end{align*}  
\end{proposition}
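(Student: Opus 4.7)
The plan is to reduce the claim, by conditioning on $\Data_1$, to a pointwise (in $\pilot$) inequality about two deterministic distributions and then invoke Proposition 11 of \citet{baraud_tests_2020} verbatim. Once we condition on $\Data_1$, the pilot $\pilot$ is a fixed element of $\model$, so by linearity of expectation
\begin{align*}
\bbE_{\truedist} \overline{T}_{n_0}(P_0, P_1) = \Delta(P_0, P_1) + \bbE_{\truedist}\left[\psi\left(\sqrt{\tfrac{p_1}{p_0}}(X)\right)\right],
\end{align*}
so the empirical average has been replaced by its population counterpart, which is precisely the population $\rho$-test statistic studied in the $\rho$-estimator literature.

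First I would verify that the chosen $\psi(u) = (u-1)/\sqrt{1+u^2}$ meets the hypotheses required by Baraud-Birgé: it takes values in $[-1,1]$, is non-decreasing and Lipschitz on $[0,\infty]$, and satisfies the anti-symmetry $\psi(u) = -\psi(1/u)$. These are all routine to check and place our split $\rho$-statistic inside the framework of the cited reference. Second, I would apply Proposition 11 with their two candidate distributions identified with $(P_0, P_1)$ and their data-generating distribution identified with $\truedist$, so that the population part $\Delta(P_0, P_1)$ and the $\psi$-expectation together form exactly the quantity they bound; reading off the conclusion yields
\begin{align*}
(2 + \sqrt{2})\, \bbE_{\truedist}\overline{T}_{n_0}(P_0, P_1) \le (3 + 2\sqrt{2})\, \HEL^2(\truedist, P_0) - \HEL^2(\truedist, P_1),
\end{align*}
as required.

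The main point that warrants care, rather than a genuine obstacle, is tracking the origin of the constants $2+\sqrt{2}$ and $3+2\sqrt{2}$. They arise from a triangle-type decomposition of the squared Hellinger distances through the midpoint $\overline{P} = (P_0 + P_1)/2$, combined with quantitative control on $\psi$ evaluated at $\sqrt{p_1/p_0}$ under $\truedist$; this is exactly the computation carried out inside the proof of the cited proposition, so no new algebra is required on our end. Because the resulting inequality is pointwise in $P_1$, taking an outer expectation over $\Data_1$ is harmless, and the bound transfers directly to the split-sample setting used in Section~\ref{sec:rui}.
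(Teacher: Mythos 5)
Your proposal matches the paper's treatment: the paper gives no standalone proof and simply states that the claim follows directly from Proposition 11 of \citet{baraud_tests_2020}, which is exactly the reduction you describe (linearity of expectation turns the mean of the split statistic into $\Delta(P_0,P_1)$ plus the population $\psi$-expectation, i.e.\ the population $\rho$-test statistic to which the cited bound applies verbatim). The only cosmetic difference is that conditioning on $\Data_1$ is unnecessary, since the proposition is stated for arbitrary fixed $P_0,P_1$ rather than for the random pilot, but this does not affect the argument.
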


This proposition ensures that $\bbE_{\truedist} \overline{T}_{n_0}(P_0, P_1)$ is negative for any $\truedist \in \trueclass$ when the null hypothesis $H_0 : (3+2\sqrt{2}) \HEL^2 (\truedist, P_0) \le \HEL^2 (\truedist, P_1)$ is true. 
Thus $\overline{T}_{n_0}(P_0, \pilot)$ could be a useful statistic for designing an approximate test of relative fit in the Hellinger distance with $\nu = \sqrt{3+2\sqrt{2}}.$

We define the Hellinger Relative Distance fit (\HEL \Redi) set $\Chat_{\HEL\Redi,n}$ exactly analogous to the 
KL \Redi set~\eqref{eq: ConfSet_Rift} (obtained from a $\delta$-corrupted version of the statistics $\overline{T}_{n_0}(P, \pilot)$).  
The following result follows by combining Theorems~\ref{thm:honest_approx} and \ref{thm:Rift_validity}, and noticing that the split statistic is upper bounded.

\begin{corollary}
\label{thm:HELRedi_validity}
Let $\nu = \sqrt{3 + 2\sqrt{2}}$. For any $\mathcal{Q}$,  
  \begin{align}
      \inf_{\truedist \in \trueclass} \bbP_{\truedist} (\projdist_\nu \cap \Chat_{\HEL\Redi, n} \ne \emptyset) \ge 1 - \alpha - C/\sqrt{n},
  \end{align}
   where $C < C'/\delta^3$ for a universal constant $C'$.
\end{corollary}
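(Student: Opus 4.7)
The plan is to apply Theorem~\ref{thm:honest_approx} with $\nu = \sqrt{3+2\sqrt{2}}$, which requires verifying two properties for the family of tests $\phi_{P_0, P_1, \nu}$ that reject precisely when $\overline{T}_{n_0,\delta}(P_0, P_1) > z_\alpha \hat{s}_{P_0,\delta}/\sqrt{n_0}$: first, that these constitute valid approximate tests of relative fit up to a Berry-Esseen error of order $1/\sqrt{n}$; and second, that they satisfy the mild "powerless-under-equality" condition~\eqref{ass:powerless}. Once both are established, the conclusion follows by the same two-case argument as in the proof of Theorem~\ref{thm:honest_approx}, with the nominal $\alpha$ replaced by $\alpha + C/\sqrt{n}$.

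For the approximate relative-fit validity, I would start from Proposition~\ref{prop:hel_ub}: under the null $\nu^2 \HEL^2(\truedist \| P_0) \le \HEL^2(\truedist \| P_1)$ (so that $(3+2\sqrt{2})\HEL^2(\truedist\|P_0) \le \HEL^2(\truedist\|P_1)$), the proposition immediately yields $\bbE_{\truedist}\overline{T}_{n_0}(P_0, P_1) \le 0$. From here I would run the same studentized Berry-Esseen argument used to prove Theorem~\ref{thm:Rift_validity}, working conditionally on $\Data_1$ so that $\pilot$ is fixed. The key quantitative input is the $(2+\xi)$-moment bound on the summands $T_i$: since $|\psi|\le 1$ by construction and $|\Delta(P_0,P_1)| = \tfrac{1}{\sqrt{2}}|\HEL^2(P_0,\overline{P}) - \HEL^2(P_1,\overline{P})|$ is bounded by an absolute constant (squared Hellinger distances lie in $[0,1]$), the centered summands are uniformly bounded by a universal constant. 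Thus $\sup_{(P_0,P_1)} M_{P_0,P_1} < \infty$ for $\xi = 1$, and the constant from Theorem~\ref{thm:Rift_validity} specializes to $C \le C'/\delta^{3}$.

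For condition~\eqref{ass:powerless}, take $P_0 = P_1 = P$. Then $\Delta(P,P) = 0$ because the two Hellinger squared distances in its definition coincide, and $\psi(\sqrt{p/p}) = \psi(1) = 0$ by the odd-symmetry property $\psi(x) = -\psi(1/x)$ at $x=1$. Hence the uncorrupted statistic $\overline{T}_{n_0}(P,P)$ vanishes identically, the $\delta$-corrupted version equals $\delta\, \bar{Z}_{n_0}$, and $\hat{s}_{P,\delta}$ concentrates around $\delta$. The same Berry-Esseen machinery then bounds the rejection probability by $\alpha + C/\sqrt{n}$ with the same $C$.

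The main obstacle, and the reason the $\delta$-corruption is indispensable, is controlling the studentized statistic uniformly as $P$ ranges over $\model$, including distributions arbitrarily close to $\pilot$ where the uncorrupted variance $s_P^2$ may vanish and Berry-Esseen would otherwise degrade. The additive $\delta Z_i$ keeps $s_{P,\delta}^2 \ge \delta^2$, so studentization is stable and the $1/\delta^{(2+\xi)}$ factor in Theorem~\ref{thm:Rift_validity} becomes explicit. Putting the two verified conditions into Theorem~\ref{thm:honest_approx} yields $\bbP_{\truedist}(\projdist_\nu \cap \Chat_{\HEL\Redi,n} = \emptyset) \le \alpha + C/\sqrt{n}$ uniformly in $\truedist \in \trueclass$, which is the claim.
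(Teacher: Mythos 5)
Your proposal is correct and follows essentially the same route as the paper, which obtains the corollary by combining Theorem~\ref{thm:honest_approx} with the Berry--Esseen validity argument of Theorem~\ref{thm:Rift_validity}, using Proposition~\ref{prop:hel_ub} to certify the approximate relative-fit null with $\nu^2 = 3+2\sqrt{2}$ and the uniform boundedness of $\Delta$ and $\psi$ to get $\xi=1$ and the constant $C'/\delta^3$. Your explicit verification of condition~\eqref{ass:powerless} via $\Delta(P,P)=0$ and $\psi(1)=0$ is exactly the check the paper leaves implicit.
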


We are now in a position to revisit  Example~\ref{ex: instability_lr}. In Proposition~\ref{prop:DP_ex1}, we showed that changing the target of inference to DP projection could address the failure of universal inference. 
In a similar vein, targeting the Hellinger projection resolves the failure, but interpreting the resulting guarantee requires some nuance as \HEL\Redi set may not cover the exact Hellinger projection, and is only guaranteed to cover 
a $\nu$-approximate projection. 
In the case of Example~\ref{ex: instability_lr}, it will turn out for sufficiently small values $\epsilon$ the $\nu$-approximate Hellinger projection set is a singleton (and equal to the exact Hellinger projection). As highlighted earlier, when 
the amount of model-misspecification is not too large the distinction between the $\nu$-approximate projection set and the exact projection can be small.

\begin{proposition}
\label{prop:HEL_ex1}
    Assume the same model as in Example~\ref{ex: instability_lr}. Suppose we take the pilot estimator to be the Minimum Hellinger Distance estimator \citep[MHDE,][]{beran_minimum_1977},
  $\pilot = \argmin_{P \in \model} \HEL (\bbP_{n_1} \| P)$.
For sufficiently large $n (> 20)$, the Hellinger \Redi set~$\Chat_{\HEL\Redi,n}$, with an equal sized split into $\Data_0$ and $\Data_1$, covers the  Hellinger projection $\projdist \equiv \Bern(0)$ at the nominal level asymptotically.
\end{proposition}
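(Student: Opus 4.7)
The plan is to apply Corollary~\ref{thm:HELRedi_validity} and then argue that, for $\epsilon_n$ small enough, the $\nu$-approximate Hellinger projection set $\projdist_\nu$ collapses to the singleton $\{\Bern(0)\}$. This upgrades the generic approximate-coverage guarantee into exact coverage of the Hellinger projection.

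First I would identify the Hellinger projection. A direct calculation on the two-point sample space gives
\begin{align*}
\HEL^2(\truedist,\Bern(0)) &= 1 - \sqrt{1-\epsilon_n},\\
\HEL^2(\truedist,\Bern(1/2)) &= 1 - \tfrac{1}{\sqrt{2}}\bigl(\sqrt{\epsilon_n} + \sqrt{1-\epsilon_n}\bigr).
\end{align*}
The first quantity is of order $\epsilon_n/2$ for small $\epsilon_n$, whereas the second is bounded below by a positive constant approaching $1 - 1/\sqrt{2} \approx 0.293$ as $\epsilon_n \to 0$. Hence $\projdist = \Bern(0)$.

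Next, with $\nu^2 = 3 + 2\sqrt{2}$, the only way $\projdist_\nu$ could properly contain $\projdist$ is if $\Bern(1/2) \in \projdist_\nu$, which amounts to
\begin{align*}
1 - \tfrac{1}{\sqrt{2}}\bigl(\sqrt{\epsilon_n} + \sqrt{1-\epsilon_n}\bigr) \le (3+2\sqrt{2})\bigl(1 - \sqrt{1-\epsilon_n}\bigr).
\end{align*}
At $\epsilon_n = 0$ the LHS is $1 - 1/\sqrt{2}$ while the RHS is $0$, and both sides are continuous in $\epsilon_n$, so the inequality fails whenever $\epsilon_n$ is below a fixed cutoff $c > 0$. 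A short numerical verification shows that $c$ comfortably exceeds $1/20$. Since $\epsilon_n < (1-\alpha)/n < 1/20$ whenever $n > 20$, we conclude that $\projdist_\nu = \{\Bern(0)\}$.

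Finally, Corollary~\ref{thm:HELRedi_validity} applied with $\projdist_\nu = \{\Bern(0)\}$ yields
\begin{align*}
\bbP_{\truedist}\bigl(\Bern(0) \in \Chat_{\HEL\Redi,n}\bigr) = \bbP_{\truedist}\bigl(\projdist_\nu \cap \Chat_{\HEL\Redi,n} \neq \emptyset\bigr) \ge 1 - \alpha - C/\sqrt{n},
\end{align*}
which delivers the nominal $(1-\alpha)$ coverage as $n \to \infty$. The main obstacle is the second step: the Hellinger separation between $\Bern(\epsilon_n)$ and $\Bern(1/2)$ must be large enough, relative to the separation from $\Bern(0)$, to absorb the multiplicative slack $3 + 2\sqrt{2}$, and the margin is tight — the cutoff $c$ is only slightly above $1/20$, explaining the mild sample-size hypothesis $n > 20$. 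Notably, the MHDE plays no direct role in the coverage argument: the singleton structure of $\projdist_\nu$ forces coverage via Corollary~\ref{thm:HELRedi_validity} irrespective of which element of $\model$ the pilot selects; the pilot choice only affects the size of the confidence set.
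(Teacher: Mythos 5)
Your proof is correct and follows essentially the same route as the paper's: compute the two squared Hellinger distances, observe that for $\epsilon_n$ below a cutoff ($\approx 0.051$, just above the $\epsilon_n < 1/20$ guaranteed by $n>20$) the $\nu$-approximate projection set collapses to $\{\Bern(0)\}$, and then invoke Corollary~\ref{thm:HELRedi_validity}. The only difference is cosmetic: the paper additionally works out the MHDE explicitly and cases on $\overline{X}_{n_1}$, whereas you correctly observe that once $\projdist_\nu$ is a singleton the corollary forces coverage for any pilot, so that computation is not needed for the coverage claim.
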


A formal proof is provided in Supplementary Material~\ref{sec:H_ex1_proof}. In this example,
the $\nu$-approximate Hellinger projection is exactly the Hellinger projection when $\epsilon \le 0.05$, and is the entire model $\model$, otherwise. This means that for larger values of $\epsilon$, approximate validity is trivial, yet vacuous, as the target of inference can be any distribution in $\model$. This highlights the downside of targeting the $\nu$-approximate projection set: when the model-misspecification is severe the resulting guarantees might be vacuous.

\subsection{Integral Probability Metrics (IPMs)}
\label{sec:IPM}

Our proposal for a $\nu$-approximate test of relative fit for IPMs is inspired by the work of \citet{yatracos_rates_1985} and \citet{devroye_combinatorial_2001}, where 
a similar idea was used to design robust density estimates. Recall the definition of the IPM,
  $\IPM_{\calF}(P_0, P_1) = \sup_{f \in \calF} \left( \bbE_{P_0} (f) - \bbE_{P_1} (f) \right)$.
 Associated with any pair of distributions is a so-called witness function $f^*_{(P_0,P_1)} = {\arg\!\sup}_{f \in \calF} ( \bbE_{P_0} (f) - \bbE_{P_1} (f) )$, which 
 witnesses the largest mean discrepancy between the two distributions. 
 The split test statistic is then defined by:
\begin{align}\label{eq: IPM_split_stat}
  \overline{T}_{n_0} (P, \pilot) =  \int f^*_{(P, \pilot)} \frac{\d P + \d \pilot}{2} - \frac{1}{n_0} \sum_{i \in \calI_0} f^*_{(P, \pilot)} (X_i).
\end{align}
 The usefulness of this statistic is highlighted by the following characterization of its expectation. 
\begin{proposition}\label{prop: IPM_gamma} 
For any $P^*, P_0, P_1$, 
\begin{align*}
  2 \bbE_{\truedist} T (P_0,P_1) \le 3 \, \IPM (\truedist, P_0) -  \IPM (\truedist, P_1).
\end{align*}  
\end{proposition}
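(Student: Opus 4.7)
The plan is to unpack the split test statistic in~\eqref{eq: IPM_split_stat} and reduce the inequality to an algebraic statement about the single witness function $f := f^*_{(P_0,P_1)} \in \calF$. Writing $T(P_0,P_1)$ as the per-sample summand of $\overline{T}_{n_0}$ and taking expectations under $\truedist$ yields
$$2\,\bbE_{\truedist} T(P_0,P_1) \;=\; \bbE_{P_0} f + \bbE_{P_1} f - 2\,\bbE_{\truedist} f.$$
The key move is the \emph{asymmetric} rearrangement
$$\bbE_{P_0} f + \bbE_{P_1} f - 2\,\bbE_{\truedist} f \;=\; 2\,(\bbE_{P_0} f - \bbE_{\truedist} f) \;-\; (\bbE_{P_0} f - \bbE_{P_1} f),$$
which isolates exactly the two kinds of quantities I can control.

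Two standard facts about IPMs then finish the argument. First, since $f$ is a fixed (sample-independent) element of the symmetric class $\calF$, any signed mean discrepancy under $f$ is dominated by the IPM between the corresponding distributions, so $\bbE_{P_0} f - \bbE_{\truedist} f \le \IPM(\truedist,P_0)$. Second, by the very definition of the witness, $\bbE_{P_0} f - \bbE_{P_1} f = \IPM(P_0,P_1)$ exactly. Combining these two observations with the display above gives
$$2\,\bbE_{\truedist} T(P_0,P_1) \;\le\; 2\,\IPM(\truedist,P_0) \;-\; \IPM(P_0,P_1).$$

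To reach the stated bound I would then replace $-\IPM(P_0,P_1)$ by an upper bound involving only $\IPM(\truedist,\cdot)$ terms via the triangle inequality for IPMs, $\IPM(\truedist,P_1) \le \IPM(\truedist,P_0) + \IPM(P_0,P_1)$, which rearranges to $-\IPM(P_0,P_1) \le \IPM(\truedist,P_0) - \IPM(\truedist,P_1)$. Plugging this in produces the stated $3\,\IPM(\truedist,P_0) - \IPM(\truedist,P_1)$.

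There is no serious obstacle here; the only subtlety is choosing the right asymmetric split of $\bbE_{P_0} f + \bbE_{P_1} f - 2\bbE_{\truedist} f$ (two copies of the $\truedist$-versus-$P_0$ discrepancy minus one copy of the $P_0$-versus-$P_1$ discrepancy), so that one piece becomes a single-$f$ IPM bound while the other is exactly the IPM witnessed by $f$ and amenable to the triangle inequality. The naive symmetric decomposition would instead lead to the weaker-looking bound $\IPM(\truedist,P_0)+\IPM(\truedist,P_1)$, which is not the form needed to derive the approximate relative-fit guarantee with the appropriate $\nu$.
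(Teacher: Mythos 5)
Your proof is correct and follows essentially the same route as the paper's: expand $2\,\bbE_{\truedist}T(P_0,P_1)$ into $\bbE_{P_0}f+\bbE_{P_1}f-2\bbE_{\truedist}f$ for the witness $f=f^*_{(P_0,P_1)}$, regroup it as twice the $\truedist$-versus-$P_0$ discrepancy minus the exactly-witnessed $\IPM(P_0,P_1)$, bound the former by $\IPM(\truedist,P_0)$, and finish with the triangle inequality. No differences of substance; your write-up is in fact cleaner than the paper's displayed chain, which contains sign typos in its intermediate lines.
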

See Supplementary Material~\ref{sec:proof_section_test} for a formal proof. For the TV IPM this result appears in the work of \citet{yatracos_rates_1985} and \citet{devroye_combinatorial_2001}, and our result generalizes their argument
to other IPMs. Proposition~\ref{prop: IPM_gamma} ensures that $\bbE_{\truedist} T(P,Q)$ is negative for all $\truedist \in \trueclass$ under the null hypothesis in~\eqref{eq:approx_relative_test} with $\nu=3$. 
We can construct $\Chat_{\IPM\Redi}$ by inverting the IPM approximate relative fit test, to obtain an identical guarantee to the one in Corollary~\ref{thm:HELRedi_validity} (now with $\nu = 3$).

To further illustrate the construction of IPM approximate \Redi sets we consider three widely used IPMs---total variation distance, Wasserstein distance, and maximum mean discrepancy---where the witness functions are more explicit in Supplementary Material~\ref{sec:IPM_Redi_example}.

\subsection{Unified Sufficient Conditions for any Divergence Measure}
In this section we unify some of the treatment of the previous sections by giving conditions on split test statistics which ensure 
the exact and approximate validity of the resulting confidence sets.
Given data $\Data$, we consider tests of the form:
\begin{align*}
\phi_{P_0, P_1, \nu} = \I( \overline{T}_n(P_0,P_1) > t_\alpha(P_0,P_1)).
\end{align*}
We assume that the test statistic satisfies the following two additional conditions:
\begin{assumption} 
\label{ass:T_antisym}
   $T$ is anti-symmetric, i.e.,
   $T(X; P_0, P_1) = - T(X; P_1, P_0)$ for all $P_0, P_1 \in \model$.
\end{assumption}
\begin{assumption}
\label{ass:T_ub}
  There exists some fixed, positive numbers {$c_1 > 0$ and $\nu\geq 1$} such that for all $\truedist \in \trueclass$, and any fixed $P_0, P_1 \in \model$,
  \begin{align*}
    c_1 \bbE_{\truedist} T (\Data; P_0, P_1) \le \nu \rho (\truedist \| P_0) - \rho (\truedist \| P_1).    
  \end{align*}
\end{assumption}
{Assumption~\ref{ass:T_antisym} implies that the one-sided test statistic represents and preserves the directional difference from $\truedist$ to $P_0$ and $P_1$.}
Assumption~\ref{ass:T_ub} ensures that $\bbE_{\truedist} T (\Data; P_0, P_1)$ is always negative for all $\truedist\in\trueclass$ when the null hypothesis~\eqref{eq:approx_relative_test} is true. For instance, Propositions~\ref{prop:hel_ub} and \ref{prop: IPM_gamma} establish the analogue of Assumption~\ref{ass:T_ub} for Hellinger and IPM projection, respectively. 

Now, we may define $\rho$-\Redi set $\Chat_{\rho\Redi,n}$ as in KL \Redi set~\eqref{eq: ConfSet_Rift} by inverting the test based on (a $\delta$ corrupted version of) the statistic $T$: 
\begin{align}\label{eq: ConfSet_gen}
  \Chat_{\rho\Redi, n} := \left\{P\in \model : \overline{T}_{n_0, \delta}(P, \pilot) \le  \frac{z_{\alpha}  \hat{s}_{P,\delta}}{\sqrt{n_0}} \right\}
\end{align}

If the test statistic is bounded, i.e. $T(X;P_0,P_1) \leq B$ for any pair of distributions $P_0,P_1 \in \mathcal{P}^2$ then 
we can define the finite-sample $\rho$-\Redi set as in~\eqref{eqn:hoef}:
\begin{align}
\label{eqn:hoef_gen}
  \Chat_{\rho\Redi,B,n} = \left\{P\in \model : \overline{T}_{n_0} (P, \pilot) \le B\sqrt{\frac{ \log \left(1 / \alpha\right)}{2n_0} } \right\}
\end{align}

The following general result holds: 
\begin{theorem}
\label{thm:general}
Suppose that the test statistic satisfies Assumptions~\ref{ass:T_antisym} and~\ref{ass:T_ub}. 
\begin{enumerate}
\item Suppose that $\mathcal{Q}$ is such that for some $0 < \xi \leq 1$ the $2+\xi$ moments $\bbE_{\truedist} |T(X; P, Q) - \bbE_{\truedist}T(X; P, Q)|^{2+\xi} \leq M < \infty$ are finite 
for any~$(P,Q) \in\model^2$. 
Then, 
  \begin{align*}
     \inf_{\truedist \in \trueclass} \bbP_{\truedist} (\projdist \in \Chat_{\rho\Redi, n}) \geq 1 - \alpha - C n^{-\xi/2},
  \end{align*}
where $C < C' (1 + M) /\delta^{(2+\xi)}$ for a universal constant $C'$.
\item Suppose that $T(X; P,Q) \leq B$, then:
\begin{align*}
 \inf_{\truedist \in \trueclass} \bbP_{\truedist} (\projdist_\nu \cap \Chat_{\rho\Redi,B, n} \ne \emptyset) \geq 1 - \alpha.
\end{align*}
\end{enumerate}
\end{theorem}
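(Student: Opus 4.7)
The plan is to treat the two parts separately: Part 1 follows the Berry-Esseen template already established for Theorem~\ref{thm:Rift_validity}, while Part 2 reduces to Theorem~\ref{thm:honest_approx} after verifying its hypotheses.

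For Part 1 I would first condition on $\Data_1$ so that $\pilot$ is fixed and the summands $T_i(\projdist,\pilot)$ are i.i.d.\ under $\bbP_{\truedist}$. Applying Assumption~\ref{ass:T_ub} with $P_0=\projdist$ and $P_1=\pilot$ yields
\begin{equation*}
  c_1 \bbE_{\truedist} T(X;\projdist,\pilot) \le \nu\, \rho(\truedist\|\projdist) - \rho(\truedist\|\pilot) \le 0,
\end{equation*}
where the last step uses the defining property of the projection (and the exact-test regime $\nu=1$, which is the relevant case for exact coverage of $\projdist$). Hence $\bbE_{\truedist}\overline{T}_{n_0,\delta}(\projdist,\pilot)\le 0$, and the event $\{\projdist\notin \Chat_{\rho\Redi,n}\}$ can occur only if the self-normalized mean exceeds $z_\alpha$ after being centered above its expectation. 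I would then apply a Berry-Esseen bound for studentized sums \citep{bentkus_berry-esseen_1996,petrov_sums_1975} to obtain a normal approximation with error of order $n^{-\xi/2}$ depending on $\sup_{P_0,P_1}M_{P_0,P_1}$, and finally take an expectation over $\Data_1$ to remove the conditioning and get the uniform claim.

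For Part 2 I would invoke Theorem~\ref{thm:honest_approx} once the two hypotheses of that theorem are checked for $\Chat_{\rho\Redi,B,n}$ as defined in~\eqref{eqn:hoef_gen}. First, Type I error control on $\calS_0=\{(P_0,P_1):\nu\rho(\truedist\|P_0)\le \rho(\truedist\|P_1)\}$: on $\calS_0$ Assumption~\ref{ass:T_ub} forces $\bbE_{\truedist} T\le 0$, and since $T\le B$ Hoeffding's inequality gives
\begin{equation*}
  \bbP_{\truedist}\!\left(\overline{T}_{n_0}(P_0,P_1) > B\sqrt{\tfrac{\log(1/\alpha)}{2n_0}}\right)\le \alpha.
\end{equation*}
Second, the ``powerless'' condition~\eqref{ass:powerless}: Assumption~\ref{ass:T_antisym} applied with $P_0=P_1=P$ gives $T(X;P,P)=-T(X;P,P)$, hence $T(X;P,P)\equiv 0$, so $\overline{T}_{n_0}(P,P)=0$ and the test accepts deterministically. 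With both conditions in hand, Theorem~\ref{thm:honest_approx} directly yields the approximate coverage guarantee.

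The main obstacle is the Berry-Esseen step in Part 1 in the near-degenerate regime where $\bbV_{\truedist} T(\projdist,\pilot)$ is vanishingly small (for instance, when $\pilot$ is extremely close to $\projdist$): the self-normalized quantity is then not automatically close to standard normal. The Gaussian corruption of size $\delta$ was introduced precisely to handle this, but one must verify that (i) it preserves the sign of $\bbE_{\truedist}\overline{T}_{n_0,\delta}$ (immediate, since the added noise is mean-zero), and (ii) the denominator is bounded below by $\delta$ so that the Bentkus-G\"otze constant scales as $\delta^{-(2+\xi)}$ in the way claimed. A secondary subtlety is handling the randomness of $\pilot$ uniformly over $\trueclass$: because Assumption~\ref{ass:T_ub} and the $(2+\xi)$-moment bound are postulated uniformly over $(P_0,P_1)\in\model^2$, the conditioning argument extends cleanly and no further work on $\pilot$ is needed.
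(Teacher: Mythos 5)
Your proposal is correct and follows essentially the same route as the paper, which itself only remarks that the proof ``follows the same structure as the proof of Theorem~\ref{thm:Rift_validity}'': Part~1 is the conditional Berry--Esseen argument with the $\delta$-corruption guarding against degenerate variance, and Part~2 is Hoeffding plus Theorem~\ref{thm:honest_approx} after checking Type~I control via Assumption~\ref{ass:T_ub} and the powerless condition via $T(X;P,P)\equiv 0$ from antisymmetry. Your explicit observation that the exact-coverage claim in Part~1 requires the $\nu=1$ regime (since for $\nu>1$ Assumption~\ref{ass:T_ub} does not force $\bbE_{\truedist}T(\projdist,\pilot)\le 0$) is a point the paper leaves implicit, and is worth retaining.
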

The proof of the validity claims follows the same structure as the proof of Theorem~\ref{thm:Rift_validity}. The crucial Assumption~\ref{ass:T_ub} distills out the key property of the test statistics that is useful in ensuring asymptotic or 
finite-sample validity. With these general validity results in place, we now turn our attention to studying the size of the resulting robust universal sets. 


\section{Size of Robust Universal Confidence Sets}
\label{sec:size_set}
In the well-specified setting, for statistical models which satisfy classical regularity conditions, \citet{wasserman_universal_2020} showed that the Hellinger diameter of the split LRT confidence set depends on
two factors: the size of $\model$ determined by its (local) Hellinger bracketing entropy, and the closeness of $\pilot$ to $\truedist$ in the Hellinger distance. In a similar vein, in this section we show that 
the size of the universal sets, under certain regularity conditions, can be upper bounded by two factors: roughly, measuring the quality of the pilot estimate, and the size of statistical model. 

In the misspecified setting, we would like the robust universal set to shrink around its target at a fast rate. 
To measure the (directed) divergence between two sets measured in a divergence $\rho$ and with respect to $\truedist$ outside of $\model$, we define the \emph{$\rho_{\scH}^{\truedist}$-divergence} motivated by the directed Hausdorff distance. 
For a given divergence $\rho$ and a collection of distributions $S_1 \subset \model$, we define an $\epsilon$-fattening of $S_1$ by:
\begin{align*}
S_1 \oplus\epsilon := \cup_{Q \in S_1} \{P \in \model : \rho (\truedist \| P) \le \rho (\truedist \| Q) + \epsilon\}.
\end{align*}
Now given two collections of distributions $S_0, S_1 \subset \model$, we define the \emph{$\rho_{\scH}^{\truedist}$-divergence} by
\begin{align*}
  \rho^{\truedist}_{\scH} (S_0, S_1) = \inf \{ \epsilon \ge 0 : S_0 \subseteq S_1 \oplus \epsilon \}. \qquad
\end{align*}
$\rho^{\truedist}_{\scH} (S_0, S_1)$ is the minimum $\epsilon$-fattening of $S_1$ with reference to $\truedist$ containing $S_0$. 

To express the rate at which the robust universal sets shrink,
we use the Rademacher complexity of $\mathcal{F}_{T, \mathcal{P}}$, a function class which depends on the test statistic of choice, and the statistical model $\mathcal{P}$. Concretely, we define,
\begin{align*}
\mathcal{F}_{T, \mathcal{P}} := \left\{f: f(x) := T(x; P,Q),~~P,Q \in \mathcal{P} \right\}.
\end{align*}
We denote the Rademacher complexity of this class by $\mathfrak{R}_n(\mathcal{F}_{T, \mathcal{P}})$:
\begin{align*}
\mathfrak{R}_n(\mathcal{F}_{T, \mathcal{P}}) := \mathbb{E}\left[ \sup_{f \in \mathcal{F}_{T, \mathcal{P}}} \frac{1}{n} \sum_{i=1}^n R_i f(X_i)\right],
\end{align*}
where $R_i$ are i.i.d. Rademacher random variables.
In some of the cases we have considered in this paper, under additional regularity conditions the complexity measure 
$\mathfrak{R}_n(\mathcal{F}_{T, \mathcal{P}})$, can be related to a complexity measure of the underlying model $\mathcal{P}$ using a standard contraction argument \citep{ledoux_probability_1991}:
\begin{example} 
\label{lem:F_P_tranform} 
Suppose that $\calP$, and the pilot estimate $\pilot$ are distributions supported on some compact set $\mathcal{C}$, with density with respect to the Lebesgue measure which are upper and lower bounded by constants. 
Then, for the test statistics introduced in Sections~\ref{sec:KL},\ref{sec:dpd} and \ref{sec:hellinger}, $\mathfrak{R}_n(\mathcal{F}_{T, \mathcal{P}}) \lesssim \mathfrak{R}_n(\mathcal{P})$. 
\end{example}

Finally, to characterize the quality of the pilot estimator $\pilot$, we say that the 
$\pilot$ is an $\eta_n$-consistent estimator if
\begin{align}\label{eq:pilot_close}
  \rho (\truedist \| \pilot) - \rho(\truedist \| \projdist) = O_{\truedist} (\eta_n),
\end{align}
where we use the standard big O in probability notation to indicate stochastic boundedness.

With these preliminaries in place, we have the following result for the size
of the $\rho$-\Redi set obtained by inverting a finite-sample valid relative fit test. The proof will be given in Supplementary Material~\ref{sec: proof_delta_bound}. 
\begin{theorem} \label{thm:exact_optimal_shrink}
Suppose that \eqref{eq:pilot_close} holds and $\sup_{(P, Q)\in\model^2} |T(P, Q) - \mathbb{E} T(P,Q)| \le B$. 
Fix any projection distribution $\projdist$, and recall the collection $\projdist_\nu$ in~\eqref{eqn:pnu}.
Then the robust universal confidence set $\Chat_{\rho\Redi,B,n}$ in~\eqref{eqn:hoef_gen}, for an equal sized split into $\mathcal{D}_0$ and $\mathcal{D}_1$, 
satisfies for any $\truedist\in \trueclass$,
\begin{align*}
  \rho_{\scH}^{\truedist} \left(  \Chat_{\rho\Redi,B,n}, \projdist_\nu\right) \le O_{\truedist} \left( \eta_n + \mathfrak{R}_n(\mathcal{F}_{T, \mathcal{P}}) + B\sqrt{\frac{\log(1/\alpha)}{n}}\right).
\end{align*} 
\end{theorem}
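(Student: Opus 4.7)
The plan is to proceed in three stages: (i) deduce a uniform upper bound on $\bbE_{\truedist} T(P, \pilot)$ from membership in $\Chat_{\rho\Redi, B, n}$, (ii) convert this into an upper bound on $\rho(\truedist \| P)$ using Assumptions~\ref{ass:T_antisym} and~\ref{ass:T_ub}, and (iii) absorb $\pilot$ into $\projdist$ via the $\eta_n$-consistency hypothesis.

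For stage (i), I would fix $\truedist \in \trueclass$ and condition on $\Data_1$, so that $\pilot \in \calP$ is fixed and $\Data_0$ consists of $n_0 = n/2$ i.i.d.\ draws from $\truedist$. Since every $f \in \mathcal{F}_{T, \calP}$ has range contained in $[-B, B]$, a standard symmetrization argument combined with McDiarmid's bounded-differences inequality yields, with conditional probability at least $1 - \alpha$,
\[
\sup_{P, Q \in \calP} \bigl| \overline{T}_{n_0}(P, Q) - \bbE_{\truedist} T(P, Q) \bigr| \;\lesssim\; \mathfrak{R}_n(\mathcal{F}_{T, \calP}) + B\sqrt{\frac{\log(1/\alpha)}{n}}.
\]
Taking the supremum over \emph{both} arguments of $T$ (not merely the first) sidesteps the measurability issue arising from $\pilot$ being data-dependent, since once the bound is uniform over $Q \in \calP$ I may substitute $Q = \pilot$. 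Combining this with the defining inequality $\overline{T}_{n_0}(P, \pilot) \le B\sqrt{\log(1/\alpha)/(2n_0)}$ for $P \in \Chat_{\rho\Redi, B, n}$ yields the uniform bound $\bbE_{\truedist} T(P, \pilot) \le O_{\truedist}\!\bigl(\mathfrak{R}_n(\mathcal{F}_{T, \calP}) + B\sqrt{\log(1/\alpha)/n}\bigr)$.

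For stages (ii) and (iii), I would apply Assumption~\ref{ass:T_ub} with the roles of $P_0$ and $P_1$ swapped, i.e.\ $P_0 = \pilot$ and $P_1 = P$, and use antisymmetry (Assumption~\ref{ass:T_antisym}) to replace $\bbE_{\truedist} T(\pilot, P)$ by $-\bbE_{\truedist} T(P, \pilot)$. This produces $\rho(\truedist \| P) \le \nu\, \rho(\truedist \| \pilot) + c_1 \bbE_{\truedist} T(P, \pilot)$. Plugging in the bound from stage (i) for the second term and invoking the $\eta_n$-consistency $\rho(\truedist \| \pilot) \le \rho(\truedist \| \projdist) + O_{\truedist}(\eta_n)$ for the first gives
\[
\rho(\truedist \| P) - \nu\, \rho(\truedist \| \projdist) \;\le\; O_{\truedist}\!\left(\eta_n + \mathfrak{R}_n(\mathcal{F}_{T, \calP}) + B\sqrt{\frac{\log(1/\alpha)}{n}}\right),
\]
uniformly in $P \in \Chat_{\rho\Redi, B, n}$. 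Since $\projdist_\nu$ contains distributions $Q$ with $\rho(\truedist \| Q)$ arbitrarily close to $\nu\, \rho(\truedist \| \projdist)$, the above inequality certifies $P \in \projdist_\nu \oplus \epsilon$ for $\epsilon$ of the stated order, and the bound on $\rho_\scH^{\truedist}(\Chat_{\rho\Redi, B, n}, \projdist_\nu)$ follows.

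The main obstacle will be the uniform concentration in stage (i): the indexing set of $\mathcal{F}_{T, \calP}$ depends on both arguments of $T$, and the pilot is a random element of the second argument. The cleanest remedy is to symmetrize over the entire class $\mathcal{F}_{T, \calP}$ at once, at the price of using $\mathfrak{R}_n(\mathcal{F}_{T, \calP})$ rather than a potentially smaller Rademacher complexity obtained by plugging in $\pilot$. The remaining ingredients, namely antisymmetry, the envelope bound $B$, and the $\eta_n$-consistency of $\pilot$, then assemble routinely into the claimed rate.
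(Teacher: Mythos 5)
Your proposal is correct and follows essentially the same route as the paper's proof: both rest on a symmetrization-plus-bounded-differences concentration bound for the centered statistic (yielding the $\mathfrak{R}_n(\mathcal{F}_{T,\mathcal{P}}) + B\sqrt{\log(1/\alpha)/n}$ terms), combined with Assumption~\ref{ass:T_ub}, antisymmetry, and the $\eta_n$-consistency of $\pilot$ to transfer the bound from $\bbE_{\truedist}T(P,\pilot)$ to $\rho(\truedist\|P) - \nu\rho(\truedist\|\projdist)$. The only differences are organizational: the paper argues by contraposition (showing all $P$ with $\delta_\nu(P,\pilot)$ large are rejected, after conditioning on $\Data_1$ so the supremum runs over the first argument only), whereas you argue directly with a supremum over both arguments of $T$; these are interchangeable and give the same rate.
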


Theorem~\ref{thm:exact_optimal_shrink} states that the directed $\rho_{\scH}^{\truedist}$-divergence between the \emph{exact} robust universal confidence set and its target $\projdist$ shrinks to zero at the prescribed rate, since $\projdist_\nu$ is a singleton $\{\projdist\}$ when $\nu = 1$. One can no longer show such a result for the $\nu$-approximate robust universal confidence set even with an infinite number of observations. This is because, conditional on $\Data_1$, the split test $\phi_{P, \pilot, \nu}$ is guaranteed to achieve (exponentially) small Type~\rom{2} error uniformly over $\truedist\in\trueclass$ only for distributions $P$ which are at least $\nu \rho(\truedist \| \pilot)$ away from $\truedist$.
Nevertheless, Theorem~\ref{thm:exact_optimal_shrink} characterizes the rate at which $\Chat_{\rho\Redi,B,n}$ shrinks to $\projdist_\nu$.

Theorem~\ref{thm:exact_optimal_shrink} also shows how the size of the set depends on the choice of $\pilot$. When possible we should 
choose a pilot estimate $\pilot$ which converges to the target $\projdist$ at a fast rate to ensure that the term $\eta_n$ is sufficiently small. A sensible choice is often a minimum distance estimator \citep[MDE,][]{parr_minimum_1980} which is not only a consistent estimator of $\projdist$ under some regularity conditions but is also robust to some misspecification in its corresponding distance \citep{millar_robust_1981,donoho_automatic_1988}.

\section{{Practical Implementation}}
\label{section::prac}

{
In this section we discuss practical implementation details
and illustrate the method on the setting of Example~\ref{ex:mix_unident}.
Our confidence sets have the form
$C_\alpha = \{\theta\in\Theta:\ \overline{T}_{n_0}(\theta,\thetahat_1) \leq t_\alpha(\theta,\thetahat_1)\}$
for some statistic 
$\overline{T}_{n_0}(\theta,\thetahat_1)$ and critical value $t_\alpha(\theta,\thetahat_1)$.
One approach is 
to evaluate
$\overline{T}_{n_0}(\theta,\thetahat_1)$ and $t_\alpha(\theta,\thetahat_1)$ on a regular grid in $\Theta$, but this can be expensive.
Lately,
researchers in
simulation based inference,
where
confidence sets require numerical evaluations across many $\theta$ values,
have taken a different approach.
These methods draw a large sample of
$\theta$'s from any convenient distribution and then apply nonparametric smoothing
over $\Theta$.
See e.g.,
\citet{dalmasso_likelihood-free_2024}.
We propose a similar approach here.
Assuming that $\Theta$
is a compact subset of $\mathbb{R}^d$, 
the steps are as follows:
\begin{enumerate}
    \item Draw a sample $\theta_1,\ldots, \theta_B$ from a fully supported
distribution $\pi$.
    \item For $j=1,\ldots, B$, let
$$
Z_j = \I \left\{ \overline{T}_{n_0}(\theta_j,\thetahat_1) > t(\theta_j,\thetahat_1) \right\}.
$$
    \item Perform a nonparametric regression
of
$(Z_1,\ldots, Z_B)$ on
$(\theta_1,\ldots, \theta_B)$
to estimate the $p$-value
$pv(\theta_j) = \mathbb{E}[Z_j|\theta_j]$.
Denote the estimated function by
$\hat{pv}(\theta)$.
    \item Return
$C_\alpha = \{\theta:\ \hat{pv}(\theta) \geq \alpha\}$.
\end{enumerate}
}

\noindent{
The set $C_\alpha$ approaches the true confidence set as $B$ increases.
Formally we have the following from standard nonparametric regression results:
\begin{proposition}
Let $pv(\theta) = \bbP (T(\theta,\thetahat_1) \geq t(\theta,\thetahat_1))$.
Assume that,
conditional on $\Data_1$,
$\sqrt{n}(T(\theta,\thetahat_1) - t(\theta,\thetahat_1))
 \overset{\truedist}{\rightsquigarrow} N(0,\sigma^2(\theta,\thetahat_1))$.
Further, assume that
$\sup| \hat{pv}(\theta) - pv(\theta)| = O_{\truedist}( (\log B/B)^\gamma)$
for some $\gamma>0$.
Then 
$$
\bbP_{\truedist}(\projtheta(P) \in C_\alpha) = 1-\alpha + O_{\truedist}(n^{-1/2}) + O_{\truedist}( (\log B/B)^\gamma).
$$
\end{proposition}
}

{
An active area of research in simulation based inference is on devising methods for choosing the $\theta_j$'s sequentially.
As we get information about the function
$pv(\theta)$ we adapt the distribution $\pi$ to sample
near the boundary of the confidence set.
These methods have great promise for dealing with high-dimensional parameter spaces.}

\section{Illustration: Contaminated Data Generating Process}
\label{sec:Empirical_analysis}

{In this section, we evaluate our proposed exact and approximate robust universal confidence sets in contamination setups and demonstrate the advantages of our proposed methods. Two additional illustrations---overdispersion and irregular Bernoulli trials corresponding to Example~\ref{ex: instability_lr} and \ref{ex: sLRT_fail}---are provided in the Supplementary Material~\ref{sec:empirical_bernoulli} and \ref{sec:overdisp}} to demonstrate the broader application of the method.

Consider the following contaminated data generating distributions
 which are mixtures of Gaussians. This simulation setup is used in the work of \citet{catoni_challenging_2012}.
\begin{align*}
  \truedist_1 &= 0.99 N(0, 1) + 0.01 N(0, 30^2) &\text{(Symmetric)}\\
  \truedist_2 &= 0.94 N(0, 1) + 0.01 N(20, 20^2) + 0.05 N(-30, 20^2)  &\text{(Asymmetric)}\\
  \truedist_3 &= 0.7 N(2, 1) + 0.2 N(-2, 1) + 0.1 N(0, 30^2) &\text{(Heavily Asymmetric)}
\end{align*}

Consider a location-scale family of Gaussian distribution $\mathcal{P}_\Theta = \{N(\mu, \sigma^2 ) : (\mu, \sigma)\in\Theta\}$ as a working model. This working model is a sensible choice, particularly for $\truedist_1$ and $\truedist_2,$ as both distributions are rooted in the standard Gaussian distribution. As $\truedist_3$ originates from a bimodal distribution, the working model may not be optimal; nonetheless, it illustrates how our method can offer a valid inference even under such severe misspecification. (See Supplementary Material~\ref{sec:vis_dgp} and \ref{app: contam_loc} for distribution plots and additional simulations for a location family with fixed scale.) Our goal is to evaluate the empirical performance---coverage and size---of robust universal confidence sets for the (approximate) projection of the various contaminated distributions onto $\mathcal{P}$. 

Figure~\ref{fig: contam_proj} shows the mean and standard deviation of the projection distribution with respect to the KL, DP, Hellinger and TV distances along with the mean and standard deviation of the contaminated and uncontaminated distributions. The KL projection parameter is the same as the parameters of contaminated distribution in all cases suggesting the target instability to the contamination.
The DP projection parameters, get closer to uncontaminated parameters as the $\beta$ parameter increases. 
The Hellinger projection is the closest to the uncontaminated parameters among all projections we considered, however, the size of $\projdist_\nu$ is much larger than that of approximate TV projection. The set $\projdist_\nu$ for both Hellinger and TV distance is quite large for the heavily misspecified case (Case 3).
Practically, we recommend targeting DP projection with a reasonable choice of $\beta (> 0.05)$ for this heavily misspecified case.
\begin{figure}[!htb]
\centering
    \begin{subfigure}{.32\textwidth}
        \centering
        \caption{Case 1: Symmetric}        
        \includegraphics[trim={15 20 25 20},clip,width=\textwidth,height=1.55in]{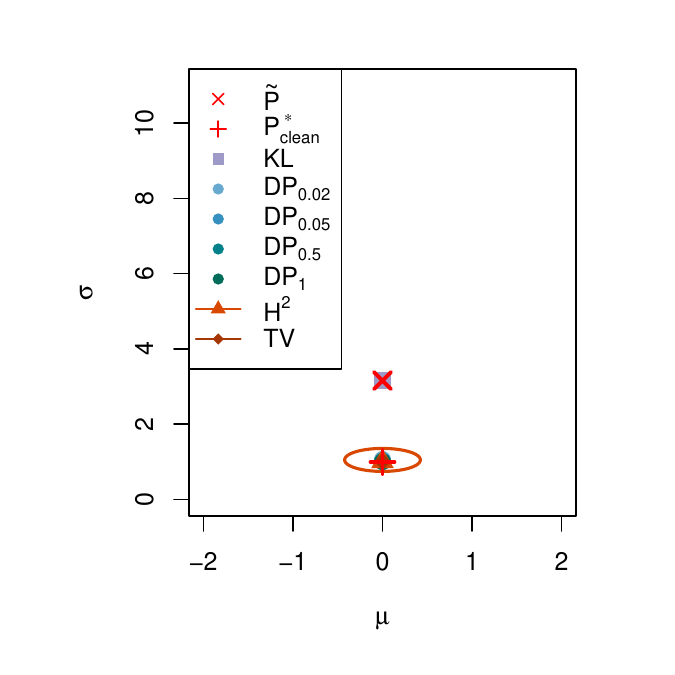}
    \end{subfigure}
    \begin{subfigure}{.32\textwidth}
        \centering
        \caption{Case 2: Asymmetric}
        \includegraphics[trim={15 20 25 20},clip,width=\textwidth,height=1.55in]{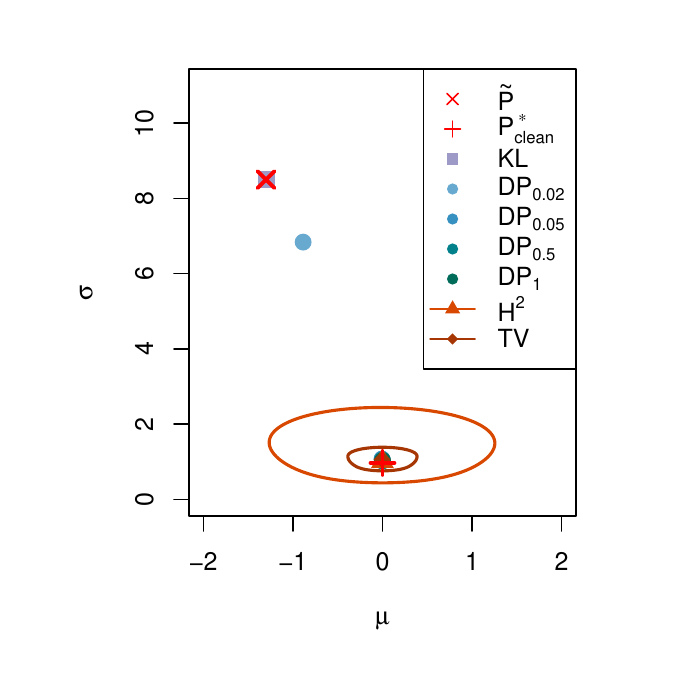}
    \end{subfigure}
    \begin{subfigure}{.34\textwidth}
        \centering
        \caption{Case 3: Heavily asymmetric}
        \includegraphics[trim={15 20 25 20},clip,width=\textwidth,height=1.55in]{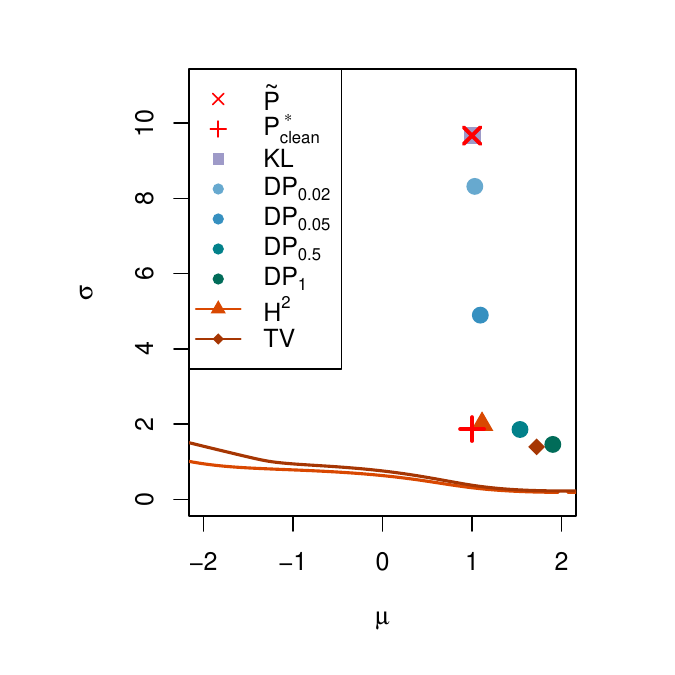}
    \end{subfigure}    
    \caption{Projection parameters ($\KL$, $\DP_\beta$, $\HEL$, $\TV$), contaminated distribution ($\projdist$), and uncontaminated distribution $(\truedist_{\Truth})$. Level set of parameters of $\projdist_\nu$ for Hellinger and TV distance is given in contours.
    }\label{fig: contam_proj}
\end{figure}

Figure~\ref{fig: contam_sLRT_Rift} illustrates the empirical coverage and size of split LRT and ($\KL$ and $\DP$) \Redi sets based on 1000 replications. For split LRT and KL \Redi sets, we choose $\hat{\theta}_1$ to be the quasi-MLE {of the first half samples}, whereas, for the
DP \Redi set, we use the minimum DP divergence estimator. The {non-robust} split LRT fails to cover KL projection in all cases whereas \Redi sets achieve the nominal coverage with large enough sample sizes. The DP \Redi sets show superior coverage than KL \Redi set across all sample sizes. Such a target coverage improvement is more evident in the smaller sample sizes below 200, and as $\beta$ gets larger, i.e., the DP \Redi set targets a more stable projection. Regardless of what divergence measure $\rho$ is of interest, the size of the confidence set with reference to $\rho$ shrinks to zero as the sample size increases. Again, the extremal values of $\KL_\scH^{\truedist} (\Chat_{\Redi}, \projdist)$ for sample sizes below 500 highlight the instability of KL projection. 

\begin{figure}[!htb]
\centering
\begin{subfigure}{.32\textwidth}
\centering
\caption{Case 1: Symmetric}
\includegraphics[trim={0 20 40 20},clip,width=\textwidth,height=1.55in]{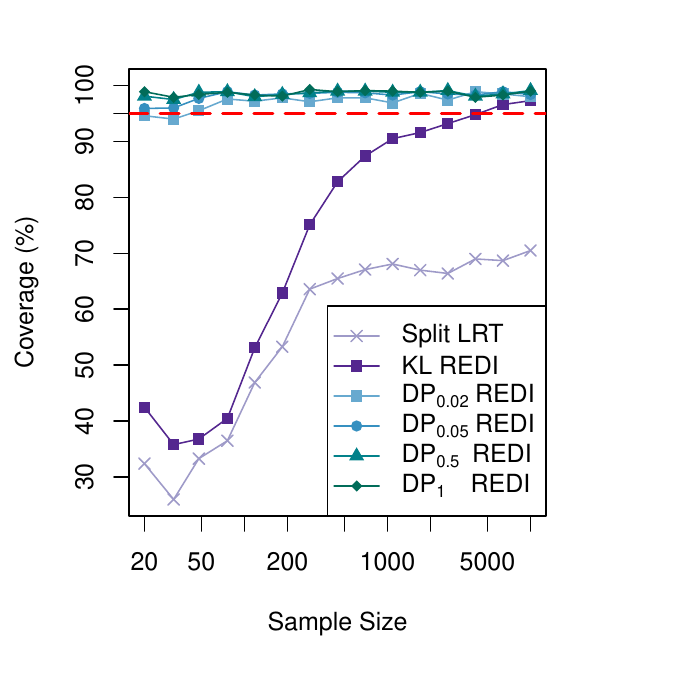}
\end{subfigure}
    \begin{subfigure}{.32\textwidth}
        \centering
       \caption{Case 2: Asymmetric}
        \includegraphics[trim={0 20 40 20},clip,width=\textwidth,height=1.55in]{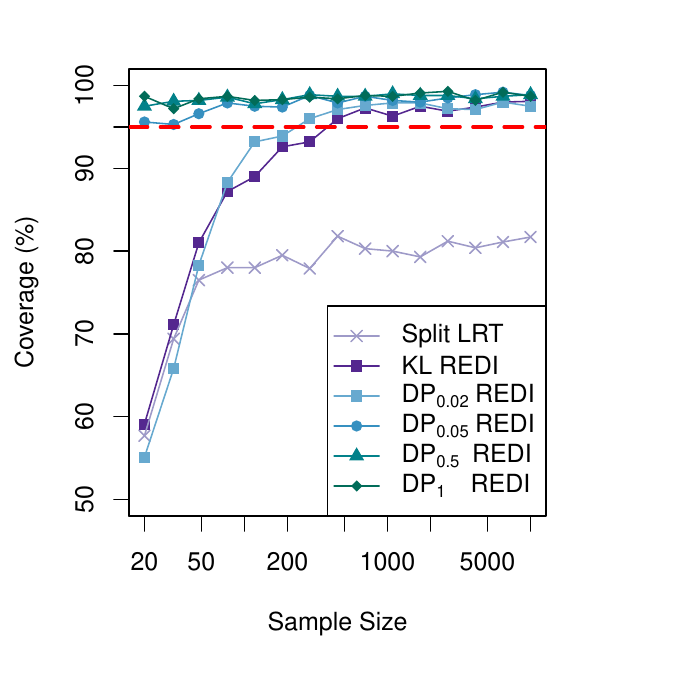}
    \end{subfigure}
    \begin{subfigure}{.34\textwidth}
        \centering
       \caption{Case 3: Heavily asymmetric}        
        \includegraphics[trim={0 20 40 20},clip,width=\textwidth,height=1.55in]{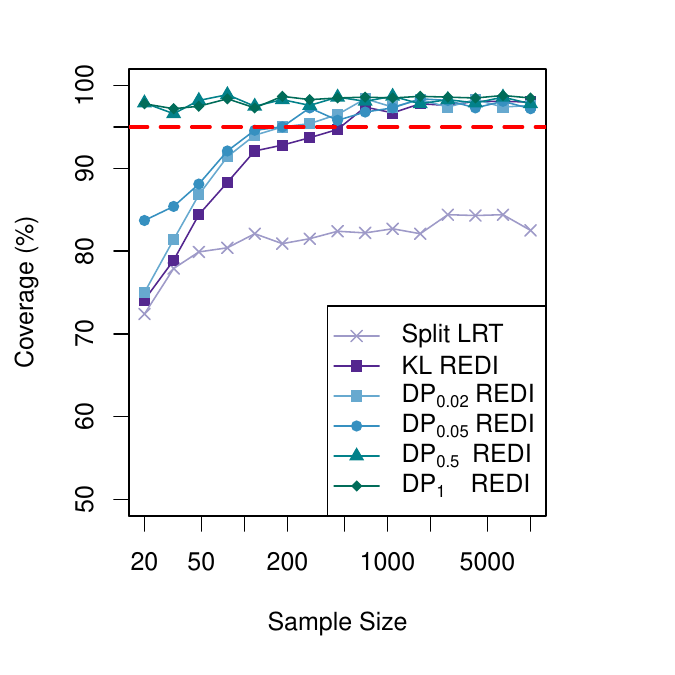}
    \end{subfigure}    
    \begin{subfigure}{.32\textwidth}
        \centering
        \includegraphics[trim={15 20 25 20},clip,width=\textwidth,height=1.55in]{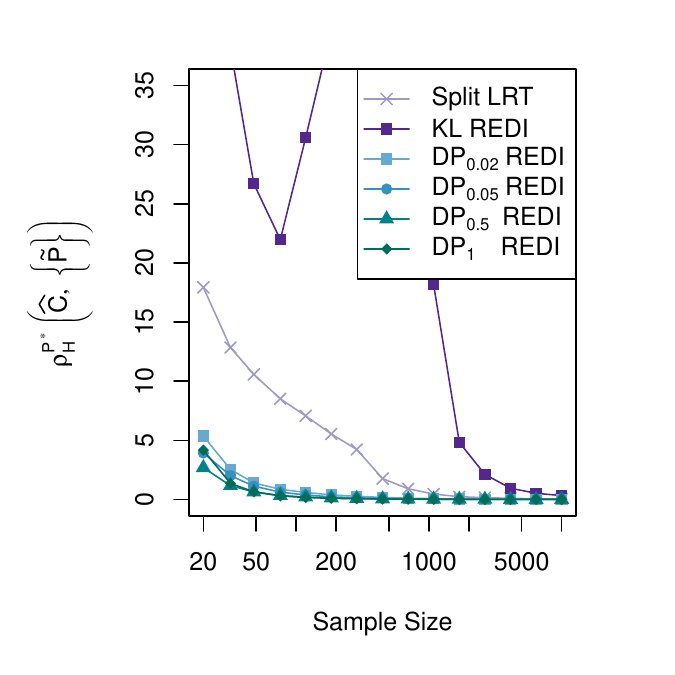}
    \end{subfigure}
    \begin{subfigure}{.32\textwidth}
        \centering
        \includegraphics[trim={15 20 25 20},clip,width=\textwidth,height=1.55in]{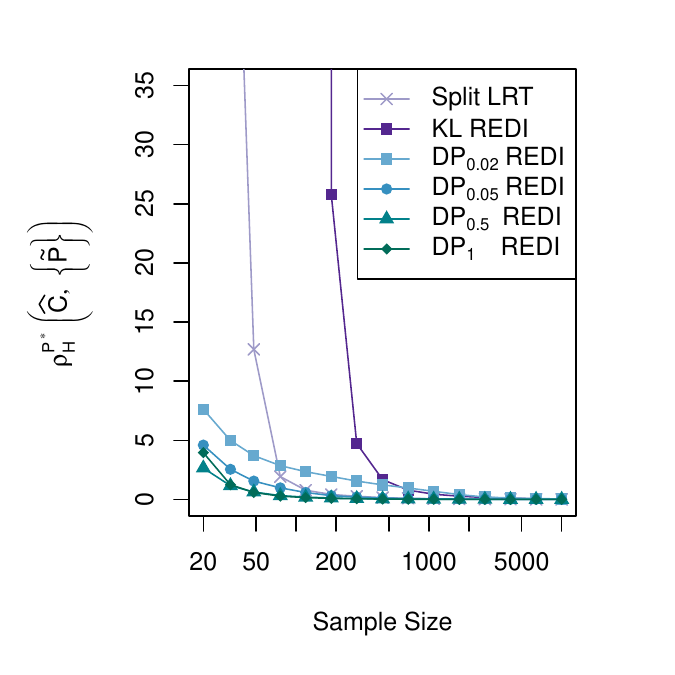}
    \end{subfigure}
    \begin{subfigure}{.34\textwidth}
        \centering
        \includegraphics[trim={15 20 25 20},clip,width=\textwidth,height=1.55in]{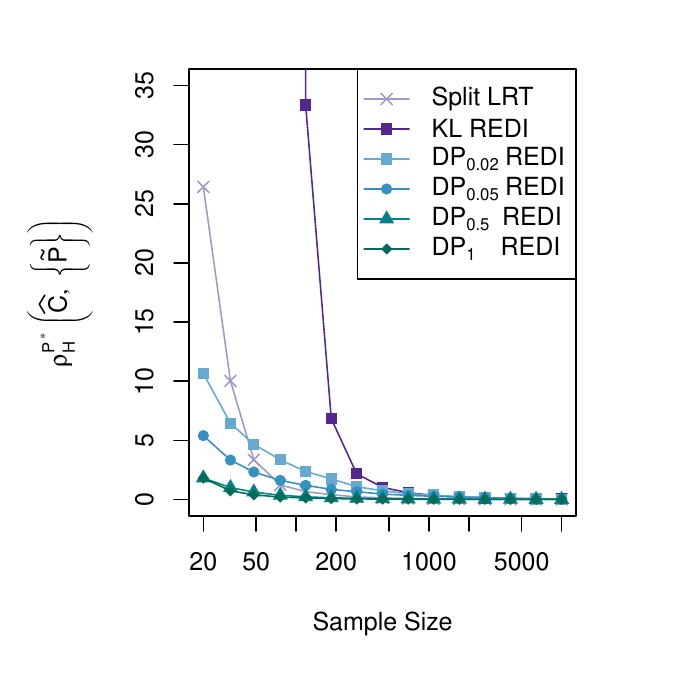}
    \end{subfigure}
    \caption{Performance of confidence sets for KL and DP projections on contamination examples over 1000 replications. (Top) Empirical coverage. (Bottom) Median $\rho$-diameter.}\label{fig: contam_sLRT_Rift}
    
\end{figure}

Figure~\ref{fig: contam_HRift} shows the maximal $\rho$-distance of $\HEL$\Redi and $\TV$\Redi set from $\truedist$ based on 1000 replications along with the $\rho(\truedist \| \projdist_\nu)$, a set of $\rho$-distance from $\truedist$ to approximate projection $\projdist_\nu$. $\rho(\truedist \| \projdist_\nu)$ illustrates the same phenomena as in Figure~\ref{fig: contam_proj} but with respect to each distance. Theoretically, we can only claim that the shrinkage of \Redi set up to $\projdist_\nu$. This can be seen in Figure~\ref{fig: contam_HRift} for both Hellinger and TV \Redi set as the maximum excess distance from $\truedist$ reaches $\nu\rho(\truedist \| \projdist_\nu)$ with large enough samples. \Redi sets shrink beyond $\projdist_\nu$ in this example: the $\HEL$\Redi set converges to a set quite close to $\projdist$ with large enough sample size, while the $\TV$\Redi set converges to a set around $\projdist$ which does not appear to shrink with sample size. 

\begin{figure}[!htb]
\centering
    \begin{subfigure}{.30\textwidth}
        \centering
      \caption{Case 1: Symmetric}
        \includegraphics[trim={15 20 25 20},clip,width=\textwidth,height=1.55in]{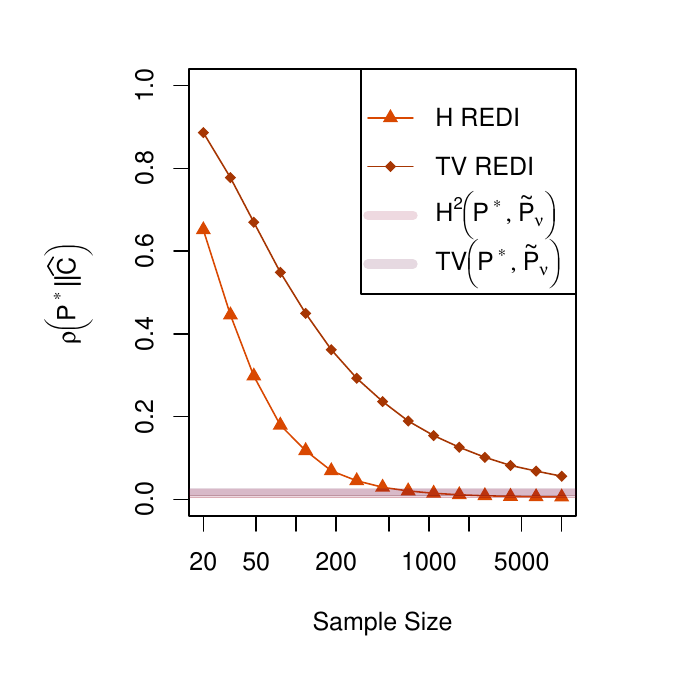}
    \end{subfigure}
    \begin{subfigure}{.30\textwidth}
        \centering
      \caption{Case 2: Asymmetric}
        \includegraphics[trim={15 20 25 20},clip,width=\textwidth,height=1.55in]{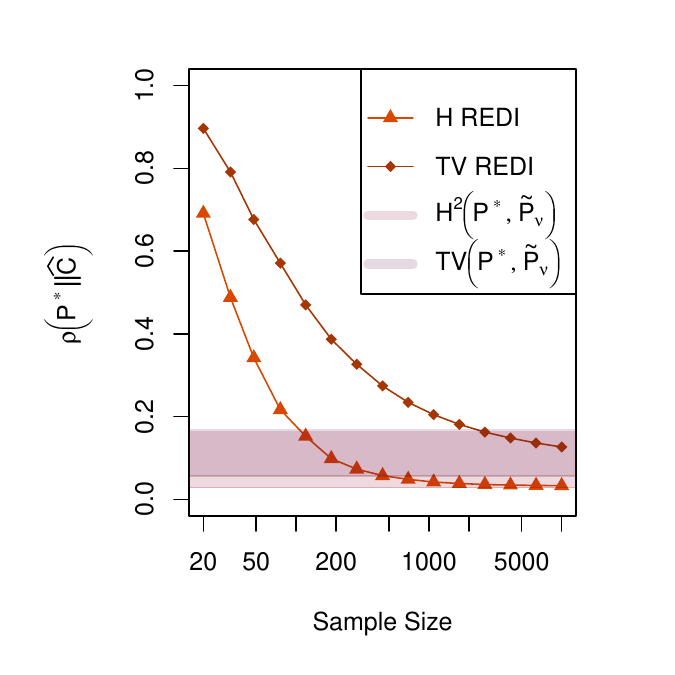}
    \end{subfigure}
    \begin{subfigure}{.32\textwidth}
        \centering
      \caption{Case 3: Heavily asymmetric}
        \includegraphics[trim={15 20 25 20},clip,width=\textwidth,height=1.55in]{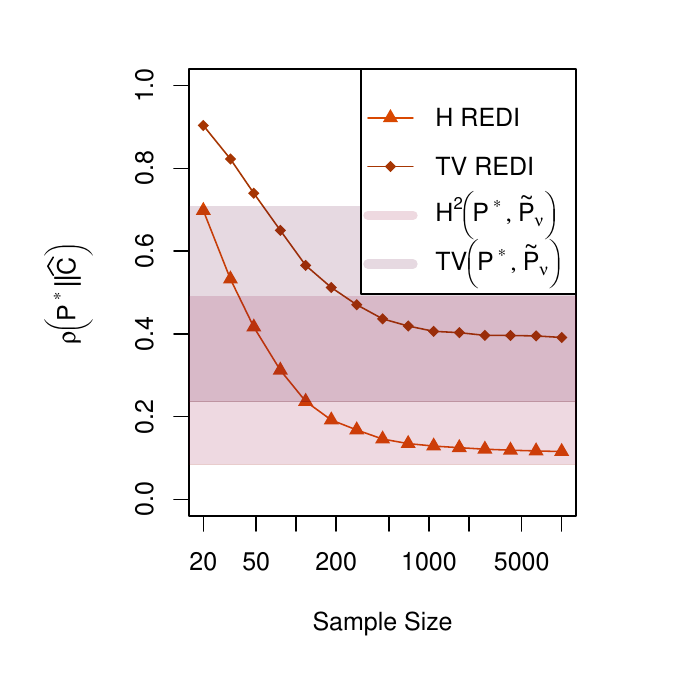}
    \end{subfigure}    
    \caption{Size of \Redi confidence sets of Hellinger and TV projection on contamination examples over 1000 replications superimposed with $\{\rho(\truedist \| P): P\in \projdist_\nu\}$.}\label{fig: contam_HRift}    
\end{figure}

\section{Application: Causal Discovery with Homoscedastic Linear Causal Model}\label{sec:LSEM}

In this section, we demonstrate how our proposed method can be applied to construct robust confidence sets for total causal effects using a bivariate Gaussian linear structural equation model (LSEM) with homoscedastic errors \citep{peters_identifiability_2014}. The homoscedastic bivariate Gaussian LSEM is formally defined as follows:
\begin{align}\label{eq:LSEM}
  \model = \left\{ \Normal \left( 0, \Sigma \right) :  \Sigma= \sigma^2 (I_2 - B)^{-1} (I_2 - B)^{-\top} , \quad
     B = \begin{pmatrix}
       0& \beta_{12}\\
       \beta_{21}& 0
     \end{pmatrix} \right\},
\end{align}
where $\sigma^2 \in \mathbb{R}_{+}$ is an unknown variance parameter and $\beta_{ij}$ are unknown parameters of direct causal effect from $X_i$ to $X_j$.
Under the model $\model$, the target of inference is the \emph{total causal effect} $C(i\to j) = (I_2 - B)^{-1}_{ji}$ for $(i,j) \in \{(1,2), (2,1)\}$, i.e.,
\begin{align*}
  C (i\to j) = \Sigma_{ij} / \Sigma_{ii} \I(\Sigma \in \mathcal{M}_{r_i}), \qquad
  \mathcal{M}_{r_i} := \left\{ \Sigma \succ 0 :  \Sigma^2_{ii} = |\Sigma|  \right\}.
  \end{align*}

Given iid samples $\bm{X}_1, \dots, \bm{X}_n \sim \truedist$, our goal is to construct 
a confidence set for the total causal effect $C(i\to j)$. When the model is well specified ($\truedist \in \calP$), the causal structure is identifiable \citep{peters_identifiability_2014}, and one can construct confidence sets by  inverting either (A) likelihood ratio tests that specify $C(i \to j)$, with the alternative relaxed to the positive definite cone instead of the union $\mathcal{M}{r_1} \cup \mathcal{M}{r_2}$, or (B) the split likelihood ratio test \citep{strieder_confidence_2021}. \citet{strieder_confidence_2021} showed that 
the LRT can be sensitive to model misspecification 
despite having narrower confidence set than the split LRT confidence set. Recall that split LRT is valid only when model is well specified (or convex) and its implicit target of inference---
the KL projection---can be sensitive to the model misspecification. Therefore we 
compare the LRT set, the split LRT set, the $\KL$\Redi, and the $\DP$\Redi sets 
including crossfit variants on datasets exhibiting varying degrees and types of model misspecification. See Supplementary Material~\ref{sec:LSEM_supp} for details on constructing \Redi sets under homoscedastic bivariate Gaussian LSEM, and for additional simulation results.

We used the \textsc{CausalEffectPairs} \citep{mooij_distinguishing_2016} dataset, which offers a collection of cause-effect pairs from diverse domains, with ground-truth causal directions established through domain expertise. For our application, we consider three representative pairs: pair 76 (the average annual growth of population $X_1$ and the food consumption $X_2$), pair 18 (age of child $X_1$ and chemical GAG concentration in urine $X_2$), and pair 12 (age $X_1$ and wage per hour $X_2$). These examples were selected to illustrate varying degrees and types of model misspecification. Pair 76 is considered nearly well-specified and was also analyzed in \citet{strieder_confidence_2021}. Pair 18 exhibits nonlinearity and possible deviations from homoscedasticity, while pair 12 involves both nonlinearity and heteroscedasticity, as well as notable outliers. See Figure~\ref{fig:CEP_data_scatterplot} for scatterplots of each pairs after centering.

\begin{figure}[!htb]
\centering
\includegraphics[width=0.7\textwidth]{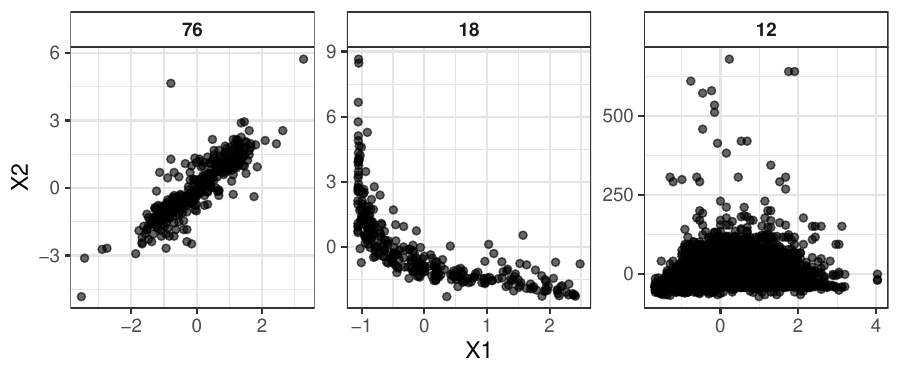}  
\caption{Scatterplots of selected \textsc{CausalEffectPairs} after centering.}\label{fig:CEP_data_scatterplot}
\end{figure}

\begin{figure}[!htb]
  \centering
  \includegraphics[width=\textwidth]{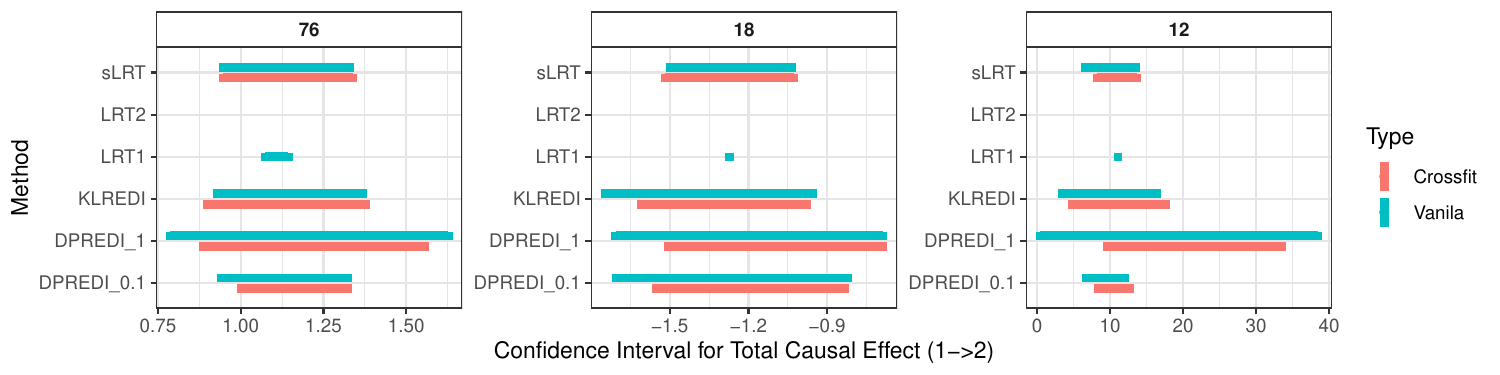}
  \includegraphics[width=\textwidth]{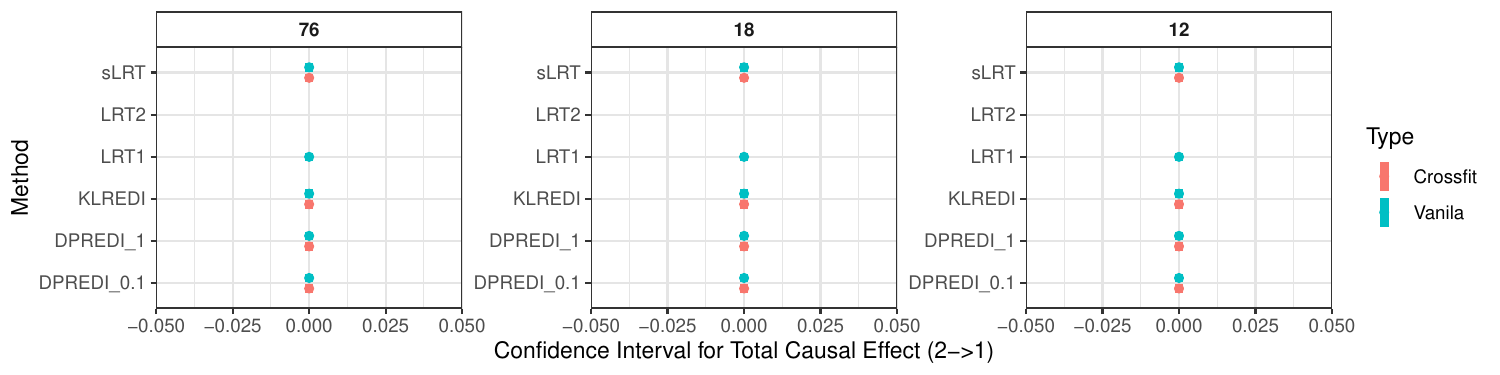}
  \caption{95\% confidence intervals for the total causal effect from $X_1$ to $X_2$ (Top) and from $X_2$ to $X_1$ (Bottom) for selected \textsc{CausalEffectPairs}.}\label{fig:CEP_CI}
\end{figure}

Figure~\ref{fig:CEP_CI} shows the 95\% confidence sets for the total causal effects $C(1\to2)$ and $C(2\to1)$, constructed using two LRT variants \citep[\texttt{LRT1} and \texttt{LRT2},][]{strieder_confidence_2021}, split LRT, KL\Redi, DP\Redi with $\beta= 0.1, 1$. For split LRT and \Redi sets, we provide both vanilla variant---based on equally split data---and the crossfit variant, which averages the split test statistics and its corrupted variances by swapping the role of split samples $\Data_0$ and $\Data_1$. This crossfitting approach can help reduce split-sample variability and improve efficiency. See Supplementary Material~\ref{sec:crossfit} for details. All methods correctly identified the ground-truth causal direction across all three pairs. \texttt{LRT2} fails to provide confidence set in all cases, despite its strong simulation performance, whereas \texttt{LRT1} yielded the narrowest confidence intervals. 

In the nearly well-specified example (e.g., pair 76), the split LRT and 
the KL\Redi produced comparable confidence sets. Among all methods, $\DP\Redi_1$ yielded the widest interval, consistent with the lower efficiency of minimum density power divergence methods compared to MLE \citep{basu_robust_1998}. Interestingly, $\DP\Redi_{0.1}$ resulted in a narrower confidence interval than both KL\Redi and split LRT. This suggests that $\DP\Redi$ may have effectively downweighted influential observations (e.g., $(X_1, X_2) = (-1.6, 9.5)$), achieving robustness with only a modest loss in efficiency.

For pair 18, where the model is moderately misspecified, the KL\Redi produced wider confidence sets than the split LRT, appropriately reflecting the uncertainty due to misspecification. In this case, $\DP\Redi$ (for both $\beta = 0.1$ and $\beta = 1$) yielded even wider sets than KL\Redi set. As $\beta$ increases, the $\DP\Redi$ confidence set shifts slightly toward larger values of $C(1 \to 2)$, though the shift remains minor. This shift reflects the effect of DP divergence downweighting influential observations near $X_1 = -5$,

For pair 12, where the data exhibit nonlinearity, heteroscedasticity, and contain outliers, the $\KL$\Redi produced a wider confidence set than the split LRT as in pair 18. The $\DP\Redi_{0.1}$ set was narrower than both split LRT and $\KL$\Redi sets, while remaining similarly centered, indicating a reduced variance from a small robustness gain. In contrast, the $\DP\Redi_{1}$ set shifted substantially toward larger $C(1\to 2)$ and was the widest interval. This is because the DP projection with a higher $\beta$ is a more stable inferential target as seen in Figure~\ref{fig: contam_proj}, though this robustness may come at the expense of efficiency as also shown in pair 76.

\section{Discussion}
\label{sec:discussion}
In this paper, 
we presented a general method for constructing uniformly valid exact and approximate confidence sets for 
various projection distributions
under weak regularity conditions in the presence of possible model misspecification. 
We demonstrated that the universal inference procedure~\citep{wasserman_universal_2020} can fail catastrophically
even in simple examples, under fairly benign model-misspecification. We then showed 
that the robust universal inference framework can address these failures, providing methods that are robust and can 
meaningfully target different projection distributions.

{Data splitting is useful in the design of assumption-light methods. The randomness due to sample splitting
can be ameliorated using the idea of \citet{gasparin2024}. Specifically, we repeat our method over $B$ random splits, yielding confidence sets $\{C_1,\ldots, C_B\}$ at level $1-2\alpha$. The set $C$, containing parameters present in at least half of these sets, is then a valid $1-\alpha$ confidence set by Markov's inequality.}
The conservativeness of sample-split methods can be partially addressed with \emph{crossfitting} where we average the split statistic with that after swapping the role of $\Data_0$ and $\Data_1$. In contrast to the well-specified setting where the validity of the crossfit set is immediate, extra care is needed under model-misspecification. We investigate the validity of the crossfit set in Supplementary Material~\ref{sec:crossfit} { and empirically show that crossfit set can be substantially smaller than the single-split set in Supplementary Material~\ref{sec:power_correct_regular_model}.}

The papers~\citep{wasserman_universal_2020,dunn2022gaussian} study many variants of universal inference (including constructing confidence sequences instead of confidence sets, to combining multiple sample-splits) and investigating these variants in the context of the robust universal inference framework of this paper would be interesting. One can apply $\KL\Redi$ to estimate the KL projection conditional distribution under model misspecification. See Supplementary Material~\ref{sec:KLRedi_regression} for details. Extending these results to other projections in the regression context would be another interesting future work.

Finally, our paper brings to the forefront the role of pairwise tests of fit (and relative fit) together with sample-splitting, in designing broadly applicable inference methods. We expect this basic insight to 
have further implications in other contexts, for instance, in designing universal inference procedures in other settings where likelihood-based methods are inapplicable. 

\section*{Acknowledgements}
{BP is currently at Google. This work was primarily conducted while BP was at Carnegie Mellon University.}
This work was partially supported by funding from the NSF grants DMS-1713003, DMS-2113684 and
CIF-1763734, as well as an Amazon AI and a Google Research Scholar Award to SB. The authors are grateful to Arun Kuchibhotla, Aaditya Ramdas, Tudor Manole, and Ian Waudby-Smith for
helpful discussions.

\bibliographystyle{plainnat}
\bibliography{references}

\begin{thebibliography}{62}
\providecommand{\natexlab}[1]{#1}
\providecommand{\url}[1]{\texttt{#1}}
\expandafter\ifx\csname urlstyle\endcsname\relax
  \providecommand{\doi}[1]{doi: #1}\else
  \providecommand{\doi}{doi: \begingroup \urlstyle{rm}\Url}\fi

\bibitem[Algeri et~al.(2020)Algeri, Aalbers, Mor{\aa}, and
  Conrad]{algeri_searching_2020}
Sara Algeri, Jelle Aalbers, Knut~Dundas Mor{\aa}, and Jan Conrad.
\newblock Searching for new phenomena with profile likelihood ratio tests.
\newblock \emph{Nat Rev Phys}, 2\penalty0 (5):\penalty0 245--252, May 2020.

\bibitem[Baraud et~al.(2017)Baraud, Birg{\'e}, and Sart]{baraud_new_2017}
Y.~Baraud, L.~Birg{\'e}, and M.~Sart.
\newblock A new method for estimation and model selection:$\rho$-estimation.
\newblock \emph{Inventiones mathematicae}, 207\penalty0 (2):\penalty0 425--517,
  February 2017.

\bibitem[Baraud(2011)]{baraud_estimator_2011}
Yannick Baraud.
\newblock Estimator selection with respect to {{Hellinger-type}} risks.
\newblock \emph{Probability Theory and Related Fields}, 151\penalty0
  (1):\penalty0 353--401, October 2011.

\bibitem[Baraud(2020)]{baraud_tests_2020}
Yannick Baraud.
\newblock Tests and estimation strategies associated to some loss functions.
\newblock \emph{arXiv:2003.12544 [math, stat]}, March 2020.

\bibitem[Baraud and Birg{\'e}(2018)]{baraud_rho-estimators_2018}
Yannick Baraud and Lucien Birg{\'e}.
\newblock Rho-estimators revisited: {{General}} theory and applications.
\newblock \emph{Annals of Statistics}, 46\penalty0 (6B):\penalty0 3767--3804,
  December 2018.

\bibitem[Basu and Lindsay(1994)]{basu1994}
Ayanendranath Basu and Bruce~G Lindsay.
\newblock Minimum disparity estimation for continuous models: efficiency,
  distributions and robustness.
\newblock \emph{Annals of the Institute of Statistical Mathematics},
  46:\penalty0 683--705, 1994.

\bibitem[Basu et~al.(1998)Basu, Harris, Hjort, and Jones]{basu_robust_1998}
Ayanendranath Basu, Ian~R. Harris, Nils~L. Hjort, and M.~C. Jones.
\newblock Robust and {{Efficient Estimation}} by {{Minimising}} a {{Density
  Power Divergence}}.
\newblock \emph{Biometrika}, 85\penalty0 (3):\penalty0 549--559, 1998.

\bibitem[Bentkus and G{\"o}tze(1996)]{bentkus_berry-esseen_1996}
V.~Bentkus and F.~G{\"o}tze.
\newblock The {{Berry-Esseen}} bound for student's statistic.
\newblock \emph{The Annals of Probability}, 24\penalty0 (1):\penalty0 491--503,
  January 1996.

\bibitem[Bentkus et~al.(2006)Bentkus, Kalosha, and {van
  Zuijlen}]{bentkus_domination_2006}
V.~Bentkus, N.~Kalosha, and M.~{van Zuijlen}.
\newblock On domination of tail probabilities of (super)martingales:
  {{Explicit}} bounds.
\newblock \emph{Lithuanian Mathematical Journal}, 46\penalty0 (1):\penalty0
  1--43, January 2006.

\bibitem[Beran(1977)]{beran_minimum_1977}
Rudolf Beran.
\newblock Minimum {{Hellinger Distance Estimates}} for {{Parametric Models}}.
\newblock \emph{Annals of Statistics}, 5\penalty0 (3):\penalty0 445--463, May
  1977.

\bibitem[Birg{\'e}(1979)]{birge_sur_1979}
Lucien Birg{\'e}.
\newblock \emph{Sur {{Un Th\'eor\`eme}} de {{Minimax Et Son Application Aux
  Tests}}}.
\newblock {Univ. de Paris-Sud, D\'ep. de Math\'ematique}, 1979.

\bibitem[Brazzale and Mameli(2024)]{brazzale_likelihood_2024}
Alessandra~R. Brazzale and Valentina Mameli.
\newblock Likelihood {{Asymptotics}} in {{Nonregular Settings}}: {{A Review}}
  with {{Emphasis}} on the {{Likelihood Ratio}}.
\newblock \emph{Statistical Science}, 39\penalty0 (2):\penalty0 322--345, May
  2024.

\bibitem[Brent(1972)]{brent_algorithms_1972}
Richard~P. Brent.
\newblock \emph{Algorithms for {{Minimization Without Derivatives}}}.
\newblock {Prentice-Hall}, 1972.

\bibitem[Buja et~al.(2019)Buja, Brown, Kuchibhotla, Berk, George, and
  Zhao]{buja_models_2019-2}
Andreas Buja, Lawrence Brown, Arun~Kumar Kuchibhotla, Richard Berk, Edward
  George, and Linda Zhao.
\newblock Models as {{Approximations II}}: {{A Model-Free Theory}} of
  {{Parametric Regression}}.
\newblock \emph{Statistical Science}, 34\penalty0 (4):\penalty0 545--565,
  November 2019.

\bibitem[Bunge et~al.(2012)Bunge, Woodard, B{\"o}hning, Foster, Connolly, and
  Allen]{bunge_estimating_2012}
John Bunge, Linda Woodard, Dankmar B{\"o}hning, James~A. Foster, Sean Connolly,
  and Heather~K. Allen.
\newblock Estimating population diversity with {{CatchAll}}.
\newblock \emph{Bioinformatics}, 28\penalty0 (7):\penalty0 1045--1047, April
  2012.

\bibitem[Bunge et~al.(2014)Bunge, Willis, and Walsh]{bunge_estimating_2014}
John Bunge, Amy Willis, and Fiona Walsh.
\newblock Estimating the {{Number}} of {{Species}} in {{Microbial Diversity
  Studies}}.
\newblock \emph{Annual Review of Statistics and Its Application}, 1\penalty0
  (Volume 1, 2014):\penalty0 427--445, January 2014.

\bibitem[Catchpole and Morgan(1997)]{catchpole_detecting_1997}
E.~A. Catchpole and B.~J.~T. Morgan.
\newblock Detecting parameter redundancy.
\newblock \emph{Biometrika}, 84\penalty0 (1):\penalty0 187--196, March 1997.

\bibitem[Catoni(2012)]{catoni_challenging_2012}
Olivier Catoni.
\newblock Challenging the empirical mean and empirical variance: {{A}}
  deviation study.
\newblock \emph{Annales de l'Institut Henri Poincar\'e, Probabilit\'es et
  Statistiques}, 48\penalty0 (4):\penalty0 1148--1185, November 2012.

\bibitem[Chakravarti et~al.(2019)Chakravarti, Balakrishnan, and
  Wasserman]{chakravarti_gaussian_2019}
Purvasha Chakravarti, Sivaraman Balakrishnan, and Larry Wasserman.
\newblock Gaussian {{Mixture Clustering Using Relative Tests}} of {{Fit}}.
\newblock \emph{arXiv:1910.02566 [stat]}, October 2019.

\bibitem[Cox(1975)]{cox1975}
David~R Cox.
\newblock A note on data-splitting for the evaluation of significance levels.
\newblock \emph{Biometrika}, 62\penalty0 (2):\penalty0 441--444, 1975.

\bibitem[Dalmasso et~al.(2024)Dalmasso, Masserano, Zhao, Izbicki, and
  Lee]{dalmasso_likelihood-free_2024}
Niccol{\`o} Dalmasso, Luca Masserano, David Zhao, Rafael Izbicki, and Ann~B.
  Lee.
\newblock Likelihood-free frequentist inference: Bridging classical statistics
  and machine learning for reliable simulator-based inference.
\newblock \emph{Electronic Journal of Statistics}, 18\penalty0 (2):\penalty0
  5045--5090, January 2024.

\bibitem[Devroye and Lugosi(2001)]{devroye_combinatorial_2001}
Luc Devroye and G{\'a}bor Lugosi.
\newblock \emph{Combinatorial {{Methods}} in {{Density Estimation}}}.
\newblock Springer {{Series}} in {{Statistics}}. {Springer}, {New York, NY},
  2001.

\bibitem[Donoho and Liu(1988)]{donoho_automatic_1988}
David~L. Donoho and Richard~C. Liu.
\newblock The "{{Automatic}}" {{Robustness}} of {{Minimum Distance
  Functionals}}.
\newblock \emph{Annals of Statistics}, 16\penalty0 (2):\penalty0 552--586, June
  1988.

\bibitem[Dunn et~al.(2023)Dunn, Ramdas, Balakrishnan, and
  Wasserman]{dunn2022gaussian}
Robin Dunn, Aaditya Ramdas, Sivaraman Balakrishnan, and Larry Wasserman.
\newblock Gaussian universal likelihood ratio testing.
\newblock \emph{Biometrika}, 110\penalty0 (2):\penalty0 319 -- 337, 2023.

\bibitem[Freedman(2006)]{freedman_so-called_2006}
David~A. Freedman.
\newblock On {{The So-Called}} ``{{Huber Sandwich Estimator}}'' and ``{{Robust
  Standard Errors}}''.
\newblock \emph{The American Statistician}, 60\penalty0 (4):\penalty0 299--302,
  November 2006.

\bibitem[Gasparin and Ramdas(2024)]{gasparin2024}
Matteo Gasparin and Aaditya Ramdas.
\newblock Merging uncertainty sets via majority vote.
\newblock \emph{arXiv preprint arXiv:2401.09379}, 2024.

\bibitem[Gretton et~al.(2012)Gretton, Borgwardt, Rasch, Sch{\"o}lkopf, and
  Smola]{gretton_kernel_2012}
Arthur Gretton, Karsten~M. Borgwardt, Malte~J. Rasch, Bernhard Sch{\"o}lkopf,
  and Alexander Smola.
\newblock A {{Kernel Two-Sample Test}}.
\newblock \emph{Journal of Machine Learning Research}, 13\penalty0
  (25):\penalty0 723--773, 2012.

\bibitem[Hampel et~al.(1986)Hampel, Ronchetti, Rousseeuw, and
  Stahel]{hampel_robust_1986}
Frank~R. Hampel, Elvezio~M. Ronchetti, Peter~J. Rousseeuw, and Werner~A.
  Stahel.
\newblock \emph{Robust {{Statistics}}: {{The Approach Based}} on {{Influence
  Functions}}}.
\newblock {John Wiley \& Sons}, September 1986.

\bibitem[Hoeffding(1963)]{hoeffding_probability_1963}
Wassily Hoeffding.
\newblock Probability {{Inequalities}} for {{Sums}} of {{Bounded Random
  Variables}}.
\newblock \emph{Journal of the American Statistical Association}, 58\penalty0
  (301):\penalty0 13--30, 1963.

\bibitem[Hong et~al.(2006)Hong, Bunge, Jeon, and Epstein]{hong_predicting_2006}
Sun-Hee Hong, John Bunge, Sun-Ok Jeon, and Slava~S. Epstein.
\newblock Predicting microbial species richness.
\newblock \emph{Proceedings of the National Academy of Sciences}, 103\penalty0
  (1):\penalty0 117--122, January 2006.

\bibitem[Huber(1965)]{huber_robust_1965}
Peter~J. Huber.
\newblock A {{Robust Version}} of the {{Probability Ratio Test}}.
\newblock \emph{The Annals of Mathematical Statistics}, 36\penalty0
  (6):\penalty0 1753--1758, December 1965.

\bibitem[Huber(1967)]{huber_behavior_1967}
Peter~J. Huber.
\newblock The behavior of maximum likelihood estimates under nonstandard
  conditions.
\newblock In \emph{Proceedings of the {{Fifth Berkeley Symposium}} on
  {{Mathematical Statistics}} and {{Probability}}, {{Volume}} 1:
  {{Statistics}}}. {The Regents of the University of California}, 1967.

\bibitem[Huber(1981)]{huber_robust_1981}
Peter~J. Huber.
\newblock \emph{Robust {{Statistics}}}.
\newblock {John Wiley \& Sons}, 1981.

\bibitem[Kennedy et~al.(2020)Kennedy, Balakrishnan, and
  G'Sell]{kennedy_sharp_2020}
Edward~H. Kennedy, Sivaraman Balakrishnan, and Max G'Sell.
\newblock Sharp instruments for classifying compliers and generalizing causal
  effects.
\newblock \emph{The Annals of Statistics}, 48\penalty0 (4):\penalty0
  2008--2030, August 2020.

\bibitem[Kleijn and van~der Vaart(2006)]{kleijn_misspecification_2006}
B.~J.~K. Kleijn and A.~W. van~der Vaart.
\newblock Misspecification in infinite-dimensional {{Bayesian}} statistics.
\newblock \emph{Annals of Statistics}, 34\penalty0 (2):\penalty0 837--877,
  April 2006.

\bibitem[Kleijn and van~der Vaart(2012)]{kleijn_bernstein-von-mises_2012}
B.~J.~K. Kleijn and A.~W. van~der Vaart.
\newblock The {{Bernstein-Von-Mises}} theorem under misspecification.
\newblock \emph{Electronic Journal of Statistics}, 6:\penalty0 354--381, 2012.

\bibitem[Kuchibhotla and Zheng(2021)]{kuchibhotla_near-optimal_2021}
Arun~K. Kuchibhotla and Qinqing Zheng.
\newblock Near-{{Optimal Confidence Sequences}} for {{Bounded Random
  Variables}}.
\newblock In \emph{Proceedings of the 38th {{International Conference}} on
  {{Machine Learning}}}, pages 5827--5837. {PMLR}, July 2021.

\bibitem[Kuchibhotla et~al.(2021)Kuchibhotla, Balakrishnan, and
  Wasserman]{kuchibhotla_hulc_2021}
Arun~Kumar Kuchibhotla, Sivaraman Balakrishnan, and Larry Wasserman.
\newblock The {{HulC}}: {{Confidence Regions}} from {{Convex Hulls}}, May 2021.

\bibitem[Le~Cam(1973)]{lecam_convergence_1973}
L.~Le~Cam.
\newblock Convergence of {{Estimates Under Dimensionality Restrictions}}.
\newblock \emph{The Annals of Statistics}, 1\penalty0 (1):\penalty0 38--53,
  January 1973.

\bibitem[Ledoux and Talagrand(1991)]{ledoux_probability_1991}
Michel Ledoux and Michel Talagrand.
\newblock \emph{Probability in {{Banach Spaces}}: {{Isoperimetry}} and
  {{Processes}}}.
\newblock {Springer Science \& Business Media}, 1991.

\bibitem[Li(1999)]{li1999estimation}
Qiang~Jonathan Li.
\newblock \emph{Estimation of mixture models}.
\newblock Yale University, 1999.

\bibitem[Lugosi and Mendelson(2019)]{lugosi_risk_2019}
G{\'a}bor Lugosi and Shahar Mendelson.
\newblock Risk minimization by median-of-means tournaments.
\newblock \emph{Journal of the European Mathematical Society}, 22\penalty0
  (3):\penalty0 925--965, December 2019.

\bibitem[Maurer and Pontil(2009)]{maurer_empirical_2009}
Andreas Maurer and Massimiliano Pontil.
\newblock Empirical {{Bernstein Bounds}} and {{Sample Variance Penalization}}.
\newblock \emph{arXiv:0907.3740 [stat]}, July 2009.

\bibitem[Millar(1981)]{millar_robust_1981}
P.~Warwick Millar.
\newblock Robust estimation via minimum distance methods.
\newblock \emph{Zeitschrift f\"ur Wahrscheinlichkeitstheorie und Verwandte
  Gebiete}, 55\penalty0 (1):\penalty0 73--89, February 1981.

\bibitem[Mooij et~al.(2016)Mooij, Peters, Janzing, Zscheischler, and
  Sch{\"o}lkopf]{mooij_distinguishing_2016}
Joris~M. Mooij, Jonas Peters, Dominik Janzing, Jakob Zscheischler, and Bernhard
  Sch{\"o}lkopf.
\newblock Distinguishing cause from effect using observational data: Methods
  and benchmarks.
\newblock \emph{J. Mach. Learn. Res.}, 17\penalty0 (1):\penalty0 1103--1204,
  January 2016.

\bibitem[M{\"u}ller(1997)]{muller_integral_1997}
Alfred M{\"u}ller.
\newblock Integral {{Probability Metrics}} and {{Their Generating Classes}} of
  {{Functions}}.
\newblock \emph{Advances in Applied Probability}, 29\penalty0 (2):\penalty0
  429--443, 1997.

\bibitem[Parr and Schucany(1980)]{parr_minimum_1980}
William~C. Parr and William~R. Schucany.
\newblock Minimum {{Distance}} and {{Robust Estimation}}.
\newblock \emph{Journal of the American Statistical Association}, 75\penalty0
  (371):\penalty0 616--624, September 1980.

\bibitem[Peters and B{\"u}hlmann(2014)]{peters_identifiability_2014}
J.~Peters and P.~B{\"u}hlmann.
\newblock Identifiability of {{Gaussian}} structural equation models with equal
  error variances.
\newblock \emph{Biometrika}, 101\penalty0 (1):\penalty0 219--228, March 2014.

\bibitem[Petrov(1975)]{petrov_sums_1975}
V.~V. Petrov.
\newblock \emph{Sums of {{Independent Random Variables}}}.
\newblock {Springer Berlin Heidelberg}, 1975.

\bibitem[Pinelis(2013)]{pinelis_optimal_2013}
Iosif Pinelis.
\newblock An optimal three-way stable and monotonic spectrum of bounds on
  quantiles: A spectrum of coherent measures of financial risk and economic
  inequality.
\newblock \emph{arXiv:1310.6025 [math, q-fin, stat]}, October 2013.

\bibitem[Prakasa~Rao(1992)]{prakasa_rao_identifiability_1992}
B.L.S. Prakasa~Rao.
\newblock \emph{Identifiability {{In Stochastic Models}}: {{Characterization}}
  of {{Probability Distributions}}}.
\newblock Academic Press, 1992.

\bibitem[Rinaldo et~al.(2019)Rinaldo, Wasserman, and
  G'Sell]{rinaldo_bootstrapping_2019}
Alessandro Rinaldo, Larry Wasserman, and Max G'Sell.
\newblock Bootstrapping and sample splitting for high-dimensional,
  assumption-lean inference.
\newblock \emph{The Annals of Statistics}, 47\penalty0 (6):\penalty0
  3438--3469, December 2019.

\bibitem[Strieder and Drton(2022)]{strieder_choice_2022}
David Strieder and Mathias Drton.
\newblock On the choice of the splitting ratio for the split likelihood ratio
  test.
\newblock \emph{Electronic Journal of Statistics}, 16\penalty0 (2):\penalty0
  6631--6650, January 2022.

\bibitem[Strieder et~al.(2021)Strieder, Freidling, Haffner, and
  Drton]{strieder_confidence_2021}
David Strieder, Tobias Freidling, Stefan Haffner, and Mathias Drton.
\newblock Confidence in causal discovery with linear causal models.
\newblock In \emph{Proceedings of the {{Thirty-Seventh Conference}} on
  {{Uncertainty}} in {{Artificial Intelligence}}}, pages 1217--1226. PMLR,
  December 2021.

\bibitem[Tse and Davison(2022)]{tse_note_2022}
Timmy Tse and Anthony~C. Davison.
\newblock A note on universal inference.
\newblock \emph{Stat}, 11\penalty0 (1):\penalty0 e501, 2022.

\bibitem[Tukey(1975)]{tukey_mathematics_1975}
John Tukey.
\newblock Mathematics and picturing data.
\newblock \emph{Proceedings of the 1975 International Congress of Mathematics},
  2:\penalty0 523--531, 1975.

\bibitem[van~der Vaart(1998)]{vaart_asymptotic_1998}
A.~W. van~der Vaart.
\newblock \emph{Asymptotic {{Statistics}}}.
\newblock Cambridge {{Series}} in {{Statistical}} and {{Probabilistic
  Mathematics}}. {Cambridge University Press}, {Cambridge}, 1998.

\bibitem[van~der Vaart et~al.(1996)van~der Vaart, van~der Vaart, and
  Wellner]{vaart_weak_1996}
{\relax AW}~van~der Vaart, Aad van~der Vaart, and Jon Wellner.
\newblock \emph{Weak {{Convergence}} and {{Empirical Processes}}: {{With
  Applications}} to {{Statistics}}}.
\newblock {Springer Science \& Business Media}, March 1996.

\bibitem[Wasserman et~al.(2020)Wasserman, Ramdas, and
  Balakrishnan]{wasserman_universal_2020}
Larry Wasserman, Aaditya Ramdas, and Sivaraman Balakrishnan.
\newblock Universal inference.
\newblock \emph{Proceedings of the National Academy of Sciences}, 117\penalty0
  (29):\penalty0 16880--16890, July 2020.

\bibitem[{Waudby-Smith} and Ramdas(2023)]{waudby-smith_estimating_2023}
Ian {Waudby-Smith} and Aaditya Ramdas.
\newblock Estimating means of bounded random variables by betting.
\newblock \emph{Journal of the Royal Statistical Society Series B: Statistical
  Methodology}, page qkad009, February 2023.

\bibitem[White(1982)]{white_maximum_1982}
Halbert White.
\newblock Maximum {{Likelihood Estimation}} of {{Misspecified Models}}.
\newblock \emph{Econometrica}, 50\penalty0 (1):\penalty0 1--25, 1982.

\bibitem[Yatracos(1985)]{yatracos_rates_1985}
Yannis~G. Yatracos.
\newblock Rates of {{Convergence}} of {{Minimum Distance Estimators}} and
  {{Kolmogorov}}'s {{Entropy}}.
\newblock \emph{Annals of Statistics}, 13\penalty0 (2):\penalty0 768--774, June
  1985.

\end{thebibliography}

\newpage

\appendix

\section{Examples of IPM-\Redi set}
\label{sec:IPM_Redi_example}

\subsection{Total Variation Distance.}
\label{sec:test_TV}
Suppose $\rho(P_0 \| P_1) = \TV (P_0, P_1)$ where $\TV$ is the total variation distance. This is an IPM over the function class $\calF = \{f : \supnorm{f} \le 1\}$. An equivalent definition is $\TV (P_0, P_1) = \sup_{A} | P_0 (A) - P_1(A) | = P_0 (\calA) - P_1 (\calA)$ where $\calA = \{p_0 > p_1\}$ is the Yatracos set with maximal discrepancy between $P_0$ and $P_1$. The witness function is $f^*_{(P_0, P_1)} (x) = \I (x \in \calA) - 1/2$. An immediate advantage of targeting the
 TV projection comes from that $f^*$ is uniformly bounded.
 Given samples $\Data$, consider the following test statistic which referred to as the \emph{split Scheff{\'e} statistic}:
\begin{align*}
  \overline{T}_{n_0} (P,\pilot) = \frac{P (\calA) + \pilot (\calA)}{2} - \truedist_{n_0}(\calA), \qquad
  \truedist_{n_0} (\calA) = \frac{1}{n_0} \sum_{i\in\calI_0} \I (X_i \in \calA)
\end{align*}
where $\calA$ is redefined to be $\calA = \{p > \widehat{p}_1\}$. 
The split Scheff{\'e} statistic, as the name suggests, is a sample-split analogue of the Scheff{\'e} estimate that was originally proposed in \citet{devroye_combinatorial_2001} building upon the work of \citet{yatracos_rates_1985}. 

\subsection{Wasserstein Distance. }
\label{sec:test_W}

Suppose $\rho(P_0 \| P_1) = \Was_1 ({P_0}, {P_1})$ is the 1-Wasserstein distance (or Kantorovich metric). The associated function class is $\calF = \{f: \pnorm{L}{f} \le 1 \}$ where $\pnorm{L}{f} := \sup \{ |f(x) - f(y) | / \|x - y \| : x\ne y \}$ is the Lipschitz semi-norm. 
Although the ideas are much more general, we limit our discussion to univariate distributions on a compact support, i.e., $\calX = [0,b]$. In this case, the witness function is explicit and easy to describe~\citep{baraud_tests_2020}. 

Define $\sgntilde{t; {P_0}, {P_1}} = \I \left( F_{P_1}(t) > F_{P_0} (t) \right) - \I \left( F_{P_0} (t) > F_{P_1} (t) \right) \in \{0, \pm 1 \}$, 
where $F_P$ denotes the CDF of $P$. 
The witness function is 
$f^*_{(P_0, P_1)} (x) = \int_{0}^{x} \sgntilde{t; {P_0}, {P_1}} \d t$~\citep{baraud_tests_2020}.

A direct application of the split statistic~\eqref{eq: IPM_split_stat} yields
\begin{align*}
  \overline{T}_{n_0} (P,\pilot) = \frac{1}{2} \int \sgntilde{t; P, \pilot} \left( \truedist_{n_0} (t) - \frac{F_{P} (t) + F_{\pilot} (t)}{2} \right) \d t,
\end{align*}
where $\truedist_{n_0} (t) = \frac{1}{n_0}\sum_{i\in\calI_0} \I(X_i \le t)$ is the empirical distribution. This particular split statistic is a sample-split analogue of the $\ell$-estimator \citep[Proposition~4 of Baraud][]{baraud_tests_2020}. 

\subsection{Maximum Mean Discrepancy. }
\label{sec:test_MMD}

Suppose that  $\calF$ is a unit ball of the reproducing kernel Hilbert space (RKHS) $\calH$, 
with kernel $k(x,y)$, and RKHS norm $\|\cdot\|_{\mathcal{H}}$,
i.e., $\calF = \{f: \pnorm{\calH}{f} \le 1\}$. Then the corresponding IPM is called the Maximum Mean Discrepancy \citep[MMD,][]{gretton_kernel_2012}. It was shown by \citet{gretton_kernel_2012} that the analytic witness function $f^*_{(P, \pilot)} = \frac{\mu_{P} - \mu_{\pilot}}{\pnorm{\calH}{\mu_{P} - \mu_{\pilot}}}$ where $\mu_{P}(\cdot) := \mathbb{E}_P [k(X,\cdot)]$ is the mean embedding of $P$.

The split statistic $\overline{T}_{n_0} (P, \pilot)$ in this case 
reduces to an average of the (negative) witness function $- \truedist_{n_0} (f^*_{(P, \pilot)} )$ if the kernel $k(\cdot,\cdot)$ is symmetric. In this case, the sign of the split statistic captures, in expectation, whether the population $\truedist$ is closer to $P$ or $\pilot$ based on mean embeddings.

\section{Proofs from Section~\ref{sec:fail}}
\label{app:examples} 

\subsection{Example~\ref{ex: instability_lr}}
\begin{proof}
Note that the KL projection $\projdist = \Bern\left( 1/2 \right)$. Consider the event $E$ where all of the observed samples $X_1,\ldots,X_n$ are 0.
We can see that,
\begin{align*}
\bbP_{\truedist}(E) = 1 - \bbP_{\truedist} \left( \sum_{i=1}^n X_i > 0 \right) \geq 1 - \bbE_{\truedist} \left[ \sum_{i=1}^n X_i \right] = 1 - n \epsilon_n.
\end{align*}
Now, on the event $E$, it is clear that the MLE $\pilot = \Bern(0)$.
Let us denote the split-sample universal set by $C_{\alpha}(X_1,\ldots,X_n)$, where we assume for simplicity that $\mathcal{D}_0$ and $\mathcal{D}_1$ each have $n/2$ samples.
We then have,
\begin{align*}
\bbP_{\truedist}(\projdist \notin C_{\alpha}(X_1,\ldots,X_n)| E) 
= \bbP_{\truedist}(\mathcal{L}_0(\projdist)/\mathcal{L}_0(\pilot) \leq \alpha | E) 
= \bbP_{\truedist}(1/2^{n/2} \leq \alpha | E) = 1, 
\end{align*}
for $n \geq 2 \log_2(1/\alpha)$. As a consequence, we can upper bound the coverage of the universal set by,
\begin{align*}
\bbP_{\truedist}(\projdist \notin C_{\alpha}(X_1,\ldots,X_n)) 
\geq \bbP_{\truedist}(E) \bbP_{\truedist}(\projdist \notin C_{\alpha} | E) \geq 1 - n \epsilon_n.
\end{align*}
Thus, we see that if $0 < \epsilon_n \leq \beta/n$ for some $\beta > 0$, and $n \geq 2 \log_2(1/\alpha)$ then the universal set has coverage at most $ \beta$. Choosing $\beta < (1 - \alpha)$ 
we see that the universal set fails to have its advertised coverage.

\end{proof}

\subsection{Example~\ref{ex:normal_contam}}
\label{pf:ex:normal_contam}

{
\begin{proof}
Assume $a_n >0$ without loss of generality.
We can see that the KL projection $\projtheta = \epsilon_n a_n$ using the first-order optimality condition,
\begin{align*}
  \frac{\partial}{\partial \theta} \KL (\truedist \| P_\theta) 
    = \int \frac{\partial}{\partial \theta} \log \left( \frac{p^* (x)}{p_\theta (x)} \right) \d \truedist (x) 
    \propto \int  (x - \theta) \d \truedist = \epsilon_n a_n - \theta \overset{\text{set}}{=} 0
\end{align*}
where $\log p_\theta (x) = - \log (2\pi) / 2 - (x-\theta)^2/2$. 
For simplicity we suppose that $n$ is even, and that $\Data_0$ consists of the first $n/2$ samples and $\Data_1$ consists of the remaining samples. Following the similar argument above, we can see that MLE using $\Data_1$ is the sample mean $\thetahat_1 = \overline{X}_{n_1}$ under Gaussian location family.
For positive constants $K_0, K_1, K_2 >0$, let us consider the events $E_0, E_1$ defined as,
\begin{align*}
    E_0 &= \I \left( \sum_{i: \Data_0} X_i < K_0 \sqrt{n}/2 \right), \\ 
    E_1 &= \I \left( K_1 /2 < \sum_{i: \Data_1} X_i < \projtheta n /2 - K_2 \sqrt{n/2}  \right).
\end{align*}
We will show that the unversal set $C_\alpha (X_1,\ldots,X_n)$ fails to cover $\projdist$ when the events $E_0$ and $E_1$ hold.
On the event $E_0$ and $E_1$, $\overline{X}_{n_0} < K_0/\sqrt{n}$ and $K_1/n < \thetahat_1 < \projtheta - K_2 \sqrt{2/n}$ holds, respectively.
Using the log likelihood ratio of Gaussian location family,
\begin{align*}
   \log (\calL_0 (\thetahat_1) / \calL_0 (\projtheta) ) / n_0 
    =  (\thetahat_1 - \projtheta) \overline{X}_{n_0} - (\thetahat_1^2 - \projtheta^2 )/2,
    =  (\thetahat_1 - \projtheta) \left[\overline{X}_{n_0} - (\thetahat_1 + \projtheta)/2\right],    
\end{align*}
and $\thetahat_1 < \projtheta $, we have that
\begin{align*}
    \bbP_{\truedist} (\projdist \notin C_\alpha | E_0, E_1) 
    &= \bbP_{\truedist} \left( \log (\calL_0 (\thetahat_1) / \calL_0 (\projtheta)) > \log(1/\alpha)  \middle| E_0, E_1 \right) \\
    &= \bbP_{\truedist} \left( \overline{X}_{n_0}  < \frac{(\thetahat_1 + \projtheta)}{2} - \frac{2 \log(1/\alpha)}{n (\projtheta - \thetahat_1)} \middle| E_0, E_1 \right)\\    
    &\geq  \bbP_{\truedist} \left( \overline{X}_{n_0}  < \frac{(K_1/n + c_2 n)}{2} - \frac{\sqrt{2} \log(1/\alpha)}{K_2 \sqrt{n}} \middle| E_0, E_1 \right)
    = 1,
\end{align*}
provided that $n > C(K_0, K_1, c_2, \alpha)$ where $C(K_0, K_1, c_2, \alpha)$ is the largest solution of $K_1/ \sqrt{n} + c_2 n\sqrt{n} = 2(K_0 + \sqrt{2} \log(1/\alpha) / K_2) $.
Using the fact that the Total Variation distance between n-fold product measures,
\begin{align*}
    \TV (N(0,1)^n, {\truedist}^n) 
        \leq n \TV (N(0,1), \truedist) 
        &= n \epsilon \TV ( N(0,1), Q_a) \leq n \epsilon_n \leq c_1, 
\end{align*}
we can reason instead about the probability of the events $E_0$ and $E_1$ when drawing i.i.d. samples from standarad Gaussian distribution, and account for the difference using the Total Variation. In more detail, note that
\begin{align*}
    \bbP_{X_i \sim N(0,1)} (E_0)
        &=  \bbP( Z < K_0 / \sqrt{2} ),\\
    \bbP_{X_i \sim N(0,1)} (E_1) 
        &= \bbP ( K_1/ \sqrt{n}  < Z < \sqrt{n/2} \projtheta -  K_2 )\\
        &\geq \bbP (K_1/ \sqrt{n} < Z < c_2 n\sqrt{n/2} - K_2),
\end{align*}
where $Z\sim N(0,1)$.
Combining the results, we have that,
\begin{align*}
    \bbP_{\truedist} (E_0 \cup E_1) 
        &\geq \bbP (Z < K_0 /\sqrt{2}) \bbP (K_1/ \sqrt{n} < Z < c_2 n\sqrt{n/2} - K_2)  - c_1.
\end{align*}
Thus choosing $c_1, K_1, K_2$ to be sufficiently small positive constants, when $n$ is sufficiently large, we conclude that
\begin{align*}
    \bbP_{\truedist} (\projdist \notin C_\alpha ) = \bbP_{\truedist} (\projdist \notin C_\alpha  | E_0, E_1) \bbP_{\truedist} (E_0 \cup E_1) \geq 1/8. 
\end{align*}
\end{proof}
}

\subsection{Example~\ref{ex: sLRT_fail}}
\label{pf:Ex_sLRT_fail}

\begin{proof}

The KL projection $\projdist$ is $\Bern\left( 3/4 \right)$. For simplicity we suppose that $n$ is even, and that $\mathcal{D}_0$ consists of the first $n/2$ samples and $\mathcal{D}_1$ consists of the remaining samples.
For a constant $\beta > 0$, let us consider the events $E_0, E_1$ defined as,
\begin{align*}
E_0 &= \I \left( \sum_{i=1}^{n/2} X_i < n/4 - \beta \sqrt{n} \right) \\
E_1 &=\I \left( \sum_{i=n/2}^n X_i < n/4 \right).
\end{align*}
When events $E_0$ and $E_1$ hold we can see that the universal set $C_{\alpha}(X_1,\ldots,X_n)$ fails to cover $\projdist$. In more detail, on the event $E_1$ the MLE, $\pilot$ is $\Bern(1/4)$ and thus,
\begin{align*}
\bbP_{\truedist}(\projdist \notin C_{\alpha}(X_1,\ldots,X_n) | E_0, E_1) &= \bbP_{\truedist}(\mathcal{L}_0(\projdist)/\mathcal{L}_0(\pilot) \leq \alpha | E_0, E_1) \\
&\leq \bbP_{\truedist}(1/3^{2\beta \sqrt{n}} \leq \alpha) = 1,
\end{align*}
provided that $n \geq (\log_3(1/\alpha))^2/(4\beta^2)$. Thus, it suffices to show that $E_0$ and $E_1$ happen with sufficiently large probability. 

Using the fact that the Total Variation distance between the n-fold product measures, 
\begin{align*}
\TV (\Bern(1/2)^n, \Bern(1/2 + \epsilon_n)^n) \leq n \epsilon_n, 
\end{align*}
we can reason instead about the probability of the events $E_0$ and $E_1$ when drawing samples from $\Bern(1/2)$, and account for the difference using the Total Variation. Combining this fact with the standard
Berry-Esseen bound applied to Bernoulli sums, together with some simple algebraic manipulations, we obtain that for some universal constant $C > 0$,
\begin{align*}
P(E_0 \cup E_1) \geq P(Z < 0) \times P(Z < -2\sqrt{2}\beta) - \frac{2 C}{\sqrt{n}} - n \epsilon_n. 
\end{align*}
Thus, choosing $\epsilon_n \ll 1/n$, and $\beta$ to be a sufficiently small constant, when $n$ is sufficiently large, we obtain that,
\begin{align*}
P(E_0 \cup E_1) \geq \frac{1}{8},
\end{align*}
and thus that,
\begin{gather*}
P(\projdist \notin C_{\alpha}(X_1,\ldots,X_n)) \geq 1/8. \qedhere
\end{gather*}
\end{proof}

\section{Proofs from Section~\ref{sec:rui}}
\label{app:secfour}
In this section, we formally verify the claim that the universal set typically includes both the pilot and the projection distribution. 
We first define the $\rho$-diameter of the set $C$ as ${\textstyle\diam_{\rho}} (C) = \sup_{P_a, P_b \in C} \rho(P_a \| P_b)$.

\begin{proposition}
\label{prop:min_diam_exact}
    Let $\pilot \in \calP$ be any estimate of $\projdist$ based on $\Data_1$. Then, the exact robust universal confidence set $C_{\alpha,n}$ defined in \eqref{eq:exact_rob_uni_set} has diameter at least $\rho(\projdist \| \pilot)$ with 
  $\truedist$-probability at least $1 - 2\alpha$:
    \begin{align*}
    \inf_{\truedist\in\trueclass} \bbP_{\truedist} \left({\textstyle\diam_{\rho}} (C_{\alpha,n}) \ge \rho(\projdist \|\pilot) \right) \ge 1 - 2 \alpha.
  \end{align*}
\end{proposition}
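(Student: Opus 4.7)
The plan is to show that both the pilot $\pilot$ and the projection $\projdist$ lie in $C_{\alpha,n}$ with probability at least $1-2\alpha$, since then the diameter of $C_{\alpha,n}$ is automatically at least $\rho(\projdist \| \pilot)$ by the very definition of $\diam_\rho$. So the work reduces to establishing two coverage-style statements and combining them with a union bound.

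First, I would invoke Theorem~\ref{thm:honest_exact} directly: the exact robust universal confidence set covers $\projdist$ with probability at least $1-\alpha$, i.e.\ $\bbP_{\truedist}(\projdist \in C_{\alpha,n}) \geq 1-\alpha$. This takes care of one of the two distributions we want to be in the set.

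Next I would argue that $\pilot$ is also in $C_{\alpha,n}$ with probability at least $1-\alpha$. The point is that, conditional on $\Data_1$, the pilot $\pilot$ is deterministic, and the null hypothesis in~\eqref{eq:exact_relative_test} for the pair $(\pilot, \pilot)$ reduces to the trivial inequality $\rho(\truedist \| \pilot) \leq \rho(\truedist \| \pilot)$. Hence $(\pilot, \pilot) \in \calS_0$ (conditionally on $\Data_1$), and the assumed Type~I error control~\eqref{eqn:typeone_robust} applied to $\Data_0$ conditional on $\Data_1$ gives
\begin{align*}
\bbE_{\truedist}\left[\phi_{\pilot,\pilot}(\Data_0) \,\big|\, \Data_1\right] \leq \alpha.
\end{align*}
Taking expectation over $\Data_1$ yields $\bbP_{\truedist}(\pilot \notin C_{\alpha,n}) \leq \alpha$, exactly in the style of the proof of Theorem~\ref{thm:honest_exact}.

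Finally, a union bound gives
\begin{align*}
\bbP_{\truedist}\bigl(\{\projdist \in C_{\alpha,n}\} \cap \{\pilot \in C_{\alpha,n}\}\bigr) \geq 1 - 2\alpha,
\end{align*}
and on this event $\diam_\rho(C_{\alpha,n}) = \sup_{P_a, P_b \in C_{\alpha,n}} \rho(P_a \| P_b) \geq \rho(\projdist \| \pilot)$, which is the claimed bound. Taking an infimum over $\truedist \in \trueclass$ finishes the proof. There is no real obstacle here; the only point worth being careful about is the conditioning on $\Data_1$ when applying Type~I error control to the randomized pair $(\pilot,\pilot)$, but this is the same device used in Theorem~\ref{thm:honest_exact}.
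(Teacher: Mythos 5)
Your proposal is correct and follows essentially the same route as the paper: show that $\pilot$ is retained by the test with conditional probability at least $1-\alpha$ (since $(\pilot,\pilot)\in\calS_0$ given $\Data_1$), combine with the coverage of $\projdist$ from Theorem~\ref{thm:honest_exact} via a union bound, and conclude that the diameter is at least $\rho(\projdist\|\pilot)$ on the intersection event. Your handling of the conditioning on $\Data_1$ is exactly the device the paper uses.
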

\begin{proof}
   Following a similar argument to that in the proof of Theorem~\ref{thm:honest_exact}, notice that for any $\truedist\in\trueclass$, $\bbP_{\truedist} (\pilot \notin C_{\alpha,n} ~|~ \Data_1) \le \alpha$. Together with a union bound, we obtain that 
   both $\projdist$ and $\pilot$ are included in the set $C_{\alpha,n}$ with $\truedist$-probability at least $1- 2\alpha$ (conditionally on $\Data_1$), and on this event, the diameter of the set is at least $\rho(\projdist \| \pilot)$ in expectation. 
\end{proof}

\section{Proofs from Section~\ref{sec:tests}}
\label{sec:proof_section_test}

\subsection{Proof of Theorem~\ref{thm:Rift_validity}}
\begin{proof}
We first work conditional on the sample $\mathcal{D}_1$ used to construct the pilot estimate $\pilot$. 
Let us define 
\begin{align*}
M_{P,\delta,\xi} := \mathbb{E}_{\truedist} \left[ |T_i(P,\pilot) + Z_i \delta - \mathbb{E}_{\truedist} T(P,\pilot)|^{2+\xi} \;\middle|\; \mathcal{D}_1 \right].
\end{align*}
Due to the added Gaussian noise{,} the variance $M_{P,\delta,0}$ is always strictly positive (i.e., larger than $\delta^2$). 
By Minkowski's inequality, conditional on $\Data_1$, we have
\begin{align*}
    M_{P,\delta,\xi}
    \le \left[ \left(\bbE_{\truedist} \left[ \left| T_i (P, \pilot) - \bbE_{\truedist} T_i (P, \pilot)\right|^{2+\xi} \;\middle|\; \Data_1 \right]\right)^{\frac{1}{2+\xi}}
    + \delta \left(\bbE |Z|^{2+\xi}\right)^{\frac{1}{2+\xi}} \right]^{2+\xi}.
\end{align*}
This means that for the assumed $\trueclass$, there exists a universal constant $C_M$ such that (conditionally on $\Data_1$) the $2+\xi$ moment of corrupted statistic $T_i (P, \pilot) + \delta Z_i$ is uniformly bounded by $C_M$ for all $P\in \model$. {This in turn implies that that unconditional expectation of $M_{P,\delta,\xi}$ is also upper bounded as $\mathbb{E}[M_{P,\delta,\xi}] = \mathbb{E}[\mathbb{E}[M_{P,\delta,\xi} | \mathcal{D}_1]] \leq C_M.$}

Conditionally on $\Data_1$, the generalized Berry-Esseen bound
for the studentized statistic \citep{bentkus_berry-esseen_1996,petrov_sums_1975} yields that, {for a universal constant $C'$},
\begin{align*}
\sup_{t} \left| \mathbb{P}_{\truedist} \left(\sqrt{n_0} \frac{\left( \overline{T}_{n_0,\delta} (P,\pilot) - \bbE_{\truedist} T (P, \pilot) \right)}{\widehat{s}_{P,\delta}} \geq t ~\middle|~ \mathcal{D}_1\right) - P(Z \geq t)\right|
 &\leq \frac{C' M_{P,\delta,\xi}}{n_0^{\xi/2} \delta^{2+\xi}}, \\
 &\leq C n_0^{-\xi/2},
\end{align*}
where $C = C' C_M \delta^{-(2+\xi)}$.

This holds in particular for $\projdist \in \mathcal{P}$. 
Consequently, we see that,
\begin{align*}
\inf_{\truedist \in \trueclass} \bbP_{\truedist} (\projdist \in \Chat_{\KL\Redi, n}) &= \inf_{\truedist \in \trueclass} \mathbb{E}_{\truedist} [  \bbP_{\truedist} (\projdist \in \Chat_{\KL\Redi, n} ~|~ \mathcal{D}_1)] \\
&\geq 1 - \sup_{\truedist \in \trueclass} \mathbb{E}_{\truedist} [  \bbP_{\truedist} (\projdist \notin \Chat_{\KL\Redi, n} ~|~ \mathcal{D}_1)] \\
&\geq 1 - [\alpha - C n^{-\xi /2}],
\end{align*}
as claimed.
\end{proof}

\subsection{Proof of Proposition~\ref{prop:DP_ex1}}\label{sec:DP_ex1_proof}

\begin{proof}
Recall that $X_i \overset{iid}{\sim} \Bern(\epsilon_n)$ for $\epsilon_n \le (1-\alpha)/n$ and our hypothesized model is $\model=\{\Bern(p): p\in\{0, 1/2\}\}$. For a fixed $\beta >0$,
\begin{align*}
  \DP_\beta(\Bern(\epsilon_n) \| \Bern(p) )
  &= C + (p^{1+\beta} + (1-p)^{1+\beta}) - (1 + 1/\beta) [\epsilon_n p^\beta + (1-\epsilon_n) (1-p)^{\beta}]   
\end{align*}
where $C = \sum_{x\in{0,1}} \epsilon_n^{(1+\beta)x} (1-\epsilon_n)^{(1+\beta)(1-x)}$. The DP divergences from $\truedist$ to the elements of the working model are
\begin{align*}
  \DP_\beta(\Bern(\epsilon_n) \| \Bern(0) )
  &\propto 1 - (1 + 1/\beta) (1-\epsilon_n)
  = (1 + 1/\beta) \epsilon_n - 1/\beta\\
  \DP_\beta(\Bern(\epsilon_n) \| \Bern(1/2) )
  &\propto - (1/2)^\beta / \beta.
\end{align*}
Therefore, the DP projection is
\begin{align*}
  \projdist =
\begin{cases}
    \Bern(0),& \text{if } \epsilon_n \le (1 -(1/2)^{\beta}) / (1 + \beta),\\
   \Bern(1/2),& \text{otherwise}.
\end{cases}
\end{align*}

Since $\epsilon_n < (1-\alpha)/n$, the projection will be $\Bern(0)$ for any $\beta >0$, provided that $n \ge (1-\alpha) (1 + \beta) / (1- (1/2)^\beta)$.

Now we turn our attention to constructing the DP \Redi set. For any fixed $(P,Q) \in \model^2$, the split statistic is uniformly bounded, i.e., $|T_i (P,Q) - \bbE_{\truedist} T (P,Q)| \le 1 + 1/\beta$ since 
\begin{align*}
  T_i (P, Q) 
    =& \sum_{x\in \{0,1\}} \left[ (p^x (1-p)^{1-x})^{1+\beta} - (q^x (1-q)^x)^{1+\beta}  \right]\\
     &- \left( 1+\frac{1}{\beta} \right) \left[  (p^{X_i} (1-p)^{1-X_i})^{\beta} - q^{X_i} (1-q)^{1-X_i} \right] (X_i).
\end{align*}
By Hoeffding's inequality, $\Chat_{\DP\Redi,1+1/\beta,n}$ ensures nominal coverage for any estimator $\pilot$, since we have that:
\begin{align*}
  \bbP_{\truedist} \left(\projdist \notin \Chat_{\DP\Redi,1+1/\beta,n} \right)
  = \bbE_{\truedist} \left(\bbP_{\truedist} \left( \overline{T}_{n_0} (\projdist,\pilot) > \frac{\beta + 1}{\beta} \sqrt{\frac{\log(1/\alpha)}{2 n_0}} ~\middle|~\Data_1 \right)\right) \le \alpha.
\end{align*}
\end{proof}

\subsection{Proof of Proposition~\ref{prop:HEL_ex1} }
\label{sec:H_ex1_proof}

\begin{proof}
Note that Hellinger projection is $\Bern(0)$ for $n>6$ (as long as $\epsilon_n < 0.146$) since
\begin{align*}
  \HEL^2(\Bern(\epsilon_n), \Bern(0)) 
    &= 1 - \sqrt{1 - \epsilon_n}, \\
  \HEL^2(\Bern(\epsilon_n), \Bern(1/2)) 
    &= 1 - \sqrt{\epsilon_n / 2} - \sqrt{(1-\epsilon_n) / 2}.
\end{align*}

Similarly, $\nu$-approximate Hellinger projection is
$\Bern(0)$ if $\epsilon_n < 0.051$ or $\model$ otherwise. Hereafter we only consider $n > 20$ where $\projdist_\nu = \projdist$.

The minimum Hellinger Distance Estimator (MHDE) is
\begin{align*}
  \argmin_{p\in\{0, 1/2\}} \HEL^2 (\bbP_{n_1}, \Bern(p)) 
  =  \argmax_{p\in \{0, 1/2\}} \sqrt{p \overline{X}_{n_1} } + \sqrt{(1-p) (1 - \overline{X}_{n_1})}
\end{align*}
where $\overline{X}_{n_1} = \sum_{i\in\calI_1} X_i / n_1$.
Thus,
\begin{align*}
  \pilot = \begin{cases}
    \Bern(0), & \overline{X}_{n_1} < 0.5-1/(2\sqrt{2}) \approx 0.146\\
    \Bern(1/2), & \text{Otherwise.}\\    
  \end{cases}
\end{align*}

This implies that the advertised coverage is guaranteed when $\overline{X}_{n_1} < 0.146$. Otherwise, Corollary~\ref{thm:HELRedi_validity} ensures the asymptotic (approximate) validity.
\end{proof}

\subsection{Proof of Proposition~\ref{prop: IPM_gamma} }

\begin{proof} The proof follows directly by the triangle inequality.
\begin{align*}
2 \bbE_{\truedist} T (P_0, P_1)
    &=  \bbE_{P_0} f^*_{(P_0, P_1)} + \bbE_{P_1} f^*_{(P_0, P_1)} - 2 \bbE_{\truedist} f^*_{(P_0, P_1)}\\
    &= 2 \left[ \bbE_{\truedist} f^*_{(P_0, P_1)} - \bbE_{\truedist} f^*_{(P_0, P_1)} \right] - \bbE_{P} f^*_{(P_0, P_1)} - \bbE_{P_1} f^*_{(P_0, P_1)}\\ 
    &= 2 \left[ \bbE_{\truedist} f^*_{(P_1, P_0)} - \bbE_{P} f^*_{(P_1, P_0)} \right] - \IPM_\calF (P_0, P_1) \\     
    &\le 2 \IPM_\calF (\truedist, P_0) - \IPM_\calF (P_0, P_1)\\
    &\le 2 \IPM_\calF (\truedist, P_0) - \left[\IPM_\calF(\truedist, P_1) - \IPM_\calF(\truedist, P_0)\right]\\
    &= 3 \IPM_\calF(\truedist, P_0) - \IPM_\calF(\truedist, P_1) &
\end{align*}
where the last inequality is by the triangle inequality.
\end{proof}

\section{Proofs from Section~\ref{sec:size_set}}
\label{sec:proof_section_size}

\subsection{Proof of Theorem~\ref{thm:exact_optimal_shrink}}
\label{sec: proof_delta_bound}
Recall that the exact robust universal confidence set based on the Hoeffding bound is
\begin{align*}
    \Chat_{\rho\Redi,\sigma,n} = \left\{ P\in \model : \overline{T}_{n_0} (P,\pilot) \le B \sqrt{\frac{ \log (1/\alpha)}{2 n_0}  } \right\}.
\end{align*}
We denote $t_{\alpha,n} := B \sqrt{\frac{ \log (1/\alpha)}{2 n_0}}$ throughout the proof, and use $\widehat{C}$ to denote $\Chat_{\rho\Redi,B,n}$.  
Throughout the proof, we fix a projection distribution $\projdist$ and assume an equal split between $\Data_0$ and $\Data_1$.

\begin{proof} [Proof of Theorem~\ref{thm:exact_optimal_shrink}]
Denote $\delta_\nu (P, Q) = \rho(\truedist \| P) - \nu \rho(\truedist \| Q)$ for any $P, Q \in \model$. 
We want to show that, for fixed $\kappa > 0$, for some finite $M > 0$,
\begin{align*}
  \bbP_{\truedist} \left( \sup_{P \in \Chat} \delta_\nu(P, \projdist) > M  (\eta_n + \tilde{\epsilon}_n) \right) \le \kappa,
\end{align*}
where $\tilde{\epsilon}_n = \mathfrak{R}_n(\calF_{T,\mathcal{P}}) \vee t_{\alpha,n}$ and for all $n$ large enough. 

Let the event $E$ be $\delta_1 (\pilot, \projdist) \le (M/\nu) \eta_n$ which happens with probability at least $1-\kappa/2$. Then,
\begin{align}
  &\bbP_{\truedist} \left( \sup_{P \in \Chat} \delta_\nu(P, \projdist) > M  (\eta_n + \tilde{\epsilon}_n) \right)\nonumber\\
  &\le \bbP_{\truedist} \left( \sup_{P \in \Chat} \delta_\nu(P, \projdist) > M  (\eta_n + \tilde{\epsilon}_n) ~\middle|~ E \right) + \frac{\kappa}{2} \nonumber\\
  &= \bbP_{\truedist} \left( \sup_{P \in \Chat} \delta_\nu(P, \pilot) > M  (\eta_n + \tilde{\epsilon}_n) - \nu \delta_1(\pilot, \projdist) ~\middle|~ E \right) + \frac{\kappa}{2} \nonumber\\
  &\le \bbP_{\truedist} \left( \sup_{P \in \Chat} \delta_\nu(P, \pilot) > M  \tilde{\epsilon}_n ~\middle|~ E \right) + \frac{\kappa}{2}.  \label{eq:want}
\end{align}
Thus, it suffices to show that conditional on $E$, with $\truedist$-probability at most $\kappa/2$, all $P\in \model$ such that $\delta_\nu(P, \pilot) > M \tilde{\epsilon}_n$ are not included in $\Chat$. Hereafter we condition on event $E$.

Let $\calS_\epsilon := \{P \in \calP : \delta_\nu(P, \pilot) > {c_1} \epsilon\}$.
From Assumption~\ref{ass:T_antisym} and ~\ref{ass:T_ub}, we have that
\begin{align*}
     &\bbP_{\truedist} \left( \forall P\in \calS_\epsilon, P \in   \Chat_{\rho\Redi,B,n} ~\middle|~ \Data_1 \right) \\
     &= \bbP_{\truedist} \left( \sup_{P\in \calS_\epsilon} \overline{T}_{n_0}(\pilot, P)  \ge - t_{\alpha,n}  ~\middle|~ \Data_1 \right),\\
     &\le \bbP_{\truedist} \left( \sup_{P\in\calS_\epsilon} \left[\overline{T}_{n_0}(\pilot, P) - \bbE_{\truedist} T(\pilot, P)\right]  \ge \epsilon - t_{\alpha,n}  ~\middle|~ \Data_1 \right).
\end{align*}
where the inequality is by Assumption~\ref{ass:T_ub} noticing that conditional on $\Data_1$, we have that
{
\begin{align*}
    \sup_{P \in \calS_{\epsilon}} [ - \bbE_{\truedist} T(\pilot, P)]
        \geq c_1^{-1} \sup_{P \in \calS_{\epsilon}} \left( - \nu \rho (\truedist \| \pilot) + \rho (\truedist \| P) \right)  
        = c_1^{-1}\sup_{P \in \calS_{\epsilon}} \delta_\nu (P, \pilot) 
        \geq \epsilon.
\end{align*}
}
To ease the notation, denote the centered statistic as $\widetilde{T}_P := \overline{T}_{n_0}(\pilot, P) - \bbE_{\truedist} T(\pilot, P)$.
Since $|T(\pilot,P)| \le B$, any change in $X_i$ can change $\sup_{P\in \calS_\epsilon}\widetilde{T}_P$ at most $2B/ n_0$. By McDiarmid's inequality, we have that
\begin{align}\label{eq:mcdiarmid}
  \bbP_{\truedist} \left(\sup_{P\in \calS_\epsilon} \widetilde{T}_P \ge  \epsilon - t_{\alpha,n} \;\middle|\; \Data_1\right) 
  \le \exp\left( - \frac{n\left(\epsilon - t_{\alpha,n} -  \bbE_{\truedist} \left[\sup_{P\in \calS_\epsilon} \widetilde{T}_P\right] \right)^2}{2 B^2}\right).
\end{align}

Now we focus on bounding $\bbE_{\truedist} [\sup_{P\in \calS_\epsilon} |\widetilde{T}_P|]$ (which is greater than $\bbE_{\truedist} [\sup_{P\in \calS_\epsilon} \widetilde{T}_P ]$). 
Let $\calF_{T,\mathcal{P}} = \{T(\cdot; \pilot, P) : P \in \model\}$. The symmetrization lemma \citep[Lemma~2.3.1,][]{vaart_weak_1996} states that
\begin{align*}
  \bbE_{X} \sup_{f\in \calF_{T,\mathcal{P}}} \frac{1}{n_0} \left| \sum_{i=1}^{n_0}[f (X_i) - \bbE_\truedist f(X_i)] \right| 
  \le  2 \bbE_{X,\varepsilon} \sup_{f\in \calF_{T,\mathcal{P}}} \left|\frac{1}{n_0} \sum_{i=1}^{n_0} R_i f(X_i)\right| := 2 \Re_{n_0} (\calF_{T,\mathcal{P}})
\end{align*}
where $R_i$ are iid Rademacher random variables.
\end{proof}

\section{Finite-sample valid confidence set for bounded test statistic}
\label{sec:finite_set}
\label{sec: finite_sample_bound}

Suppose the split statistics are uniformly bounded, i.e., $|T_i (P)| \le B$ for all $i$. Classic Cram{\'e}r-Chernoff bounds yield finite-sample valid exact $(\nu = 1)$ or approximate $(\nu > 1)$ confidence sets.
\begin{proposition}[Hoeffding \citep{hoeffding_probability_1963}] 
\label{prop:fs}
$\Chat_{\textsc{HF}}$ is a uniformly valid $1-\alpha$ exact $(\nu = 1)$ or approximate $(\nu > 1)$ confidence set for $\projdist$ where 
\begin{align}\label{eq: ConfSet_HEL_Hoeffding}
  \Chat_{\textsc{HF}} = \left\{P \in \model : \overline{T}_{n_0} (P) \le \sqrt{\frac{B^2}{2n_0} \log \left(\frac{1}{\alpha}\right)} \right\}.
\end{align}
 
\end{proposition}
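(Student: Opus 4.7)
The statement is the Hoeffding analogue of the second part of Theorem~\ref{thm:general}, and the plan is to obtain it via the same two-step template used throughout Section~\ref{sec:rui}: (i) interpret $\Chat_{\textsc{HF}}$ as arising from inverting a family of (approximate) relative fit tests whose thresholds are calibrated by a finite-sample tail bound, and (ii) invoke the generic coverage guarantees of Section~\ref{sec:RUCset}. Concretely, I would view $\Chat_{\textsc{HF}}$ as the acceptance region of the test family $\phi_{P_0, \pilot, \nu}(\Data_0) := \I\!\left[\overline{T}_{n_0}(P_0, \pilot) > t_{\alpha, n}\right]$ with $t_{\alpha, n} := \sqrt{B^2 \log(1/\alpha)/(2 n_0)}$, so that $\Chat_{\textsc{HF}}$ is exactly the confidence set in \eqref{eqn:robust_approx_set}.

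The next step is to verify Type~I error control \eqref{eqn:typeone_robust} for this test family. I would condition on $\Data_1$, which makes $\pilot$ deterministic and renders $\{T_i(P_0, \pilot)\}_{i \in \calI_0}$ an i.i.d.\ sample of bounded summands. Under any $(P_0, \pilot) \in \calS_0$, Assumption~\ref{ass:T_ub} forces $\bbE_{\truedist} T(X; P_0, \pilot) \le 0$, so Hoeffding's inequality applied to the centered mean gives
\begin{align*}
\bbP_{\truedist}\!\left(\overline{T}_{n_0}(P_0, \pilot) > t_{\alpha, n} \,\middle|\, \Data_1\right)
\le \bbP_{\truedist}\!\left(\overline{T}_{n_0} - \bbE_{\truedist} T > t_{\alpha, n} \,\middle|\, \Data_1\right)
\le \alpha,
\end{align*}
after which marginalization over $\Data_1$ yields \eqref{eqn:typeone_robust}. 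In the approximate case ($\nu > 1$) I also need the powerlessness condition \eqref{ass:powerless}: this follows for free from Assumption~\ref{ass:T_antisym}, since $T(X; P, P) = 0$ forces $\overline{T}_{n_0}(P, P) \equiv 0 \le t_{\alpha, n}$, so the test never rejects when compared to itself.

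With both conditions in hand, the conclusion follows by invoking Theorem~\ref{thm:honest_exact} in the exact case ($\nu = 1$) and Theorem~\ref{thm:honest_approx} in the approximate case ($\nu > 1$). No part of this argument is really hard; the only mildly delicate point is the bookkeeping for the Hoeffding constant, since ``$|T_i(P)| \le B$'' must be translated into a range bound on the centered summands (ranges of $2B$ vs.\ $B$ give the usual factor-of-two ambiguity in the threshold). This is absorbed into the definition of $B$ used in \eqref{eqn:hoef}, where $B$ is taken as a bound on $|T_i - \bbE_{\truedist} T_i|$, so the stated threshold $t_{\alpha, n}$ is exactly the one produced by Hoeffding's inequality.
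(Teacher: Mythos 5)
Your proposal is correct and follows essentially the same route as the paper, which proves this proposition only implicitly (it is the Hoeffding instance of Theorem~\ref{thm:general}, part 2, and mirrors the Hoeffding step in the proof of Proposition~\ref{prop:DP_ex1}): condition on $\Data_1$, use Assumption~\ref{ass:T_ub} to get $\bbE_{\truedist} T(X;P_0,\pilot)\le 0$ on the null, apply Hoeffding's inequality to the bounded i.i.d.\ summands, verify \eqref{ass:powerless} via anti-symmetry, and invoke Theorems~\ref{thm:honest_exact} and~\ref{thm:honest_approx}. One small caveat: the constant bookkeeping is not actually resolved by taking $B$ to bound $|T_i-\bbE_{\truedist}T_i|$ as you suggest, since that still permits a range of $2B$ and hence yields the threshold $B\sqrt{2\log(1/\alpha)/n_0}$, twice the one displayed; the displayed threshold corresponds to the range of $T_i$ itself being at most $B$, a factor-of-two looseness the paper's own statement shares.
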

Typically, Hoeffding's bound does not scale with the variance which results in a conservative confidence set. Confidence set based on Bernstein's inequality is given as follows.
\begin{proposition}[Bernstein]
$\Chat_{\textsc{BS}}$ is a uniformly valid $1-\alpha$ exact $(\nu = 1)$ or approximate $(\nu > 1)$ confidence set for $\projdist$ where 
\begin{align}\label{eq: ConfSet_HEL_Bernstein}
  \Chat_{\textsc{BS}} = \left\{ P\in\model : \overline{T}_{n_0}(P) \le \sqrt{\frac{2 S^2 \log (1/\alpha)}{n_0} + \frac{B^2}{9} \left( \frac{\log (1/\alpha)}{n_0} \right)^2} + \frac{B \log (1/\alpha)}{3 n_0} \right\}
\end{align}
where $S^2 = S^2(P) =  (c_1 \nu)^2 [\rho (\truedist \| P) + \rho (\truedist \| \pilot)]$.
\end{proposition}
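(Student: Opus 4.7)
The plan is to invoke Bernstein's inequality applied conditionally on $\Data_1$, and then invert the resulting tail bound to extract the stated threshold. I would first fix $\truedist\in\trueclass$ and condition on $\Data_1$, so that $\pilot$ is deterministic and the summands $T_i(P,\pilot)$ are i.i.d.\ with values bounded by $B$. Under the relevant null $\nu\rho(\truedist\|P)\le \rho(\truedist\|\pilot)$, which holds for $P=\projdist$ in the exact case $\nu=1$ and for the candidate $P=\projdist$ in the approximate case when $\pilot$ is treated as a reference, Assumption~\ref{ass:T_ub} yields $\bbE_{\truedist}[T(P,\pilot)\mid \Data_1]\le 0$, and hence
\[
  \bbP_{\truedist}\bigl(\overline{T}_{n_0}(P,\pilot)\ge t\mid \Data_1\bigr)\le \bbP_{\truedist}\bigl(\overline{T}_{n_0}(P,\pilot)-\bbE_{\truedist} T(P,\pilot)\ge t\mid \Data_1\bigr).
\]
Applying the classical Bernstein inequality to the centered i.i.d.\ sample on the right-hand side bounds this by $\exp\bigl(-n_0 t^2/(2\sigma_P^2 + 2Bt/3)\bigr)$, where $\sigma_P^2=\bbV_{\truedist}[T(P,\pilot)\mid \Data_1]$.

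The second step is to invoke the variance bound $\sigma_P^2\le S^2(P)=(c_1\nu)^2[\rho(\truedist\|P)+\rho(\truedist\|\pilot)]$; this is the crucial ``variance-to-divergence'' inequality that ties the second moment of the split statistic to the underlying divergences. Given that bound, setting the Bernstein exponent equal to $\log(1/\alpha)$ produces the quadratic
\[
  n_0 t^2 - \tfrac{2B\log(1/\alpha)}{3}\, t - 2 S^2(P) \log(1/\alpha)\ge 0,
\]
whose positive root equals
\[
  t_\star = \frac{B\log(1/\alpha)}{3n_0} + \sqrt{\frac{2 S^2(P) \log(1/\alpha)}{n_0} + \frac{B^2}{9}\left(\frac{\log(1/\alpha)}{n_0}\right)^{\!2}},
\]
which is exactly the threshold in the definition of $\Chat_{\textsc{BS}}$. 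Taking expectation over $\Data_1$ converts the conditional bound into the unconditional coverage statement, in complete parallel with the end of the proof of Theorem~\ref{thm:Rift_validity}.

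For the approximate case $\nu>1$ I would then split on whether $\pilot\in\projdist_\nu$, mirroring the argument used for Theorem~\ref{thm:honest_approx}. On $\{\pilot\in\projdist_\nu\}$, condition~\eqref{ass:powerless} ensures that $\pilot$ itself lies in $\Chat_{\textsc{BS}}$ with probability at least $1-\alpha$, so $\projdist_\nu\cap \Chat_{\textsc{BS}}\ne\emptyset$; on the complement, the pair $(\projdist,\pilot)$ belongs to the null set of Assumption~\ref{ass:T_ub} and the Bernstein bound above applied at $P=\projdist$ guarantees that $\projdist\in \Chat_{\textsc{BS}}$. A union bound over the two cases recovers the advertised $1-\alpha$ coverage.

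The main obstacle is justifying the variance bound $\sigma_P^2\le S^2(P)$, since Assumption~\ref{ass:T_ub} in its stated generality only controls first moments of $T$. I expect this step to require a $\rho$-specific supplementary argument: for the Hellinger and IPM-based tests, boundedness of $\psi$ and of the witness function combined with a Cauchy--Schwarz estimate relate $\bbE_{\truedist}[T^2]$ to Hellinger or mean-discrepancy affinities, which in turn are controlled by $\HEL^2(\truedist\|P)+\HEL^2(\truedist\|\pilot)$; for the KL and DP tests the crude bound $\bbE_{\truedist} T^2\le B\,\bbE_{\truedist}|T|$ together with the plug-in representation of the statistic links the first absolute moment to the sum of the two divergences. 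Once that structural bound is in place, the remainder is routine Bernstein algebra combined with the conditioning scheme already used in the proof of Theorem~\ref{thm:Rift_validity}.
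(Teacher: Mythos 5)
Your skeleton is the right one, and it is essentially the only proof the paper could intend: the proposition is stated in Appendix~\ref{sec:finite_set} without any accompanying argument, as a drop-in Bernstein analogue of the Hoeffding set~\eqref{eq: ConfSet_HEL_Hoeffding}. Conditioning on $\Data_1$, using Assumption~\ref{ass:T_ub} to get $\bbE_{\truedist}[T(\projdist,\pilot)\mid\Data_1]\le 0$ on the relevant event, applying Bernstein to the bounded i.i.d.\ summands, solving the quadratic to recover the stated threshold, and then running the exact/approximate case split of Theorems~\ref{thm:honest_exact} and~\ref{thm:honest_approx} is all correct and matches the paper's general template.

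The genuine gap is the one you flagged yourself: the inequality $\bbV_{\truedist}[T(P,\pilot)\mid\Data_1]\le S^2(P)=(c_1\nu)^2[\rho(\truedist\|P)+\rho(\truedist\|\pilot)]$, without which the stated threshold is not a valid Bernstein radius. This does not follow from Assumptions~\ref{ass:T_antisym} and~\ref{ass:T_ub}, which control only the (signed) first moment of $T$, and the supplementary arguments you sketch do not close it in general. For the KL and DP statistics, $\bbE_{\truedist}T(P,\pilot)$ is a \emph{difference} of divergences, so the crude chain $\bbE_{\truedist}T^2\le B\,\bbE_{\truedist}|T|$ leaves you needing to bound $\bbE_{\truedist}|T|$ by the \emph{sum} $\rho(\truedist\|P)+\rho(\truedist\|\pilot)$, which is false without further structure (e.g.\ for KL, $\bbE_{\truedist}|\log(\widehat p_1/p)|$ can be large even when both KL divergences are moderate). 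For the split Scheff\'e statistic the variance is $\truedist(\calA)(1-\truedist(\calA))$, which is of constant order even when both $\TV(\truedist,P)$ and $\TV(\truedist,\pilot)$ are small, so the claimed variance bound actually fails there. The bound is genuinely available for the Hellinger $\rho$-test statistic, where $\bbE_{\truedist}\psi^2(\sqrt{\widehat p_1/p})\lesssim \HEL^2(\truedist,P)+\HEL^2(\truedist,\pilot)$ is a standard fact from the $\rho$-estimation literature \citep{baraud_rho-estimators_2018,baraud_tests_2020}. So your proof is complete modulo an additional hypothesis of the form $\bbV_{\truedist}[T(P,Q)]\le S^2(P)$, which should be stated explicitly as an assumption (and is implicitly being assumed by the paper in writing down $S^2$); as written, the proposition cannot be proved uniformly over all the test statistics of Section~\ref{sec:tests}.
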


However, $\Chat_{\textsc{BS}}$ above requires knowledge of $\truedist$ to compute $S$. Empirical Bernstein bounds \citep{maurer_empirical_2009,waudby-smith_estimating_2023} address this issue.
\begin{proposition}[Empirical Bernstein \citep{waudby-smith_estimating_2023}] Denote $\tilde{T}_i (P,Q) = (T (X_i; P, Q)+ B) / (2B)$. $\Chat_{\textsc{EBS}}$ is a valid $1-\alpha$ confidence set for exact $(\nu = 1)$ or approximate $(\nu > 1)$ projection where
\begin{align}\label{eq: ConfSet_HEL_emp_Bernstein}
  \Chat_{\textsc{EBS}} = \left\{ P\in\model : \sum_{i=1}^{n_0} \lambda_i \tilde{T}_i (P, \pilot)\le \log(1/\alpha) + \sum_{i=1}^{n_0} v_i \psi_E (\lambda_i) \right\},
\end{align}  
$v_i = (\tilde{T}_i (P, \pilot) - \overline{\tilde{T}}_{i - 1} (P, \pilot))^2$, $\psi_E(\lambda) = - (\log(1-\lambda) - \lambda) $, and 
\begin{align*}
    \lambda_i = \sqrt{\frac{2\log(1/\alpha)}{n_0 \hat{S}_{i - 1}^2}} \wedge c, \qquad
    \hat{S}_{i}^2 = \frac{1/4 + \sum_{l=1}^i (\tilde{T}_l - \overline{\tilde{T}}_l)^2}{i + 1}, \qquad
    \overline{\tilde{T}}_i = \frac{1}{i+1} \sum_{l=1}^i \tilde{T}_l,
\end{align*}
for some $c \in (0,1).$
\end{proposition}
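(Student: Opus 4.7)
The plan is to invoke the capital-process (martingale) construction of \citet{waudby-smith_estimating_2023} applied conditionally on the first split, and then convert the resulting one-sided confidence bound into validity of the relative-fit test whose acceptance region defines $\Chat_{\textsc{EBS}}$. Once validity of that test is established, the coverage claim for $\projdist$ (exact case) or for $\projdist_\nu$ (approximate case) follows by the same inversion argument used in Theorems~\ref{thm:honest_exact} and \ref{thm:honest_approx}.

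First, I would condition on $\Data_1$ so that $\pilot$ becomes a deterministic element of $\model$. Under this conditioning the observations $(X_i)_{i\in\calI_0}$ are iid from $\truedist$, the rescaled summands $\tilde{T}_i(P,\pilot) \in [0,1]$ for every fixed $P\in\model$, and both $\lambda_i$ and $v_i$ are predictable with respect to the filtration $\calF_i := \sigma(X_1,\dots,X_i,\Data_1)$ because they are computed from only $X_1,\dots,X_{i-1}$ and $\Data_1$. Second, for fixed $P$ I would set $\mu(P) := \bbE_{\truedist}[\tilde{T}_1(P,\pilot) \mid \Data_1]$ and form the WS--R capital process
\begin{align*}
M_t(P) \;:=\; \prod_{i=1}^{t} \exp\bigl(\lambda_i(\tilde{T}_i(P,\pilot) - \mu(P)) \;-\; v_i\,\psi_E(\lambda_i)\bigr).
\end{align*}
The key ingredient, established in \citet{waudby-smith_estimating_2023}, is the one-sided MGF bound
$\bbE[\exp(\lambda(Y-\mu) - (Y-\hat{\mu})^2\psi_E(\lambda)) \mid \calF_{i-1}] \le 1$ for $Y\in[0,1]$, predictable $\lambda\in[0,1)$, and predictable $\hat{\mu}$. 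This makes $M_t(P)$ a non-negative $\truedist$-supermartingale with $M_0(P)=1$, so Ville's maximal inequality yields $\bbP_{\truedist}(\sup_{t\le n_0} M_t(P) \ge 1/\alpha \mid \Data_1) \le \alpha$. Evaluating at $t=n_0$ and rearranging reproduces (up to the translation $\mu(P) \leftrightarrow \bbE_{\truedist}T(X;P,\pilot)$ via $\tilde{T} = (T+B)/(2B)$) the defining inequality of $\Chat_{\textsc{EBS}}$ as a valid $(1-\alpha)$-level acceptance region for the null $\bbE_{\truedist}T(X;P,\pilot)\le 0$.

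Third, I would invoke Assumption~\ref{ass:T_ub} to apply this acceptance region at $P=\projdist$. In the exact case ($\nu=1$), Assumption~\ref{ass:T_ub} gives $c_1 \bbE_{\truedist}T(X;\projdist,\pilot) \le \rho(\truedist\|\projdist) - \rho(\truedist\|\pilot) \le 0$, so the null holds and the conditional bound yields $\bbP_{\truedist}(\projdist \in \Chat_{\textsc{EBS}} \mid \Data_1) \ge 1-\alpha$; taking expectation over $\Data_1$ gives the theorem. In the approximate case ($\nu>1$), I would decompose on $E := \{\pilot \notin \projdist_\nu\}$: on $E$ the same inequality gives $\bbE_{\truedist}T(X;\projdist,\pilot)\le 0$, so $\projdist \in \Chat_{\textsc{EBS}}$ by the exact argument; on $E^{\complement}$ anti-symmetry of $T$ (Assumption~\ref{ass:T_antisym}) forces $\tilde{T}_i(\pilot,\pilot)\equiv 1/2$, from which a direct check of the defining inequality of $\Chat_{\textsc{EBS}}$ (together with $\psi_E(\lambda) \ge \lambda/2$ for $\lambda\in[0,1/2]$, which controls the drift term and handles the shift by $(1/2)\sum_i\lambda_i$) shows that $\pilot$ itself lies in $\projdist_\nu \cap \Chat_{\textsc{EBS}}$. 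Combining the two cases exactly as in the proof of Theorem~\ref{thm:honest_approx} completes the $1-\alpha$ coverage claim.

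The main obstacle is the second step: confirming that the one-sided MGF bound with the particular empirical variance $v_i=(\tilde{T}_i-\overline{\tilde{T}}_{i-1})^2$ and the particular predictable tuning $\lambda_i$ used in the statement satisfies the supermartingale property. This is precisely the content of \citet{waudby-smith_estimating_2023} and so can be cited verbatim; the remaining work---predictability under the sample-split, and the $E/E^{\complement}$ decomposition---is routine bookkeeping once the conditional supermartingale is in hand.
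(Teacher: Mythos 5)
Your overall architecture is the natural (indeed, the only) one here, and for what it is worth the paper supplies no proof of this proposition at all --- it is stated as a direct citation of \citet{waudby-smith_estimating_2023} --- so conditioning on $\Data_1$, invoking the empirical-Bernstein supermartingale together with Ville's inequality, and then running the inversion arguments of Theorems~\ref{thm:honest_exact} and~\ref{thm:honest_approx} is exactly what is intended. One terminological slip in your second step: $v_i=(\tilde T_i-\overline{\tilde T}_{i-1})^2$ is \emph{not} predictable, since it depends on $X_i$; this is harmless because the MGF bound you quote correctly places $(Y-\hat\mu)^2$ inside the conditional expectation with only $\hat\mu$ and $\lambda$ predictable, but you should not describe $v_i$ as predictable.

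The genuine gap is in the translation from the Ville bound to the set as written, and it resurfaces in your $E^{\complement}$ step. Ville's inequality applied to $M_{n_0}(P)$ gives, with conditional probability at least $1-\alpha$,
\[
\sum_{i}\lambda_i\,\tilde T_i(P,\pilot)\;\le\;\log(1/\alpha)+\sum_{i} v_i\,\psi_E(\lambda_i)+\mu(P)\sum_{i}\lambda_i,
\]
and under the null $\bbE_{\truedist}T(X;P,\pilot)\le 0$ one only gets $\mu(P)\le 1/2$, so the residual term $\tfrac12\sum_i\lambda_i$ does not vanish: ``rearranging'' does \emph{not} reproduce the displayed inequality defining $\Chat_{\textsc{EBS}}$ unless that term is added to the right-hand side (equivalently, unless the left-hand side is read as $\sum_i\lambda_i(\tilde T_i-\tfrac12)$). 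The same missing term breaks your direct check that $\pilot\in\Chat_{\textsc{EBS}}$ on $E^{\complement}$: with $\tilde T_i(\pilot,\pilot)\equiv 1/2$ one has $v_i=1/(4i^2)$, so $\sum_i v_i\psi_E(\lambda_i)$ stays bounded while $\tfrac12\sum_i\lambda_i$ grows linearly in $n_0$ (the $\lambda_i$ stabilize at $c$ once $\hat S_{i-1}^2\asymp 1/i$ is small enough), and the inequality $\psi_E(\lambda)\ge\lambda/2$ that you invoke to absorb the shift is false: $\psi_E(\lambda)=-\log(1-\lambda)-\lambda=\lambda^2/2+O(\lambda^3)$, e.g.\ $\psi_E(0.1)\approx 0.0054<0.05$. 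So, for the region as literally displayed, condition~\eqref{ass:powerless} fails and your proof of the $\nu>1$ case collapses. The fix is to carry the null centering $\mu_0=1/2$ (the value of $\bbE\tilde T$ corresponding to $\bbE T=0$) through: with the corrected region $\sum_i\lambda_i(\tilde T_i-\tfrac12)\le\log(1/\alpha)+\sum_i v_i\psi_E(\lambda_i)$ your supermartingale argument gives exact validity under $\bbE_{\truedist}T\le 0$, and $\pilot\in\Chat_{\textsc{EBS}}$ holds deterministically on $E^{\complement}$ because the left-hand side is identically zero. You should either make that correction explicit or note that the proposition's display appears to omit the centering term.
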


When the variance or an upper bound of the variance is known, Bentkus's bound \citep{bentkus_domination_2006} is sharper than any Cram{\'e}r-Chernoff type bounds. See \citet{pinelis_optimal_2013,kuchibhotla_near-optimal_2021} for details. Define a Bernoulli random variable $G = G(S^2, B)$ as
\begin{align*}
  \bbP \left( G = B  \right) = \frac{S^2}{S^2 + B^2} := p_{SB}, \qquad
  \bbP \left( G = - \frac{S^2}{B}  \right) = 1 - p_{SB}
\end{align*}

\begin{proposition}[Bentkus \citep{bentkus_domination_2006}] $\Chat_{\textsc{BK}}$ is a valid $1-\alpha$ confidence set for is a valid $1-\alpha$ confidence set for exact $(\nu = 1)$ or approximate $(\nu > 1)$ projection where
\begin{align}\label{eq: ConfSet_HEL_Bentkus}
  \Chat_{\textsc{BK}} = \left\{ P\in\model : \overline{T}_{n_0}(P{,\pilot}) \le q(\alpha) \right\}
\end{align}
where $q(\alpha)$ is the solution to 
\begin{align*}
  P_2 \left( u; \sum_{i\in \calI_0} G_i \right)
   := \inf_{t \le u} \frac{\bbE_{\truedist} \left( \sum_{i\in \calI_0} G_i - t  \right)_+^2}{(u -t )_+^2}  = \alpha,
\end{align*}
and $S^2 = S^2(P) =  (c_1 \nu)^2 [\rho (\truedist \| P) + \rho (\truedist \| \pilot)]$.
\end{proposition}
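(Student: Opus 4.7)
The plan is to reduce the validity claim to Type~I error control for the implicit family of split tests
\[
\phi_{P_0, \pilot, \nu}(\Data_0) = \I\bigl[\overline{T}_{n_0}(P_0, \pilot) > q(\alpha)\bigr],
\]
and then invoke Bentkus's inequality for sums of bounded random variables. Once Type~I error control is in hand, Theorem~\ref{thm:honest_exact} (for $\nu = 1$), and Theorem~\ref{thm:honest_approx} together with condition~\eqref{ass:powerless} (for $\nu > 1$), immediately yield the stated coverage. So the content of the proof is entirely a pointwise tail bound on $\overline{T}_{n_0}$.

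First I would condition on the pilot sample $\Data_1$, making $\pilot$ deterministic and $\{T_i(P_0, \pilot)\}_{i\in\calI_0}$ i.i.d.\ and bounded by $B$. Under the null hypothesis $\nu\rho(\truedist\|P_0) \le \rho(\truedist\|\pilot)$, Assumption~\ref{ass:T_ub} gives
\[
c_1\,\bbE_{\truedist} T(X; P_0, \pilot) \;\le\; \nu\rho(\truedist\|P_0) - \rho(\truedist\|\pilot) \;\le\; 0,
\]
so the event $\overline{T}_{n_0}(P_0, \pilot) > q(\alpha)$ implies $\sum_{i\in\calI_0}\bigl[T_i - \bbE T_i\bigr] > n_0\, q(\alpha)$. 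The next step is to verify that each centered summand has magnitude at most $2B$ and variance at most $S^2(P_0) = (c_1\nu)^2[\rho(\truedist\|P_0) + \rho(\truedist\|\pilot)]$. Granted these two bounds, Bentkus's inequality~\cite{bentkus_domination_2006} supplies
\[
\bbP_{\truedist}\!\left(\textstyle\sum_{i\in\calI_0}[T_i - \bbE T_i] > n_0\, q(\alpha) \,\big|\, \Data_1\right) \;\le\; P_2\!\left(n_0\, q(\alpha);\, \textstyle\sum_{i\in\calI_0} G_i\right) \;=\; \alpha,
\]
where the last equality is the defining equation of $q(\alpha)$. Integrating out $\Data_1$ yields Type~I control unconditionally. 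Condition~\eqref{ass:powerless} then follows from the same calculation applied at $P_0 = \pilot$, using Assumption~\ref{ass:T_antisym} to get $\bbE T(X;\pilot,\pilot) = 0$ so that the null hypothesis holds trivially.

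The main obstacle is the variance bound $\bbV_{\truedist}[T(X; P_0, \pilot)] \le S^2(P_0)$, since Assumption~\ref{ass:T_ub} only controls the conditional \emph{mean}. For the specific statistics of Section~\ref{sec:tests} a quadratic-in-divergence variance bound can usually be extracted from the structure of the statistic (e.g., for the Hellinger split $\rho$-statistic, combining the boundedness of $\psi$ with the identity used in Proposition~\ref{prop:hel_ub}); in the abstract statement I would impose this as an additional side hypothesis, so that the Bentkus result is proved under the triple hypothesis that $T$ is bounded, obeys Assumption~\ref{ass:T_ub}, and has variance dominated by $S^2(P)$. A minor technical point is that the sharpest form of Bentkus's inequality carries a constant $c_0 \approx e^2/2$ in front of $P_2$; in the statement of the proposition $q(\alpha)$ should be read as the root of $c_0 P_2(\,\cdot\,) = \alpha$, or one appeals to the Pinelis-type refinements discussed in \cite{pinelis_optimal_2013,kuchibhotla_near-optimal_2021} that remove the constant.
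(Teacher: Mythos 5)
Your proof follows the paper's intended route: the paper states this proposition without a separate proof, as a direct application of Bentkus's inequality inside the test-inversion framework, and your reconstruction---condition on $\Data_1$, use Assumption~\ref{ass:T_ub} to make the conditional mean nonpositive under the null, apply Bentkus's bound to the centered bounded sum, integrate out $\Data_1$, and invoke Theorems~\ref{thm:honest_exact} and~\ref{thm:honest_approx} (with condition~\eqref{ass:powerless} handled via anti-symmetry)---is exactly that argument. Your added side hypothesis on the variance is also consistent with the paper, which frames these bounds as applying ``when the variance or an upper bound of the variance is known''; Assumptions~\ref{ass:T_antisym}--\ref{ass:T_ub} alone indeed do not imply $\bbV_{\truedist}[T(X;P,\pilot)]\le S^2(P)$, and for the concrete statistics this must be checked case by case (as you note for the Hellinger $\rho$-statistic).

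One correction: your closing remark about the constant $c_0\approx e^2/2$ is misplaced. Bentkus's domination theorem gives
\begin{align*}
\bbE\Bigl(\sum_{i\in\calI_0}(T_i-\bbE T_i)-t\Bigr)_+^2\;\le\;\bbE\Bigl(\sum_{i\in\calI_0}G_i-t\Bigr)_+^2 \qquad \text{for every } t,
\end{align*}
and combining this with the Chebyshev--Markov step $\bbP(S\ge u)\le \bbE(S-t)_+^2/(u-t)_+^2$ yields
\begin{align*}
\bbP\Bigl(\sum_{i\in\calI_0}(T_i-\bbE T_i)\ge u\Bigr)\;\le\;P_2\Bigl(u;\sum_{i\in\calI_0}G_i\Bigr)
\end{align*}
with no multiplicative constant. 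The factor $e^2/2$ (and the Pinelis-type refinements) only enters when one further replaces $P_2\bigl(u;\sum G_i\bigr)$ by the tail probability $\bbP\bigl(\sum G_i\ge u\bigr)$; since the proposition calibrates $q(\alpha)$ by the equation $P_2=\alpha$ itself, no modification of the statement is needed, and your alternative reading would only make the confidence set unnecessarily conservative.
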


As in the case of Bernstein's bound~\eqref{eq: ConfSet_HEL_Bernstein}, Bentkus's bound~\eqref{eq: ConfSet_HEL_Bentkus} requires prior knowledge of $\truedist$ to compute the variance $S$. The empirical Bentkus's bound \citep{kuchibhotla_near-optimal_2021} addresses this by taking the union bound on variance over-estimation and the Bentkus's inequality. Following \citet[Lemma F.1,][]{kuchibhotla_near-optimal_2021} define the over-estimator of $S$ as, for $\delta\in[0,1]$,
\begin{align*}
  \widehat{S}_n (\delta) = \sqrt{\overline{S}_{n_0}^2 + g_{2,n_0} (\delta)} + g_{2,n_0}(\delta),\qquad
  \overline{S}_n^2 = \frac{1}{\lfloor n / 2 \rfloor} \sum_{i=1}^{\lfloor n / 2 \rfloor} \frac{(T_{2i} - T_{2i-1})^2}{2},
\end{align*}
where $g_{2,n}(\delta) := B (\sqrt{2} n)^{-1} \sqrt{\lfloor n / 2 \rfloor} \Phi^{-1} (1- 2 \delta / e^2)$ and $\Phi$ is the cdf of a standard Gaussian.

\begin{proposition}[Empirical Bentkus \citep{kuchibhotla_near-optimal_2021}] $\Chat_{\textsc{EBK}}$ is a valid $1-\alpha$ confidence set for is a valid $1-\alpha$ confidence set for exact $(\nu = 1)$ or approximate $(\nu > 1)$ projection where for some $\delta\in[0,1]$,
\begin{align}\label{eq: ConfSet_HEL_EBentkus}
  \Chat_{\textsc{EBK}} = \left\{ P\in\model : \overline{T}_{n_0}(P{,\pilot}) \le q(\alpha - \delta) \right\}
\end{align}
where $q(\alpha - \delta)$ is the solution to 
\begin{align*}
  P_2 \left( u; \sum_{i\in \calI_0} G_i \left( \widehat{S}^2_*(\delta), B \right) \right) = \alpha - \delta.
\end{align*}
with $\widehat{S}_* (\delta) := \min_{1 \le i \le n_0} \widehat{S}_i (\delta)$.
\end{proposition}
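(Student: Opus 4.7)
The plan is to follow the standard empirical-Bentkus recipe: fix $\Data_1$, apply Bentkus's inequality to the centered split statistics using an \emph{over-estimator} $\widehat{S}_*(\delta)$ of the population standard deviation, and then absorb the variance-estimation error into the confidence level via a union bound. Throughout, I would work conditionally on $\Data_1$ so that $\pilot$ is fixed and the summands $T_i(P,\pilot)$ for $i\in\calI_0$ are i.i.d.\ and bounded in absolute value by $B$. Under the relevant null (i.e.\ $P=\projdist$ in the exact case $\nu=1$, or $(P,\pilot)\in\calS_0$ in the approximate case), Assumption~\ref{ass:T_ub} gives $\mathbb{E}_{\truedist} T(X;P,\pilot)\leq 0$, so it suffices to control $\bbP_{\truedist}\bigl(\overline{T}_{n_0}(P,\pilot)>q(\alpha-\delta)\bigr)$.

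First I would record the oracle Bentkus bound. For i.i.d.\ mean-zero random variables bounded by $B$ with variance at most $S^2$, Bentkus's theorem (as used in~\eqref{eq: ConfSet_HEL_Bentkus}) gives
\begin{align*}
  \bbP_{\truedist}\!\left(\sum_{i\in\calI_0}\bigl(T_i-\mathbb{E}_{\truedist}T_i\bigr)\geq u\right)
  \;\leq\; P_2\!\left(u;\,\sum_{i\in\calI_0}G_i(S^2,B)\right).
\end{align*}
A key structural fact I would invoke (and briefly verify) is that $s\mapsto P_2(u;\sum_{i}G_i(s^2,B))$ is \emph{monotone non-decreasing} in $s$: larger variance yields a stochastically larger two-point tail envelope, hence a larger Bentkus quantile. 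Thus if we plug in any upper bound $\widetilde S\geq S$ into the definition of $q$, the resulting threshold only grows, and the Bentkus bound applied with $\widetilde S$ in place of $S$ remains valid.

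Second, I would use the variance over-estimation guarantee of \citet{kuchibhotla_near-optimal_2021} (their Lemma~F.1) to show that, conditional on $\Data_1$,
\begin{align*}
  \bbP_{\truedist}\!\left(\widehat{S}_*(\delta)\geq S\right)\;\geq\;1-\delta,
\end{align*}
where $S^2=S^2(P)$ is the true conditional variance of $T_i(P,\pilot)$, which is at most $(c_1\nu)^2[\rho(\truedist\|P)+\rho(\truedist\|\pilot)]$ and in any case at most $B^2$. Let $E=\{\widehat{S}_*(\delta)\geq S\}$. On $E$, the monotonicity step implies that the threshold $q(\alpha-\delta)$ computed from $\widehat{S}_*(\delta)$ dominates the oracle Bentkus quantile at level $\alpha-\delta$, so the Bentkus bound gives
\begin{align*}
  \bbP_{\truedist}\!\left(\overline{T}_{n_0}(P,\pilot)>q(\alpha-\delta),\;E\;\middle|\;\Data_1\right)\;\leq\;\alpha-\delta.
\end{align*}
Combining with $\bbP_{\truedist}(E^\complement\mid\Data_1)\leq\delta$ via a union bound yields conditional miscoverage at most $\alpha$; taking expectation over $\Data_1$ finishes the exact case, and for the approximate case $\nu>1$ the same argument applied exactly as in the proof of Theorem~\ref{thm:honest_approx} (splitting on whether $\pilot\in\projdist_\nu$) delivers the $\nu$-approximate coverage guarantee.

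The main obstacle will be the monotonicity of $q(\cdot)$ in the plugged-in variance: this needs a careful check because $P_2$ is defined via an infimum over $t\leq u$ of a ratio of truncated second moments, and the two-point distribution $G(s^2,B)$ changes both its support and its weights as $s$ varies. I would either appeal to the stochastic-dominance property of the $G(s^2,B)$ family in $s$ (which makes $\mathbb{E}(\sum G_i-t)_+^2$ non-decreasing in $s$ for every $t$, hence $P_2$ non-decreasing in $s$), or quote the monotonicity directly from \citet{pinelis_optimal_2013,kuchibhotla_near-optimal_2021}. Aside from that, the validity of the over-estimator $\widehat{S}_*(\delta)$ is imported verbatim from Lemma~F.1 of \citet{kuchibhotla_near-optimal_2021} and the rest is a routine union bound.
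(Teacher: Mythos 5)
Your proposal is correct and follows essentially the same route the paper takes: the paper's justification for $\Chat_{\textsc{EBK}}$ is precisely the union bound combining Bentkus's inequality (applied conditionally on $\Data_1$ with the over-estimated variance) and the variance over-estimation guarantee of Lemma~F.1 of \citet{kuchibhotla_near-optimal_2021}, splitting the budget as $(\alpha-\delta)+\delta$. Your added care about the monotonicity of the Bentkus quantile in the plugged-in variance is a legitimate technical point, and it is resolved exactly as you suggest, by the stochastic-dominance/monotonicity property quoted from \citet{pinelis_optimal_2013,kuchibhotla_near-optimal_2021}.
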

In Section~\ref{sec:overdisp_finite_comparision}, we choose $\delta = \alpha/3$ to construct the empirical Bentkus's bound-based TV \Redi set.

\section{Crossfit \Redi set}\label{sec:crossfit}
Despite universal inference holds for any $\pilot$, let us assume we choose $\pilot$ such that $\sup_{P \in \model} \|T(\cdot; P, \widehat{P}_1) - T(\cdot; P, \projdist)\|_{L_2(\truedist)} = o(1)$.
For any fixed $P \in \model$, consider the following decomposition:
\begin{align*}
  \bbP_{n_0} T(\cdot; P, \widehat{P}_1) - \bbP_{\truedist} T(\cdot; P, \projdist)
  =& (\bbP_{n_0} - \bbP_{\truedist}) [T(\cdot; P, \widehat{P}_1) - T(\cdot; P, \projdist)] \\
  &+ \bbP_{\truedist} \left[T(\cdot; P, \widehat{P}_1) -  T(\cdot; P, \projdist)\right] + (\bbP_{n_0} - \bbP_{\truedist}) T(\cdot; P, \projdist).
\end{align*}
The first term is the empirical process which is $o_{\truedist} (1/\sqrt{n_0})$ applying Lemma~2 of \citet{kennedy_sharp_2020}. The second part is the bias which is $o(1)$ from our choice of $\pilot$. The last term yields the CLT.

Now let $\overline{T}_{n_1} (P; \widehat{P}_0) := \sum_{i\in\calI_1} T(X_i; P, \widehat{P}_0) /n_1$ where we change the role of $\Data_0$ and $\Data_1$. Define a cross-fitting estimator as 
\begin{align*}
    \overline{T}_{n}^{\times} (P) 
        = \frac{n_1 \overline{T}_{n_1} (P; \widehat{P}_0) + n_0 \overline{T}_{n_0} (P; \widehat{P}_1)}{n}.
\end{align*}
The $n (\overline{T}^{\times} (P) -  \bbP_{\truedist} T(\cdot; P, \projdist))$ has the following decomposition:
\begin{align*}
  &n_0 (\bbP_{n_0} - \bbP_{\truedist}) [T(\cdot; P, \widehat{P}_1) - T(\cdot; P, \projdist)]
 + n_1 (\bbP_{n_1} - \bbP_{\truedist}) [T(\cdot; P, \widehat{P}_0) - T(\cdot; P, \projdist)]\\
 &+ n_0 \bbP_{\truedist} [T(\cdot; P, \widehat{P}_1) - T(\cdot; P, \projdist)] + n_1 \bbP_{\truedist} [T(\cdot; P, \widehat{P}_0)- T(\cdot; P, \projdist)]\\
 &+  n (\bbP_{n} - \bbP_{\truedist}) T(\cdot; P, \projdist).
\end{align*}
Similarly, both empirical process terms in the first line are $o_{\truedist} (1/\sqrt{n})$, and bias terms in the second line are $o (1)$. Thus, we left with the same CLT term. The decomposition implies that as long as one chooses a ``good'' candidate estimator, the cross-fit estimator also provides an asymptotically (uniformly) valid inference on $\projdist$.
Construct a \emph{cross-fit $\rho$-\Redi set} as follows:
\begin{align}\label{eq: ConfSet_cross}
  C^{\times}_{\rho\Redi,\alpha, n} = \left\{P\in \model : \overline{T}^{\times} (P) \le  z_{\alpha} \frac{\hat{s}_P^{\times}}{\sqrt{n}} \right\}.
\end{align}
where $\hat{s}_P^{\times 2} = [\hat{\bbV}_{\truedist} (T (X; P, \widehat{P}_1)) + \hat{\bbV}_{\truedist} (T (X; P, \widehat{P}_0) )] / 2$ is a consistent estimator of $\bbV_{\truedist} (T(X; P, \projdist))$.

{
\section{Power comparison in regular model and sampling distribution}\label{sec:power_correct_regular_model}

One of the concern is that the proposed method can be conservative. Indeed universal inference is known to be conservative when the model is regular \citep{tse_note_2022,strieder_choice_2022,dunn2022gaussian}. As our proposal extends universal inference to guard against potential misspecification, it is expected to be conservative compared to the existing methods which do not account for model misspecification and/or require strong regularity conditions.

This section empirically compares the power of our proposed methods to the existing methods in a regular model with and without model misspecification. We emphasize that this comparison is inherently unfavorable to our proposed method and we recommend using more powerful existing methods (LRT and the Wald test with robust standard errors) in settings where the statistician is certain that strong regularity conditions are met.

\subsection{Correctly Specified Case}

We compare the power of the proposed methods ($\KL$\Redi test~\eqref{eq: ConfSet_Rift} and the crossfit variant~\eqref{eq: ConfSet_cross}) to the classical LRT, split LRT, and crossfit split LRT under the Gaussian location model $\calP = \{ N(\theta, 1) : \theta\in \bbR \}$ without the model misspecification, i.e., $\truedist \in \calP$. We use Monte Carlo simulation to estimate the power: for a given value of $\theta$, we simulate $n=1000$ observations $X_1, \dots, X_n \sim \truedist = N(\theta,1)$, test whether $\theta = 0$ is in each confidence sets. We repeat this procedure 2000 times at each $\theta$ and estimate power at $\theta$ using the proportion of times that $\theta=0$ is not included in each confidence sets.

Figure~\ref{fig: normal_power} shows the estimated power of the LRTs against true $\theta$. As expected, the classical LRT has the highest power (and is optimal), followed by crossfit split LRT, crossfit $\KL$\Redi, split LRT, and $\KL$ \Redi in order. Interestingly $\KL$\Redi has higher power compared to the other split variant methods when the $\theta \le 0.1$ and lower power in high separation regime, $\theta > 0.1$. While crossfit procedure improves power for both split LRT and $\KL$\Redi in the high separation regime, $\KL$\Redi benefits even more that the split LRT and brings the power as close to split LRT. Thus whenever one can find accurate pilot estimate, it is recommended to apply crossfit to improve power.

\begin{figure}[!htb]
\centering
    \begin{subfigure}{.49\textwidth}
        \centering
        \includegraphics[trim={0 20 0 19},clip,scale=0.35]{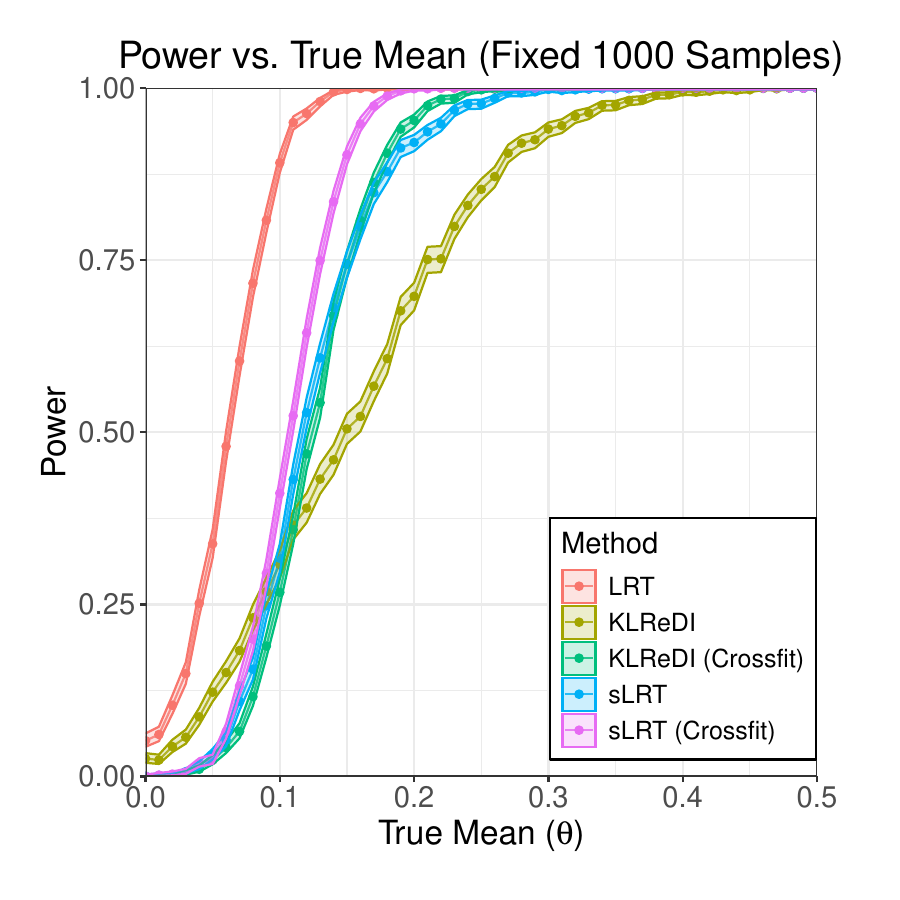}
    \end{subfigure}
    \begin{subfigure}{.49\textwidth}
        \centering
        \includegraphics[trim={0 20 0 19},clip,scale=0.35]{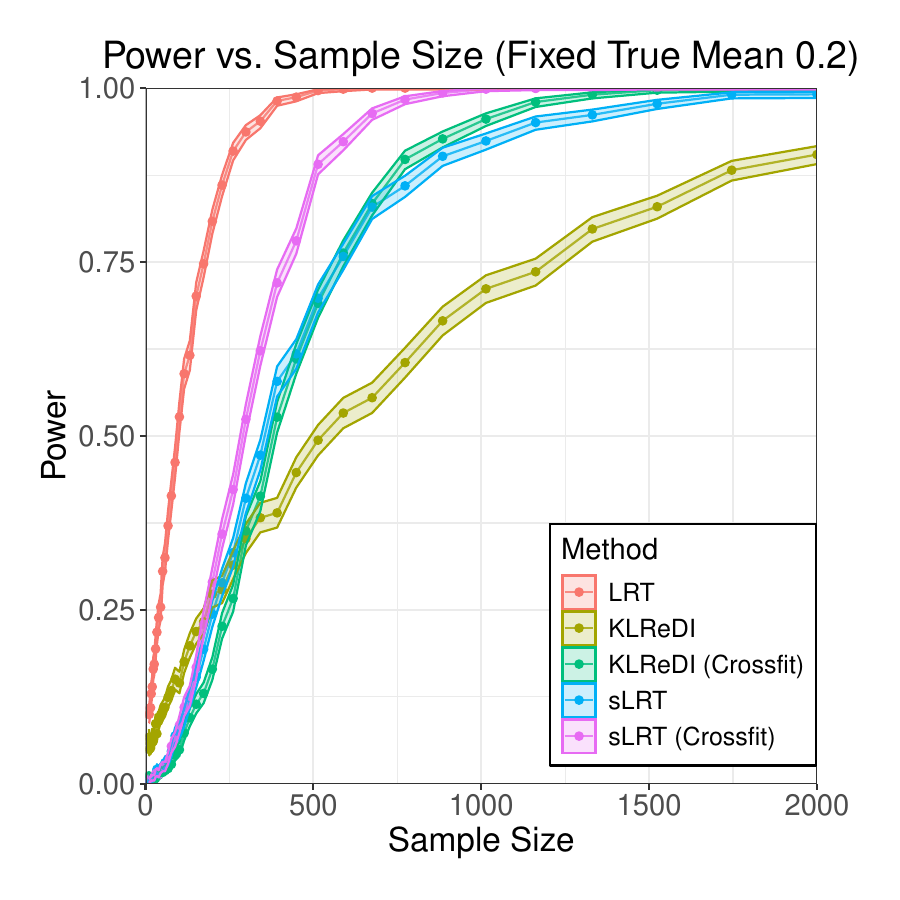}
    \end{subfigure}
    \caption{Estimated power of classical LRT, $\KL$\Redi, crossfit $\KL$\Redi, split LRT, and crossfit split LRT under Gaussian location model $N(\theta, 1)$ without model misspecification. We are testing $H_0: \theta = 0$ versus $H_1: \theta \ne 0$.}\label{fig: normal_power}    
\end{figure}

\subsection{Misspecified Case}

In this section, we compare the power of the proposed methods ($\KL$\Redi test~\eqref{eq: ConfSet_Rift} and the crossfit variant~\eqref{eq: ConfSet_cross}) to the Wald confidence set with Huber sandwich standard error (we denote as robust Wald confidence set hereafter) where the regularity conditions hold. The robust Wald confidence set is a commonly choice for statistical inference when models may be misspecified. This is because under suitable regularity conditions, MLE $\thetahat_{\text{MLE}}$ converges to KL projection $\projtheta_{\KL}$ and $\thetahat_{\text{MLE}} - \projtheta_{\KL}$ is asymptotically normal with mean 0 and Huber Sandwich covariance estimate \citep{freedman_so-called_2006,huber_behavior_1967}. Although our method offers clear advantages over the robust Wald confidence set when regularity conditions are violated (leading to singular Fisher information, as in Example~\ref{ex:mix_unident}), this comes at the cost of losing power.

Suppose the sampling distribution is contaminated distribution $\truedist = 0.95 N(\theta,1) + 0.05 N(-2, 10^2)$ for $\theta > 0$ and we consider a location family of Gaussian distribution $\model = \{ N(\mu, 1): \mu \in \bbR \}$. Then $\projtheta_{\KL} = 0.9\theta - 0.1$ following the proof of Example~3 in Section~\ref{pf:ex:normal_contam}. Consider following hypotheses:
\begin{align}
H_0: &~\projtheta_{\KL} = - 0.1, \quad \text{versus} \quad
H_1: ~\projtheta_{\KL} \ne - 0.1.
&\implies&&
H_0: ~\theta = 0, \quad \text{versus} \quad
H_1: ~\theta \ne 0.
\end{align}

We use Monte Carlo simulation to estimate the coverage and power: for a given value of $\theta$, we simulate $n=1000$ observations $X_1,\ldots,X_n \sim \truedist$, test whether $\projtheta_{\KL} = - 0.1$ is in each confidence sets.  We repeat this procedure 2000 times at each $\theta$ and estimate power at $\theta$ using the proportion of times that $\projtheta_{\KL} = - 0.1$ is not included in each confidence sets.

Figure~\ref{fig: contam_power} shows the estimated power of the LRTs against $\theta$ and their empirical coverage when the nominal coverage is 95\%. Under this particular contamination, robust Wald confidence set, $\KL$\Redi, and crossfit $\KL$\Redi attains the nominal coverage, while LRT, split LRT and crossfit split LRT do not. This highlights the importance of choosing methods robust to misspecification. Power comparisons for these non-robust methods are meaningless because these methods---LRT, sLRT and crossfit variant---do not provide valid inference.  However, we still include them with dashed lines for reference. Among the valid methods, the robust Wald confidence set (in yellow) has the highest power, while both $\KL$\Redi (in green) and its crossfit variant (in blue) exhibit conservativeness. As it was the case for the well-specified case, crossfit $\KL$\Redi improves power compared to vanilla method in high separation regime (i.e., $\theta > 0.25$) whereas the vanilla $\KL$\Redi has higher power in the low separation regime, $\theta \le 0.25$. When an accurate pilot estimate is available, crossfitting is recommended to improve power (in the high-separation regime) and yield smaller, valid confidence sets.

\begin{figure}[!htb]
  \begin{minipage}[b]{0.49\textwidth}
    \centering
    \includegraphics[trim={0 20 0 0},clip,scale=0.35]{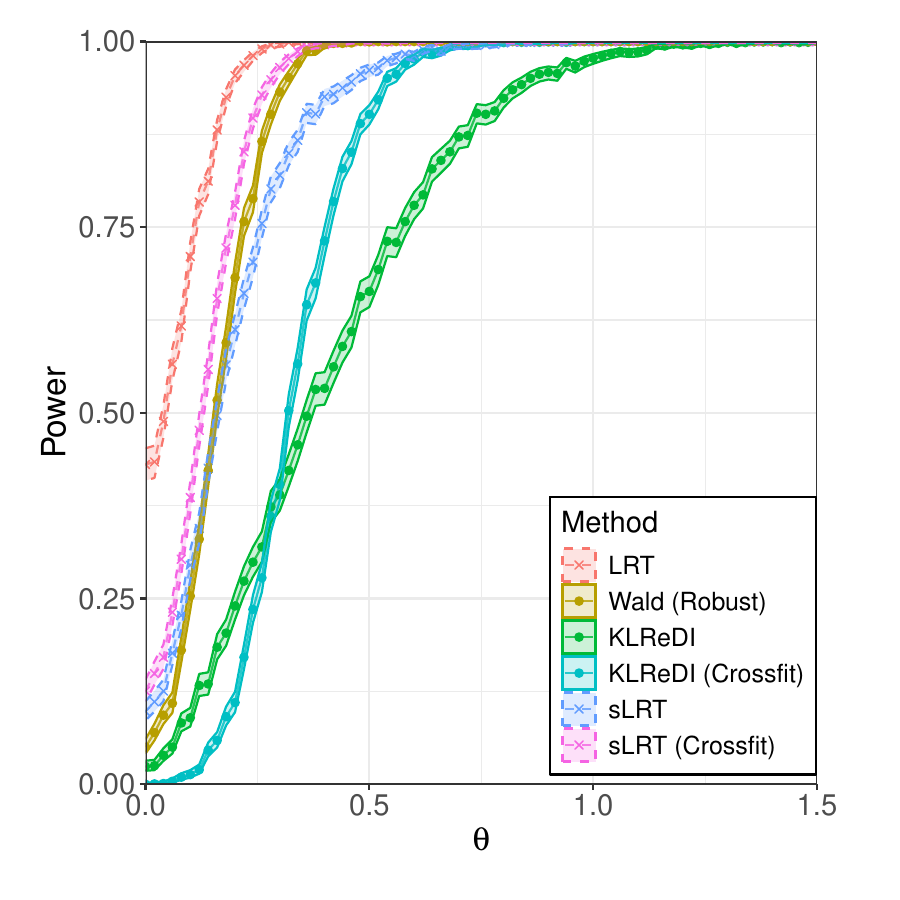}
  \end{minipage}
  \hfill
  \begin{minipage}[b]{0.49\textwidth}
    \centering
    \begin{tabular}[b]{lc}
  \hline
 & Coverage (95\% CI) \\ 
  \hline
LRT & 59\% (57\%, 61\%) \\ 
  Wald (Robust) & 94\% (92\%, 95\%) \\ 
  $\KL$\Redi & 98\% (98\%, 99\%) \\ 
  $\KL$\Redi (Crossfit) & 100\% (100\%, 100\%) \\ 
  sLRT & 90\% (89\%, 91\%) \\ 
  sLRT (Crossfit) & 87\% (86\%, 89\%) \\ 
   \hline
    \end{tabular}
    \vspace{3em}
  \end{minipage}
  \caption{Estimated power (left) and coverage (right) of classical LRT, $\KL$\Redi, crossfit $\KL$\Redi, split LRT, and crossfit split LRT under Gaussian location model $N(\theta, 1)$ obtained from 1,000 i.i.d. samples from the corrupted distribution $\truedist = 0.95 N(\theta,1) + 0.05 N(-2, 10^2)$. We are testing $H_0: \theta = 0$ versus $H_1: \theta \ne 0$. Classical LRT, sLRT, and crossfit sLRT fails to attain nominal 95\% coverage, and thus are dashed in the power curve.} \label{fig: contam_power}  
\end{figure}
}

{
\section{Empirical Verification of Irregular Bernoulli (Example~1 and 2)}
\label{sec:empirical_bernoulli}

In this section, we demonstrate the failures of the universal confidence set and the benefit of the proposed robust universal confidence sets in the irregular Bernoulli examples considered in Section~\ref{sec:fail} using simulation results. For both Example 1 and 2, we consider a more general parameter space of the Bernoulli working model $\model$ than what we considered in Section~\ref{sec:fail} (which are two disjoint points) so that we can compare the empirical width of the confidence intervals.

\subsection{Example 1}

Suppose we observe $X_1,\ldots,X_n \sim \Bern(\epsilon_n)$ such that $\epsilon_n = 0.8 / n$. Let the working model $\model = \{\Bern(p) : p \in \{0\} \cup [0.5, 1] \}$. We only consider $n \ge 20$ to let the approximate Hellinger projection be the singleton.

\noindent\textbf{KL Projection} With some algebra, we can identify that KL projection $\projdist_{\KL} = \Bern(0.5)$ and that the quasi-MLE as follows:
\begin{align*}
  \hat{p}_{n_1} = 
  \begin{cases}
    0, & \forall X_i = 0, \; i \in \calI_1,\\
    0.5 \vee \overline{X}_{n,1}, & o.w.
  \end{cases}
\end{align*}
where $\overline{X}_{n_1} = \sum_{i\in\calI_1} X_i / n_1$.
The universal confidence sets are
\begin{align*}
  C_{\alpha,n} = \left\{P \in \model : \left(2\overline{X}_{n_0} - 1\right) \left(\log (\hat{p}_{n_1}) - \log(p)\right) \le t_{\alpha} \right\}
\end{align*}
where $t_\alpha = \log 1/\alpha$ when sLRT set is used and
$z_{\alpha} \hat{s}_{p} / \sqrt{n_0}$ when $\KL$\Redi set is used. The critical issue with (asymptotic) $\KL$\Redi set is that the variance $s_{0}$ is infinite, i.e., $4 (\log \hat{p}_{n_1} / p)^2 (\bbV_{\truedist} X) = \infty$ for $p = 0$.
Nevertheless it is clear that sLRT failes to cover $\projdist_{\KL}$ at nominal level since whenever $\hat{p}_{n_1} = 0$ (which happens at least 20\%), $C_{\alpha,n}$ is $\{0\}$ so long as $\overline{X}_{n_0} \le 0.5$ (which happens with high probability).

The aforementioned issues can easily be verified in the simulation study. In particular, we obtain $\hat{p}_{n_1} = 0$  about 45\% of the simulation replication, and the empirical coverage of 95\% sLRT set and $\KL$\Redi set are around 80\% for sample sizes ranging from 40 to 10,000. This coincides with our conclusion in Section~\ref{sec:KL} that constructing a uniformly valid confidence set of KL projection is infeasible even with the proposed robust universal confidence set due to the instability of the KL projection.

\noindent\textbf{DP Projection} Recall that DP projection is $\Bern(0)$ for small enough $\epsilon_n$ Following the proof of Proposition~\ref{prop:DP_ex1}. We only consider DP parameter $\beta < 2$ here. 

Given minimum DP estimator $\pilot$, the (asymptotic) DP\Redi set is given as follows:
\begin{align*}
  C_{\alpha,n} = \bigg\{P \in \model : &\left[ p^{1+\beta} + (1-p)^{1+\beta} - \left(\widehat{p}_1^{1+\beta} +  (1 - \widehat{p}_1)^{1+\beta}\right)  \right]\\
     &- \left( 1+\frac{1}{\beta} \right) \left( \overline{X}_{n_0} \left[p^{\beta} - \widehat{p}_1^{\beta} \right] + (1-\overline{X}_{n_0}) \left[(1-p)^{\beta} - (1-\widehat{p}_1)^{\beta} \right]   \right) 
     \le z_{\alpha} \frac{\hat{s}_{p}}{\sqrt{n_0}} \bigg\}.
\end{align*}
Unlike KL \Redi set, the split statistic is uniformly bounded and thus we do not suffer from infinite variance. Figure~\ref{fig: bern_ex1_DP} summarizes the empirical performance of 95\% $\DP$\Redi sets by varying the parameter $\beta$. The results are averages over 1,000 replications. In all choices of $\beta$, the $\DP$\Redi set achieves the nominal coverage for sample sizes ranging from 40 to 10,000. Regarding the size of the confidence set, larger $\beta$ corresponds to a smaller confidence set as shown in the right panel.

\begin{figure}[!htb]
\centering
    \begin{subfigure}{.47\textwidth}
        \centering
        \includegraphics[trim={0 10 0 50},clip,width=\textwidth,height=2in]{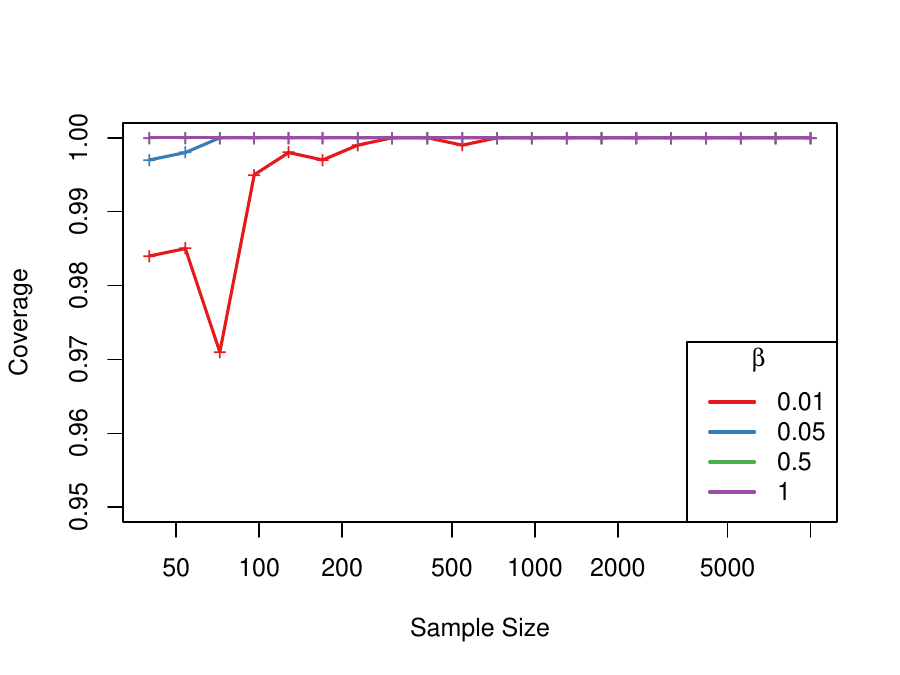}
    \end{subfigure}
    \begin{subfigure}{.47\textwidth}
        \centering
        \includegraphics[trim={0 10 0 50},clip,width=\textwidth,height=2in]{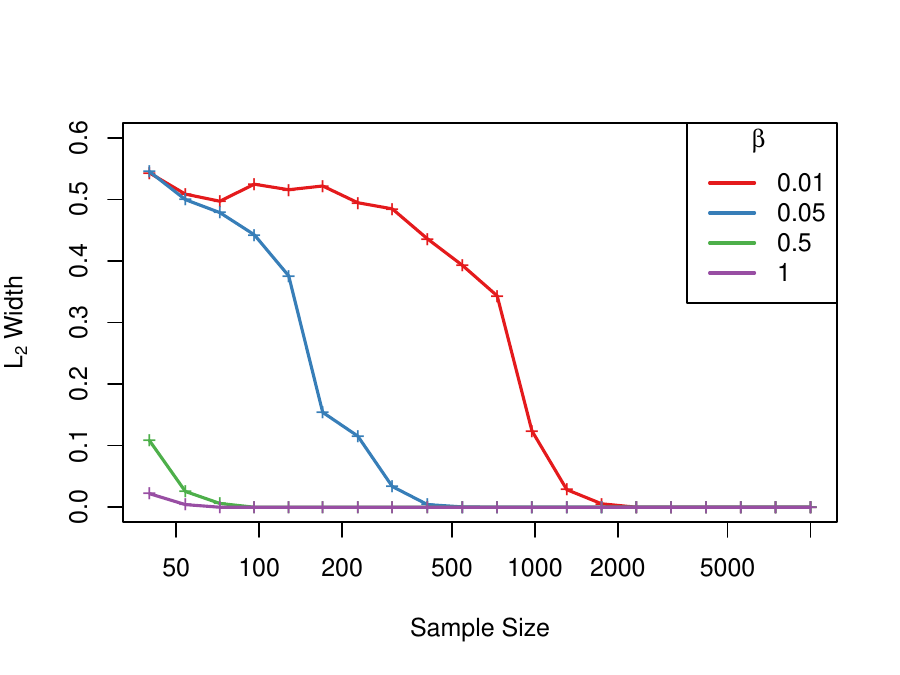}
    \end{subfigure}
    \caption{Performance of $\DP$\Redi set}\label{fig: bern_ex1_DP}    
\end{figure}

\noindent\textbf{Hellinger Projection} Given $\epsilon \le 0.04$, (approximate) Hellinger projection $\projdist_\HEL$ is $\Bern(0)$.

Empirically, 95 \% $\HEL$\Redi set covered the Hellinger projection 100\% of the replications for all sample sizes ranging from 40 to 10,000. Figure~\ref{fig: bern_ex1_HEL} shows the width of the $\HEL$\Redi set averaged over 1,000 replicates. This shows that $\HEL$\Redi set shrank to $\projdist_\HEL$ with a sample size of more than 100.
\begin{figure}[!htb]
\centering
    \begin{subfigure}{.47\textwidth}
        \centering
        \includegraphics[trim={0 10 0 50},clip,width=\textwidth,height=2in]{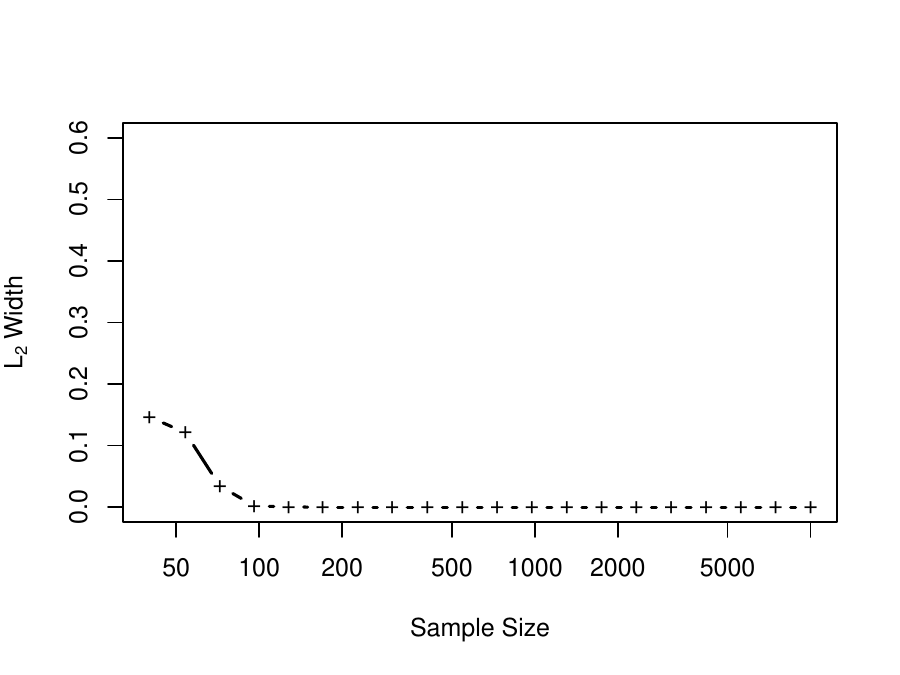}
    \end{subfigure}
    \caption{Width of $\HEL$\Redi set}\label{fig: bern_ex1_HEL}
\end{figure}

\subsection{Example 2}

Suppose we observe $X_1,\ldots,X_n \sim \Bern(0.5+\epsilon_n)$ such that $\epsilon_n = 0.5 / n$. Consider the working model $\model = \{\Bern(p) : p \in (0, 0.4]\cup [0.6,1) \}$.

\noindent\textbf{KL Divergence}
With some algebra, we can identify that KL projection $\projdist_{\KL} = \Bern(0.6)$ and that the quasi-MLE as follows:
\begin{align*}
  \hat{p}_{n_1} = 
  \begin{cases}
    \overline{X}_{n_1} \wedge 0.4, & \overline{X}_{n_1} < 0.5 \\
    \overline{X}_{n_1} \vee 0.6, & \overline{X}_{n_1} > 0.5\\
  \end{cases}
\end{align*}
where $\overline{X}_{n_1} = \sum_{i\in\calI_1} X_i / n_1$.
The universal confidence sets are
\begin{align*}
  C_{\alpha,n} = \left\{P \in \model : \left(2\overline{X}_{n_0} - 1\right) \left(\log (\hat{p}_{n_1}) - \log(p)\right) \le t_{\alpha} \right\}
\end{align*}
where $t_\alpha = \log 1/\alpha$ when sLRT set is used and
$z_{\alpha} \hat{s}_{p} / \sqrt{n_0}$ when $\KL$\Redi set is used. Figure~\ref{fig: bern_ex2_DP} shows the empirical coverage and $L_2$ width of sLRT and $\KL$\Redi set, respectively. This result empirically verifies our claim that sLRT fails to attain nominal coverage whereas the proposed $\KL$\Redi set asymptotically ensures uniform validity by properly enlarging the size of the confidence set.
\begin{figure}[!htb]
\centering
    \begin{subfigure}{.47\textwidth}
        \centering
        \caption{Coverage}
        \includegraphics[trim={0 10 0 50},clip,width=\textwidth,height=2in]{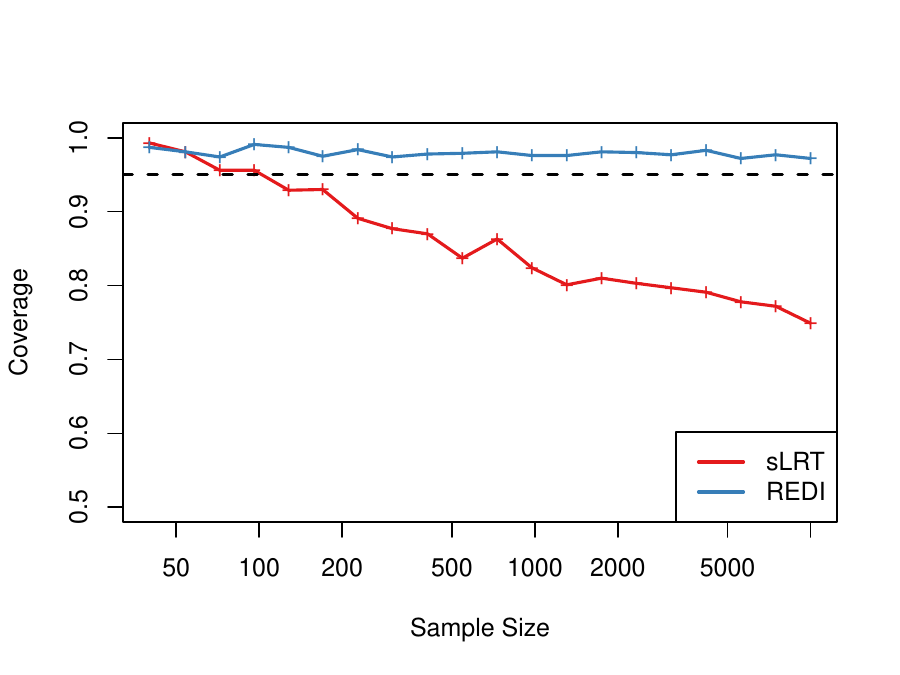}
    \end{subfigure}
    \begin{subfigure}{.47\textwidth}
        \centering
        \caption{Width}
        \includegraphics[trim={0 10 0 50},clip,width=\textwidth,height=2in]{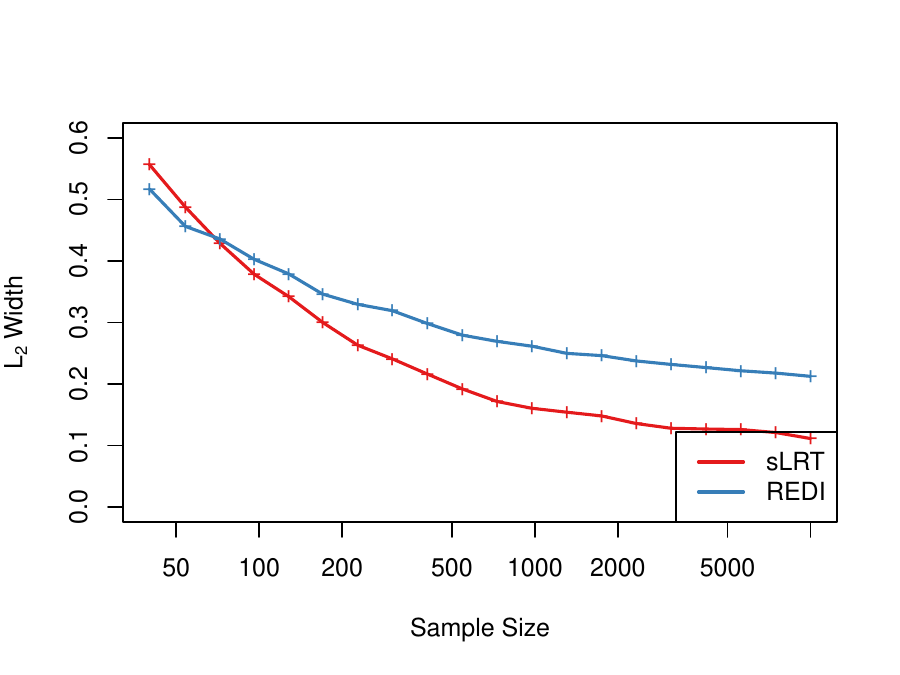}
    \end{subfigure}
    \caption{Performance of sLRT and $\KL$\Redi set}\label{fig: bern_ex2_DP}    
\end{figure}

\section{Additional Illustration: Overdispersion}
\label{sec:overdisp}

Overdispersion is a classic example of model misspecification where the true distribution has larger variance than what can be represented by the hypothesized model. Specifically, consider a case of count data generated from the negative binomial distribution $\truedist$ with mean $\mathbb{E}_{\truedist} (X):= \theta^*$ and variance $\mathbb{V}_{\truedist} (X) =  \kappa \theta^*$ where the positive constant $\kappa$ represents the \emph{dispersion ratio}. Suppose a statistician hypothesized a Poisson model $\mathcal{P}_{\Theta} = \{\textrm{Poi}(\theta) : \theta \in \mathbb{R}_{+} \}$ to best describe $\truedist$. Since the mean and the variance are the same for the Poisson distribution (implicitly assuming $\kappa=1$), the dispersion ratio $\kappa$ captures the severity of the model misspecification. Figure~\ref{fig: Poi_Pop} shows $\rho (\truedist\| \textrm{Poi}(\theta))$ with $\rho = \KL, \HEL, \TV$ across the dispersion ratio. Notice that KL projection is the true mean $\theta^* (= 10)$ regardless of the dispersion ratio whereas Hellinger and TV projection gets smaller as the true variance is more inflated. 
{
We remark that, in reality, the true distribution is unknown to the statistician, and thus these projections cannot be known a priori. It is the choice of inferential tool that implicitly determines the target of inference.
}

\begin{figure}[!htb]
\centering
    \begin{subfigure}{.32\textwidth}
        \centering
        \caption{KL}
        \includegraphics[trim={25 20 25 20},clip,width=\textwidth,height=1.55in]{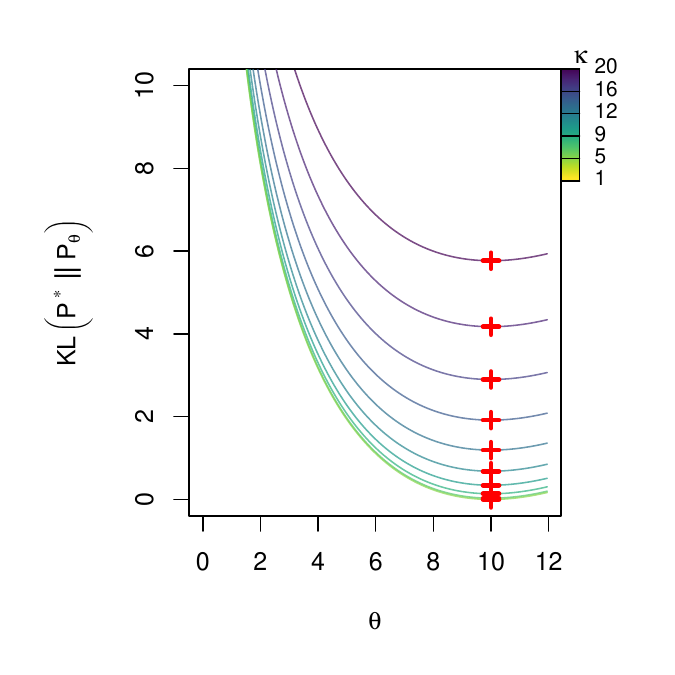}
    \end{subfigure}
    \begin{subfigure}{.32\textwidth}
        \centering
        \caption{Hellinger}
        \includegraphics[trim={25 20 25 20},clip,width=\textwidth,height=1.55in]{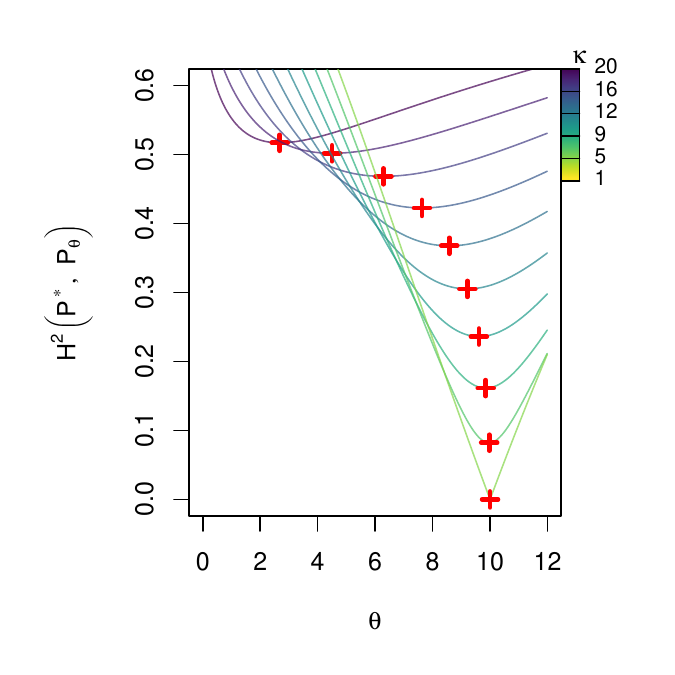}
    \end{subfigure}
    \begin{subfigure}{.32\textwidth}
        \centering
        \caption{TV}
        \includegraphics[trim={25 20 25 20},clip,width=\textwidth,height=1.55in]{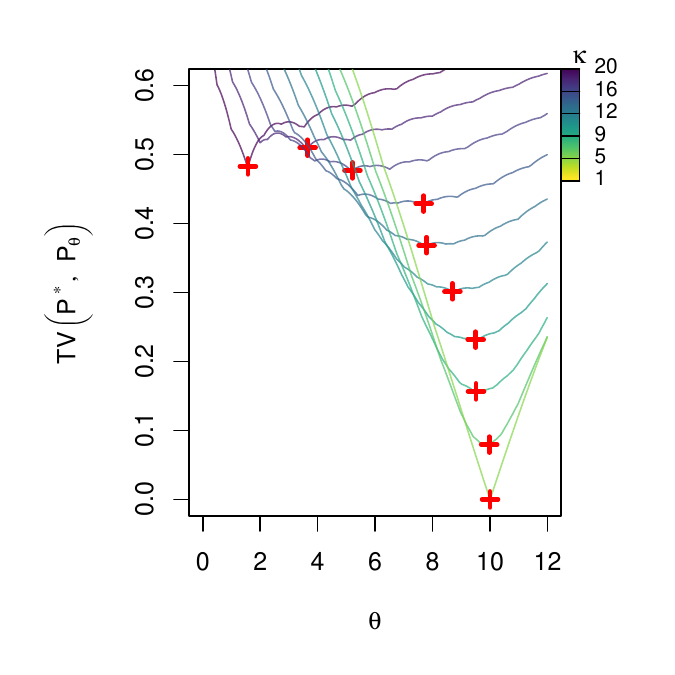}
    \end{subfigure}
    \caption{$\rho (\truedist\| \textrm{Poi}(\theta))$ across the varying dispersion ratio $\kappa$. Red + indicates $\rho (\truedist\| \projdist)$.}\label{fig: Poi_Pop}    
\end{figure}

\subsection{The split LRT is sensitive to the misspecification.}
As highlighted in Section~\ref{sec:KL}, the split LRT confidence set $(\Chat_{sLRT})$ may fail to cover the KL projection unlike the KL \Redi set $(\Chat_{\KL\Redi})$ even with the same choice of $\widehat{\theta}_1$ and the same log split likelihood-ratio statistic. Figure~\ref{fig: Poi_sLRT_Rift} contrasts the performance of $\Chat_{sLRT}$ and $\Chat_{\KL\Redi}$ based on 1000 replicates of 200 simulated observations. In computing the confidence sets, the observations are equally split in half and we choose $\widehat{\theta}_1$ to be the sample mean (which is the MLE) of the first half samples. As the misspecification gets more severe (larger $\kappa$), the empirical coverage of KL projection parameter $(\tilde{\theta})$ decreases for $\Chat_{sLRT}$. When the dispersion ratio becomes larger than 3, $\Chat_{sLRT}$ fails to achieve the nominal 95\% coverage whereas $\Chat_{\KL\Redi}$ maintains the validity regardless of how severe the misspecification is. Both the center and the right panel depict the size of the estimated confidence set varying over the dispersion ratio but from a different perspective. The former is based on the maximal excess KL divergence from the KL projection (which can be at most twice the KL-diameter of the set) whereas the latter is based on the $L_2$ distance over the parameter space. It is not surprising that compared to $\Chat_{\KL\Redi}$, $\Chat_{sLRT}$ is smaller in the $L_2$ sense and is closer to $\projdist$ in an excess divergence sense.
\begin{figure}[!htb]
\centering
    \begin{subfigure}{.32\textwidth}
        \centering
        \includegraphics[trim={0 20 40 20},clip,width=\textwidth,height=1.55in]{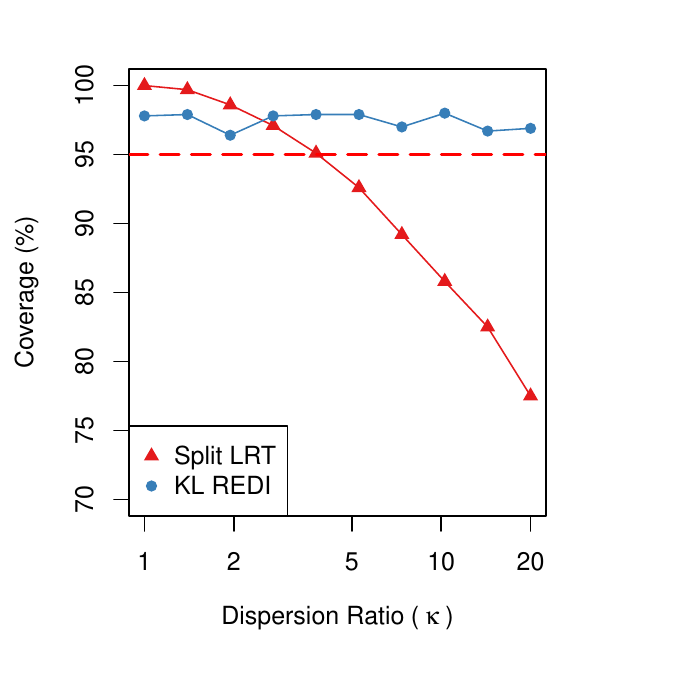}
    \end{subfigure}
    \begin{subfigure}{.32\textwidth}
        \centering
        \includegraphics[trim={0 20 40 20},clip,width=\textwidth,height=1.55in]{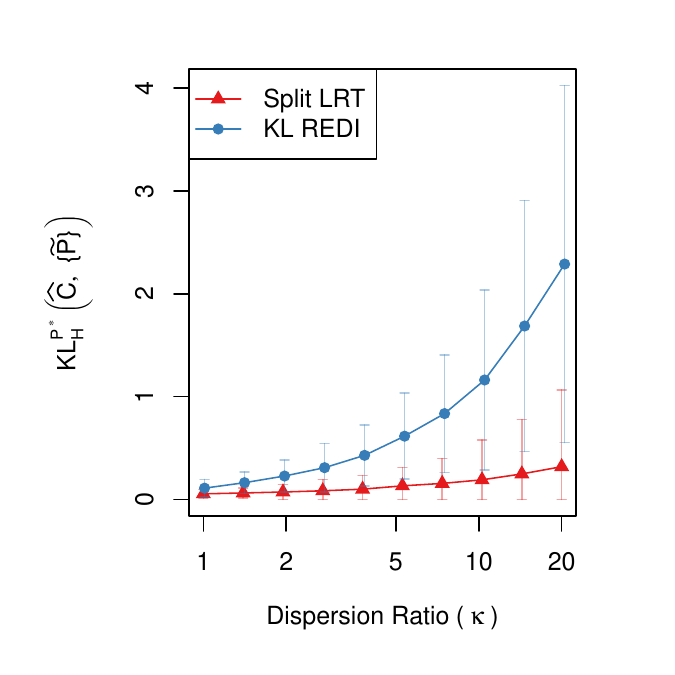}
    \end{subfigure}
    \begin{subfigure}{.32\textwidth}
        \centering
        \includegraphics[trim={0 20 40 20},clip,width=\textwidth,height=1.55in]{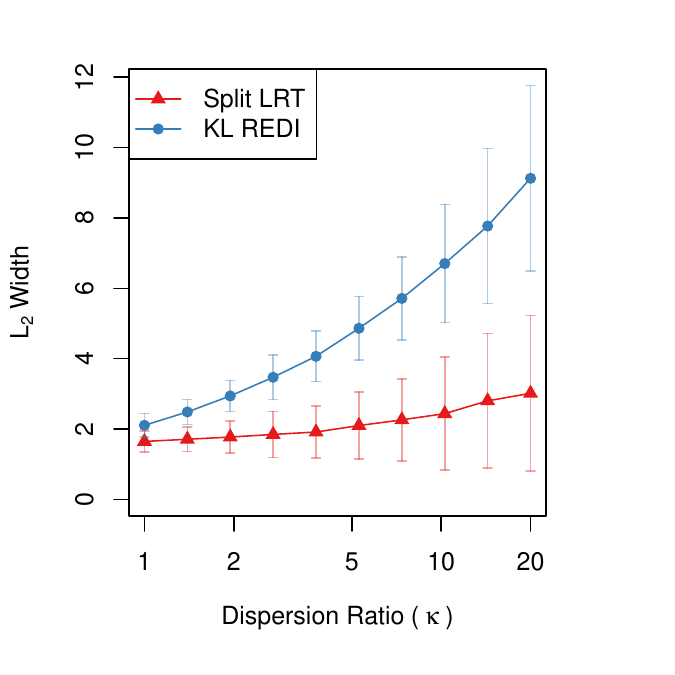}
    \end{subfigure}    
    \caption{Inference on KL projection: (Left) Empirical coverage (Center) $\KL_{\scH}^{\truedist} \left( \Chat, \{\projdist\} \right)$ (Right) $L_2$ width of $\Chat$ in the parameter space. Point estimates are median over 1000 replicates and the error bar indicates $\pm$ 1 standard error.}\label{fig: Poi_sLRT_Rift}    
\end{figure}

\subsection{Beyond KL projection}

Unlike the KL projection, the Hellinger and TV projections are different for different degrees of overdispersion. Our target of inference regarding Hellinger and TV distance is $\nu$-approximate projection rather than the projection $\projdist$ as seen in the left panel of Figure~\ref{fig: Poi_HEL_TVRift}. When the factor $\kappa \geq 6$ the $\nu$-approximate target for both Hellinger and TV distance includes any $\theta \in \mathbb{R}_+$. For values of dispersion ratio $\kappa \geq 6$, the $\nu$-approximate projection for both the Hellinger and TV distances becomes $\model$ and thus the approximate coverages are trivially 100\%. Once again
this highlights that the approximate projection is a meaningful target only when the model misspecification is not too severe.

Figure~\ref{fig: Poi_HEL_TVRift} summarizes the performance of approximate \Redi sets regarding Hellinger ($\Chat_{\HEL\Redi}$) and TV distance ($\Chat_{\TV\Redi}$) based on 1000 replicates of 200 simulated observations. We choose the minimum distance estimator for $\widehat{\theta}_1$ for both $\Chat_{\HEL\Redi}$ and $\Chat_{\TV\Redi}$. Both $\Chat_{\HEL\Redi}$ and $\Chat_{\TV\Redi}$ yield 100\% empirical coverage---defined as a proportion of the confidence set that intersects $\projdist_\nu$---across all dispersion ratios except almost well-specified case (0.01\% dispersion) with 97.4\% and 99.1\% coverage, respectively. This conservativeness is expected because for these divergences we have relaxed our target of inference to be the set
of $\nu$-approximate projections. 

Nevertheless, this does not mean that the Hellinger and TV \Redi sets are vacuously large. The center and right panel of Figure~\ref{fig: Poi_HEL_TVRift} show the diameter of the \Redi set in Hellinger or TV distance sense, or Euclidean sense. The size of the \Redi set increases as the misspecification exacerbates regardless of distance measure. In general, $\Chat_{\TV\Redi}$ is larger than $\Chat_{\HEL\Redi}$. $\Chat_{\HEL\Redi}$ behaves closer to $\Chat_{\KL\Redi}$ under slight to moderate overdispersion and to $\Chat_{\TV\Redi}$ as the overdispersion becomes severe.

\begin{figure}[!htb]
\centering
    \begin{subfigure}{.32\textwidth}
        \centering
    \end{subfigure}
    \begin{subfigure}{.32\textwidth}
        \centering
        \includegraphics[trim={0 20 40 20},clip,width=\textwidth,height=1.55in]{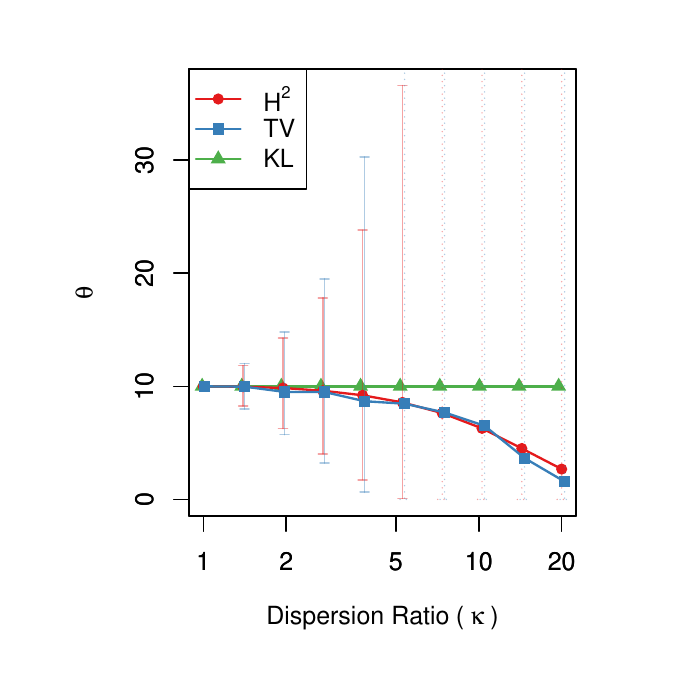}
    \end{subfigure}
    \begin{subfigure}{.32\textwidth}
        \centering
        \includegraphics[trim={0 20 40 20},clip,width=\textwidth,height=1.55in]{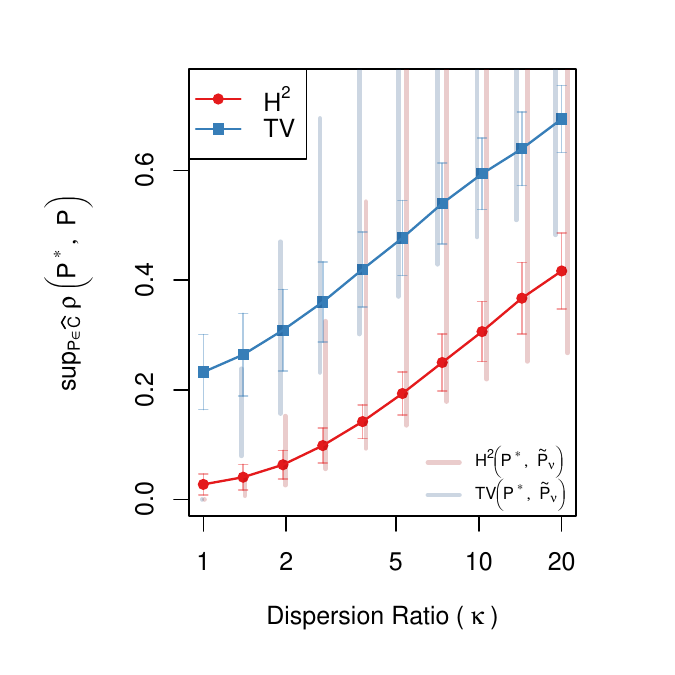}
    \end{subfigure}    
    \begin{subfigure}{.32\textwidth}
        \centering
        \includegraphics[trim={0 20 40 20},clip,width=\textwidth,height=1.55in]{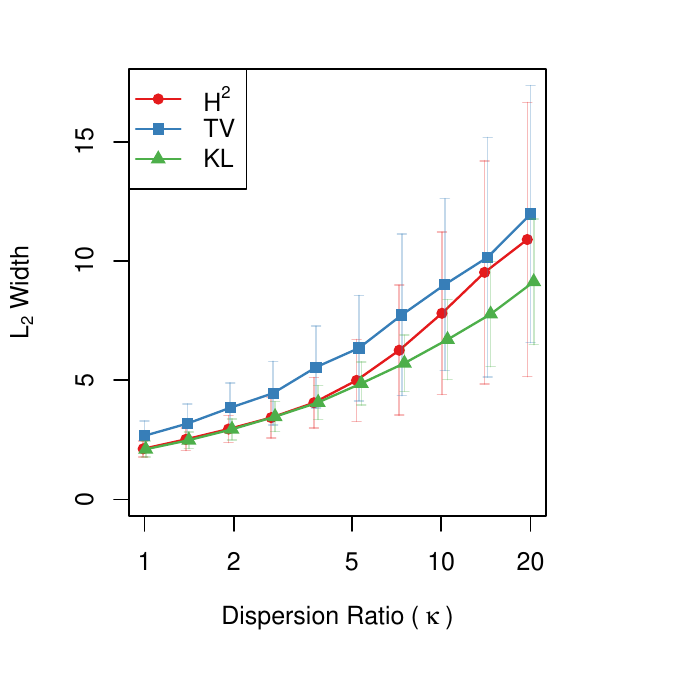}
    \end{subfigure}    
    \caption{Inference on $\nu$-approximate Hellinger and TV projection: (Left) $\tilde{\theta}$ (in points) and $\tilde{\theta}_\nu$ (in intervals) where the dotted intervals indicate $\bbR_{+}$ (Center) $\sup_{P_0 \in \Chat} \rho (\truedist\| P_0)$ superimposed with $\rho (\truedist\| \projdist_\nu) := \{\rho(\truedist\| P_0) : P_0 \in \projdist_\nu\}$ (Right) $L_2$ Width of (KL, Hellinger, and TV) \Redi set in the parameter space. Point estimates in the center and right panels are median over 1000 replicates and the error bar indicates $\pm$ 1 standard error. }\label{fig: Poi_HEL_TVRift}    
 \end{figure}

\subsection{Comparison between asymptotic and finite sample valid \Redi sets}
\label{sec:overdisp_finite_comparision}

Figure~\ref{fig: Poi_compare_overdispersion} compares the various TV \Redi set when the $\truedist$ is a $32\%$ variance inflated negative binomial---Berry-Esseen ($\Chat_{\TV\Redi}$), Hoeffding bound ($\Chat_{\TV\textsc{HF}}$), empirical Bernstein bound \citep[$\Chat_{\TV\textsc{EBS}}$;][]{waudby-smith_estimating_2023}, and empirical Bentkus bound \citep[$\Chat_{\TV\textsc{EBK}}$;][]{bentkus_domination_2006,kuchibhotla_near-optimal_2021}. See Section~\ref{sec:finite_set} for explicit forms of each confidence set. In all cases, we choose the same minimum TV distance estimator $\widehat{\theta}_1$. The KL \Redi set dominates all finite sample valid confidence sets considered in this section, despite its validity relying on asymptotics. The finite sample valid \Redi sets are too conservative (and yield a meaningless set $\Chat = \model$) when only a few observations are available ($n \le 50$).
Although our paper does not primarily focus on obtaining the tightest finite-sample valid confidence set, leveraging the variance $\bbV_{\truedist}(X)$ can often be beneficial when constructing the confidence set. In this example, $\Chat_{\TV\textsc{EBS}}$ and $\Chat_{\TV\textsc{EBK}}$ outperform $\Chat_{\TV\textsc{HF}}$ since the Bernstein and Bentkus bounds are more sensitive to the variance.

\begin{figure}[!htb]
\centering
   \begin{subfigure}{.32\textwidth}
       \centering
        \includegraphics[trim={20 20 40 30},clip,width=\textwidth]{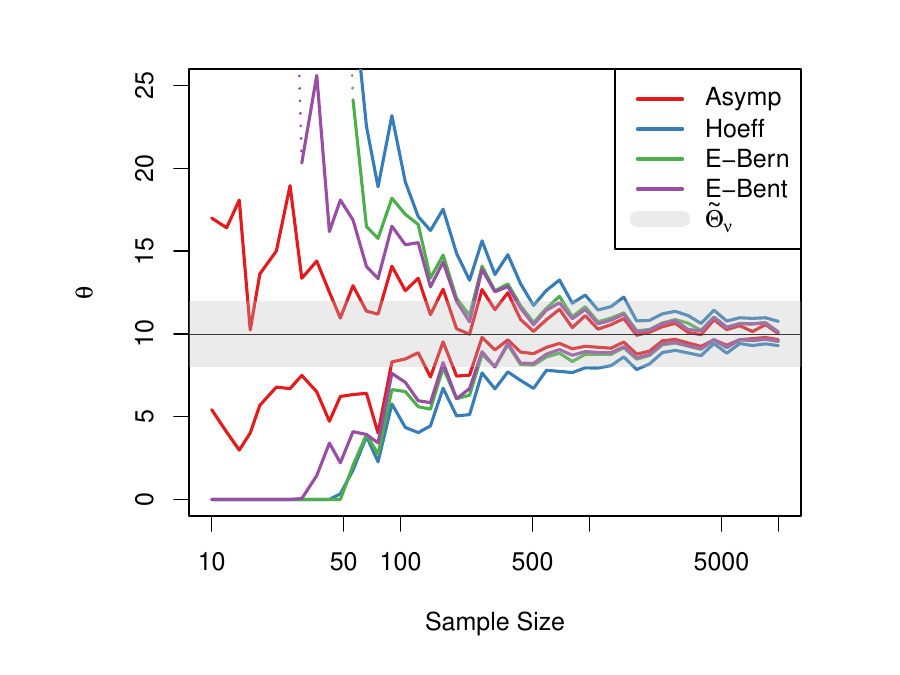}
   \end{subfigure}
   \begin{subfigure}{.32\textwidth}
       \centering
        \includegraphics[trim={20 20 40 30},clip,width=\textwidth]{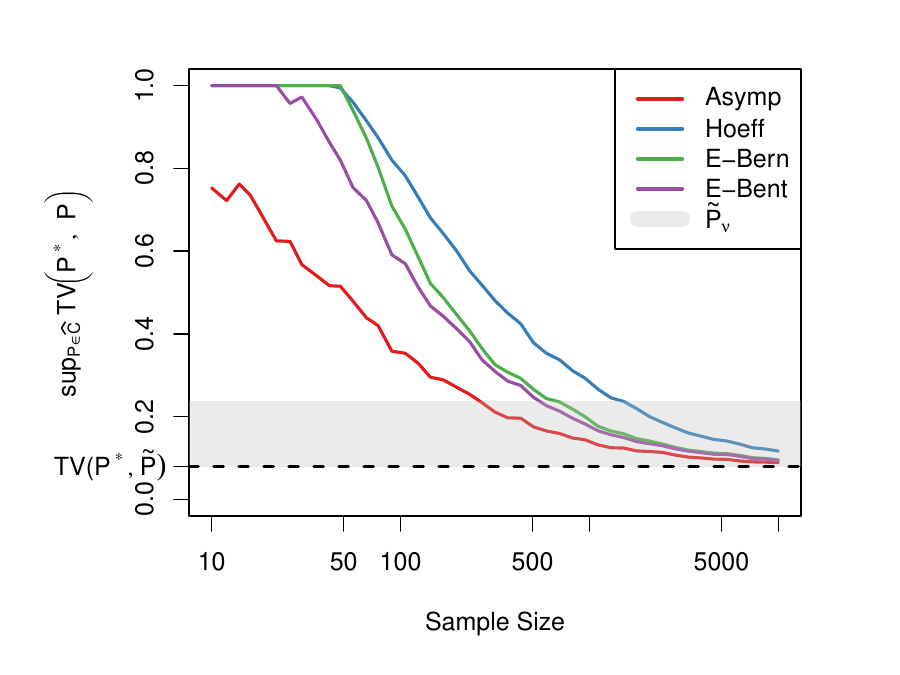}
   \end{subfigure}
   \begin{subfigure}{.32\textwidth}
       \centering
        \includegraphics[trim={20 20 40 30},clip,width=\textwidth]{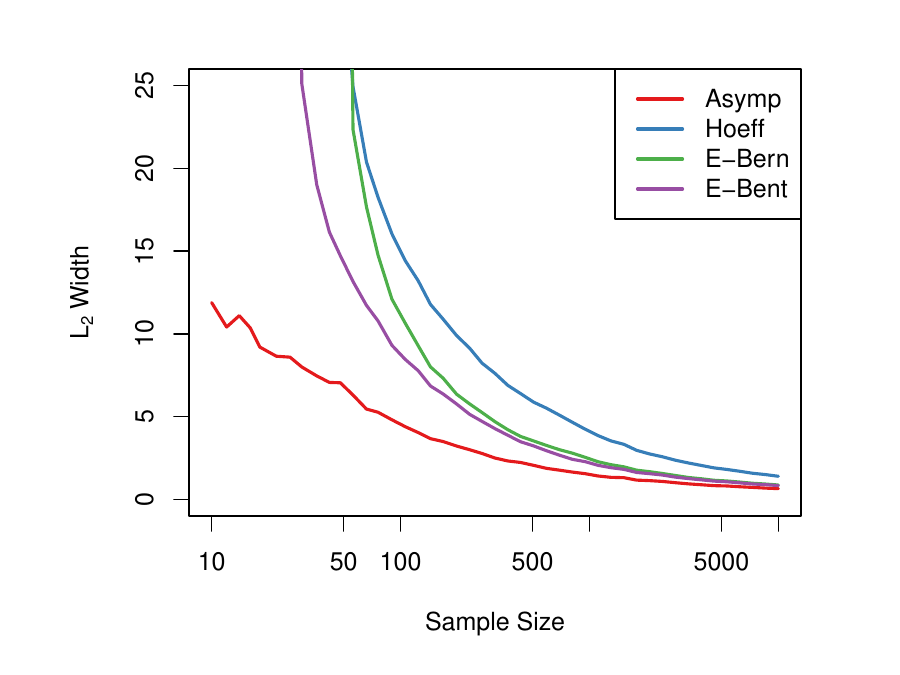}
   \end{subfigure}
    \caption{TV \Redi set based on Berry-Esseen, Hoeffding, empirical Bernstein and empirical Bentkus bounds regarding $32\%$ variance inflation example. (Left) Confidence set over $\Theta$ from a single replication. Gray area indicates the $\nu$-approximate TV projection parameter set $\{\theta : P_\theta \in \projdist_\nu \}$. (Center) Median $\sup_{P \in \Chat} \TV (\truedist, P)$ superimposed with $\{\TV (\truedist, P): P\in \projdist_\nu\}$. (Right) Median $L_2$ widths of TV \Redi parameter sets.}\label{fig: Poi_compare_overdispersion}   
\end{figure}
}

\section{Additional results and technical details of Section~\ref{sec:Empirical_analysis}}\label{app: numerical}

\subsection{Visualization of Data Generating Process}\label{sec:vis_dgp}

\begin{figure}[!htb]
\centering
    \begin{subfigure}{.47\textwidth}
        \centering
        \caption{Probability Density Function}
        \includegraphics[trim={0 0 0 0},clip,width=\textwidth,height=2in]{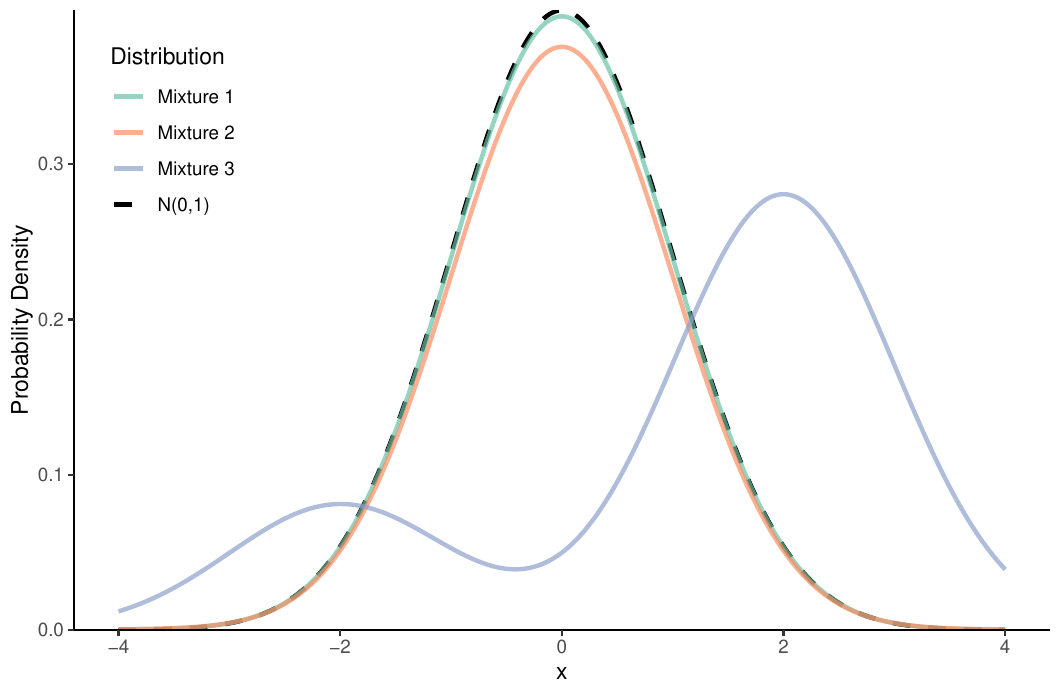}
    \end{subfigure}
    \begin{subfigure}{.47\textwidth}
        \centering
        \caption{Cumulative Density Function}
        \includegraphics[trim={0 0 0 0},clip,width=\textwidth,height=2in]{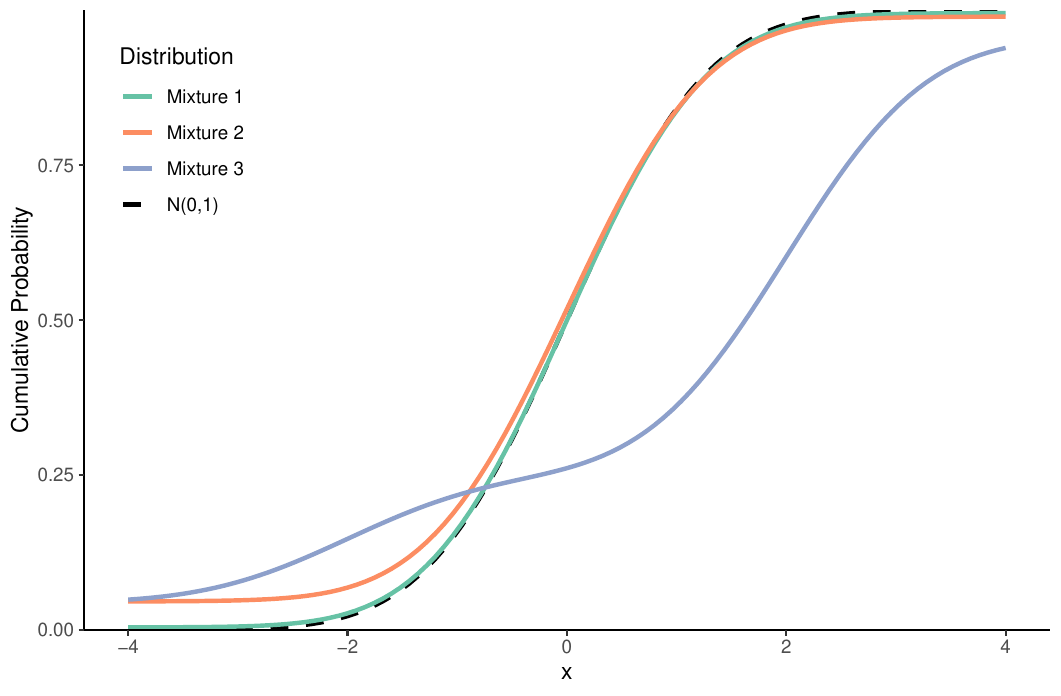}
    \end{subfigure}
    \caption{PDF and CDF of contaminated mixture distribution and standard normal distribution}\label{fig: sim_setting}    
\end{figure}

\subsection{Computational detail}

We adopt a heuristic search method for finding a confidence set in multivariate parameter space. For brevity, we explain the procedure in 2 dimensions, but the procedure can be straightforwardly extended to higher dimensions. From the observation that when $\thetahat_1$ is close to $\tilde{\theta}$, i.e., $\rho(\truedist\| P_{\thetahat_1}) \le \nu \rho(\truedist \| \projdist)$ as seen in the proof of Theorem~\ref{thm:honest_approx}, $\overline{T}_{n_0} (\thetahat_1) = 0$ for split statistics that satisfies Assumption~\ref{ass:T_antisym} and \ref{ass:T_ub}. Therefore, we construct a star-convex confidence set that always includes $\thetahat_1$. We construct the rays originate from $\thetahat_1$, i.e., $R_\omega = \{ \theta \in \Theta : r_\omega^\top (\theta - \thetahat_1) = 0, r \ge 0 \}$ where $r_\omega = (r \sin \omega, - r \cos \omega)$ for angle $\omega \in [- \pi, \pi]$. For each $\omega$, we find a root of an evidence function $\evidence(\theta) = \overline{T}_{n_0} (\theta) - t_\alpha (\theta)$ using Brent's method \citep{brent_algorithms_1972} on $R_\omega$ constrained with radius $r$ varying from 0 (corresponding $\theta=\thetahat_1$) to some $r_0 > 0$ such that the corresponding $\projtheta$ satisfies $\evidence(\projtheta) > 0$.

\subsection{Gaussian contamination - Location family} \label{app: contam_loc}

Consider a Gaussian location family $\model = \{ \Normal(\theta, 1) : \theta \in \bbR\}$ where the variance is fixed to that of uncontaminated distributions. Figure~\ref{fig: contam_proj_1d} shows the projection parameters along with those of contaminated and uncontaminated distributions. The mean of contaminated distribution and that of uncontaminated distributions are the same for Cases 1 and 3 but not for Case 2. This leads to the interesting observation that forward KL projection is the closest to the uncontaminated distribution in Case 3 unlike location-scale family in Figure~\ref{fig: contam_proj}, Section~\ref{sec:Empirical_analysis}. 
\begin{figure}[!htb]
\centering
    \begin{subfigure}{.32\textwidth}
        \centering
        \caption{Case 1: Symmetric}
        \includegraphics[trim={15 10 15 10},clip,width=\textwidth,height=1.2in]{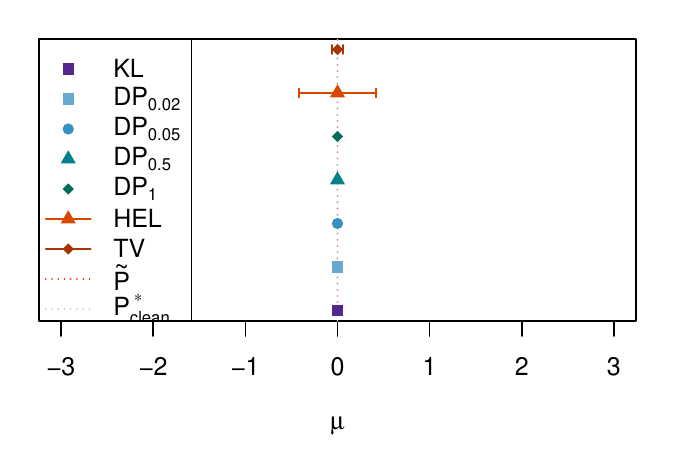}
    \end{subfigure}
    \begin{subfigure}{.32\textwidth}
        \centering
        \caption{Case 2: Asymmetric}
        \includegraphics[trim={15 10 15 10},clip,width=\textwidth,height=1.2in]{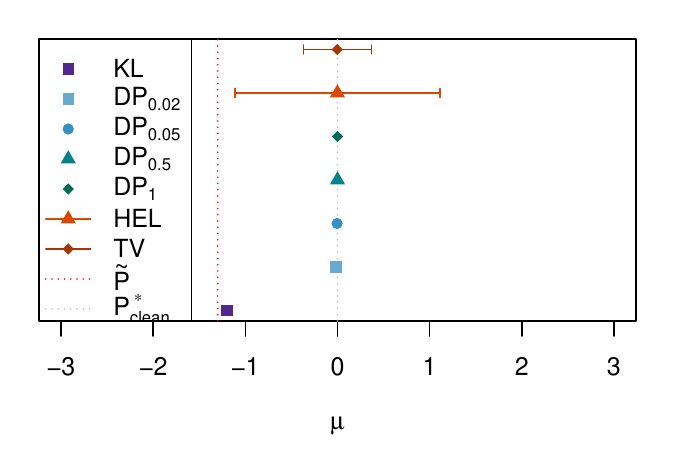}
    \end{subfigure}
    \begin{subfigure}{.32\textwidth}
        \centering
        \caption{Case 3: Heavily asymmetric}
        \includegraphics[trim={15 10 15 10},clip,width=\textwidth,height=1.2in]{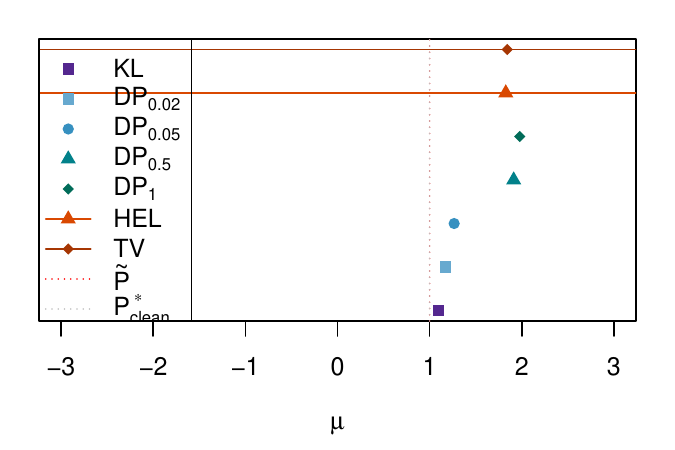}
    \end{subfigure}    
    \caption{Parameters of projections ($\KL$, $\DP_\beta$, $\HEL^2$, $\TV$), contaminated distribution ($\projdist$), and uncontaminated distribution $(P^*_{\Truth})$. Parameters corresponding to $\projdist_\nu$ for Hellinger and TV distance are given in intervals.}\label{fig: contam_proj_1d}
\end{figure}

Figure~\ref{fig: contam_1d_KLDP} summarizes the performance of confidence sets targeting the forward KL or DP projection over 1000 replications. Clearly, split LRT fails to attain the nominal coverage even for a large enough sample size. All other \Redi sets achieve the nominal coverage for moderate to large sample size. $\Chat_{\DP\Redi}$ are shorter than $\Chat_{\KL\Redi}$ and even than the invalid split LRT set for Cases 2 and 3. 

\begin{figure}[!htb]
\centering
    \begin{subfigure}{.32\textwidth}
        \centering
       \caption{Case 1: Symmetric}
        \includegraphics[trim={0 20 40 20},clip,width=\textwidth,height=1.60in]{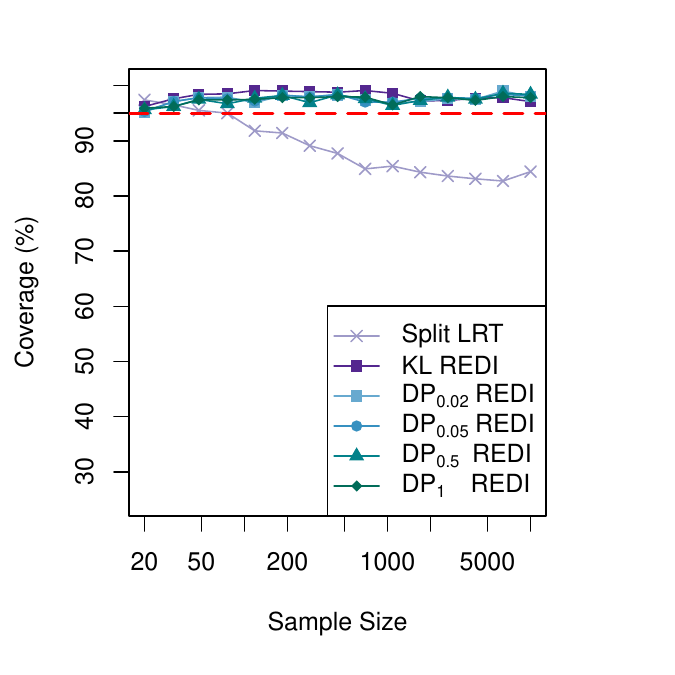}
    \end{subfigure}
    \begin{subfigure}{.32\textwidth}
        \centering
       \caption{Case 2: Asymmetric}
        \includegraphics[trim={0 20 40 20},clip,width=\textwidth,height=1.60in]{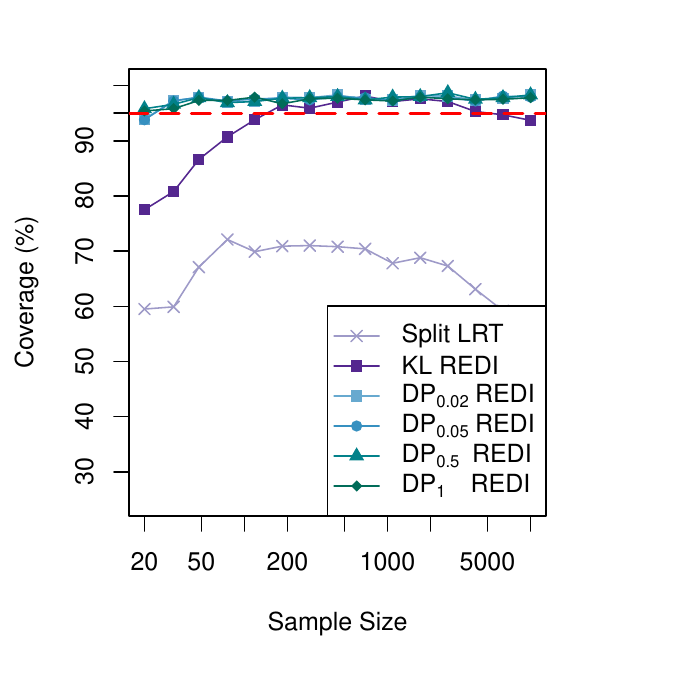}
    \end{subfigure}
    \begin{subfigure}{.32\textwidth}
        \centering
       \caption{Case 3: Heavily asymmetric}
        \includegraphics[trim={0 20 40 20},clip,width=\textwidth,height=1.60in]{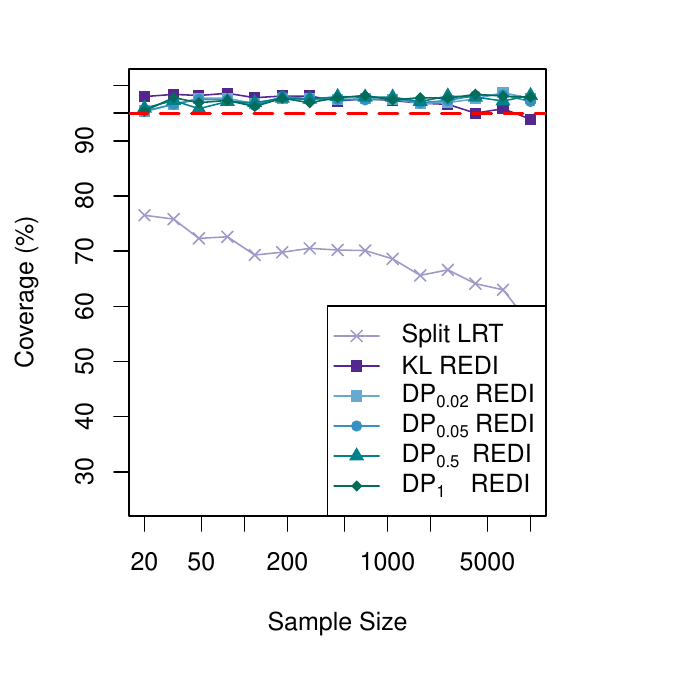}
    \end{subfigure}    
    \begin{subfigure}{.32\textwidth}
        \centering
        \includegraphics[trim={15 20 25 20},clip,width=\textwidth,height=1.60in]{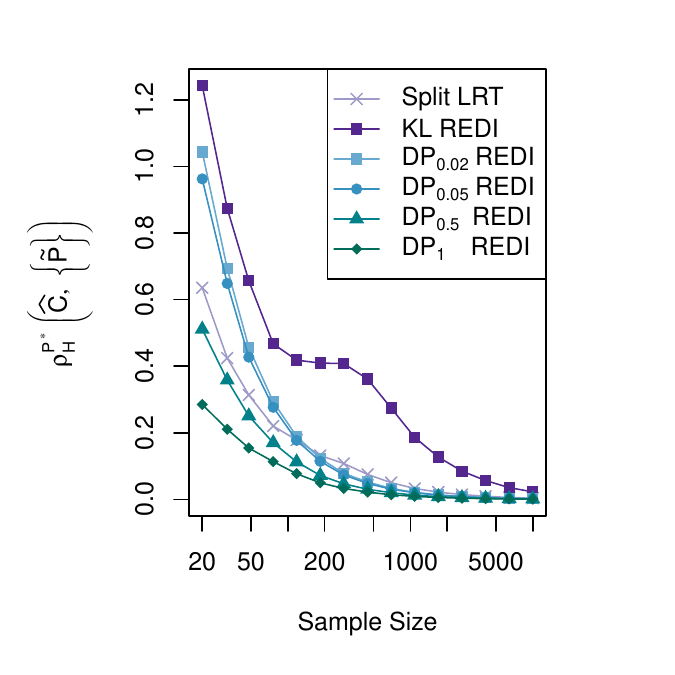}
    \end{subfigure}
    \begin{subfigure}{.32\textwidth}
        \centering
        \includegraphics[trim={15 20 25 20},clip,width=\textwidth,height=1.60in]{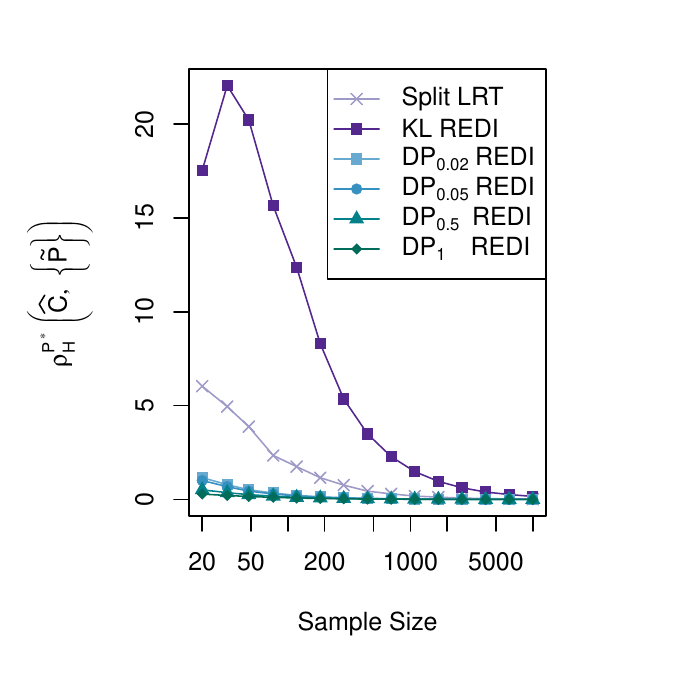}
    \end{subfigure}
    \begin{subfigure}{.32\textwidth}
        \centering
        \includegraphics[trim={15 20 25 20},clip,width=\textwidth,height=1.60in]{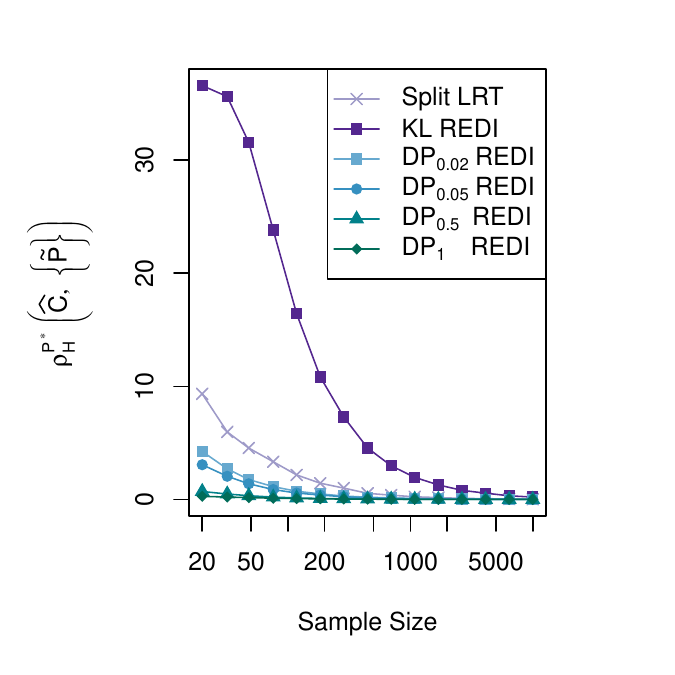}
    \end{subfigure}
    \begin{subfigure}{.32\textwidth}
        \centering
        \includegraphics[trim={15 20 25 20},clip,width=\textwidth,height=1.60in]{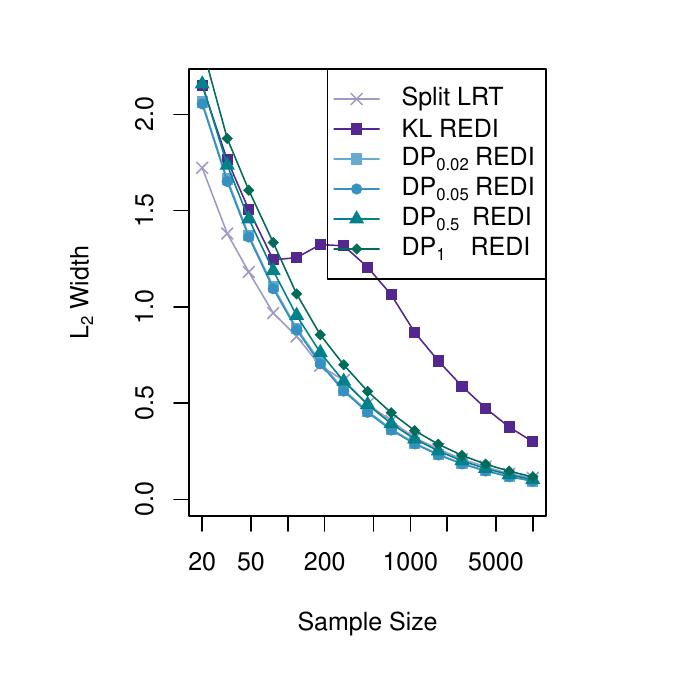}
    \end{subfigure}
    \begin{subfigure}{.32\textwidth}
        \centering
        \includegraphics[trim={15 20 25 20},clip,width=\textwidth,height=1.60in]{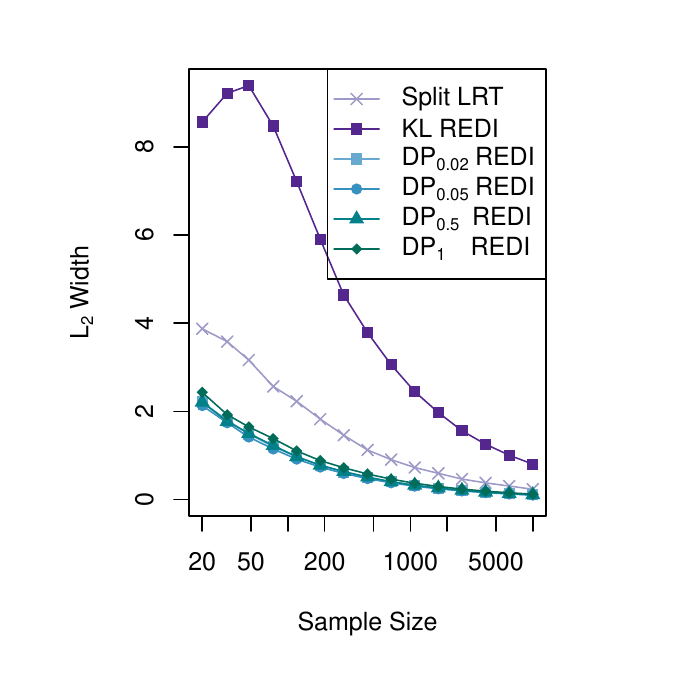}
    \end{subfigure}
    \begin{subfigure}{.32\textwidth}
        \centering
        \includegraphics[trim={15 20 25 20},clip,width=\textwidth,height=1.60in]{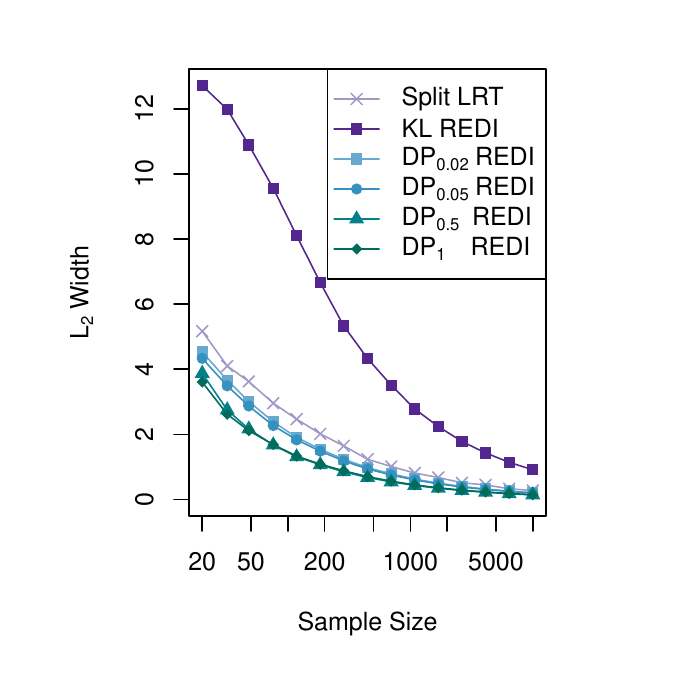}
    \end{subfigure}    
    \caption{Performance of confidence sets ($\Chat_{sLRT}$, $\Chat_{\KL\Redi}$, $\Chat_{\DP\Redi}$, $\Chat_{\DP\Wald}$) of forward KL and DP projections on contamination examples over 1000 replications. (Top) Empirical coverage. (Middle) Median $\rho$-size. (Bottom) Median $L_2$ interval width} \label{fig: contam_1d_KLDP}    
\end{figure}

\begin{figure}[!htb]
\centering
    \begin{subfigure}{.32\textwidth}
        \centering
     \caption{Case 1: Symmetric}
        \includegraphics[trim={15 20 25 20},clip,width=\textwidth,height=1.7in]{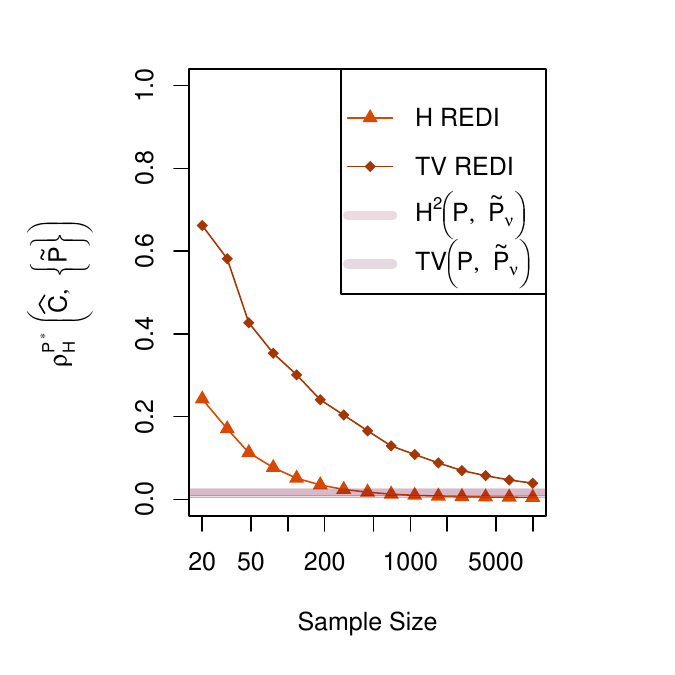}
    \end{subfigure}
    \begin{subfigure}{.32\textwidth}
        \centering
      \caption{Case 2: Asymmetric}
        \includegraphics[trim={15 20 25 20},clip,width=\textwidth,height=1.7in]{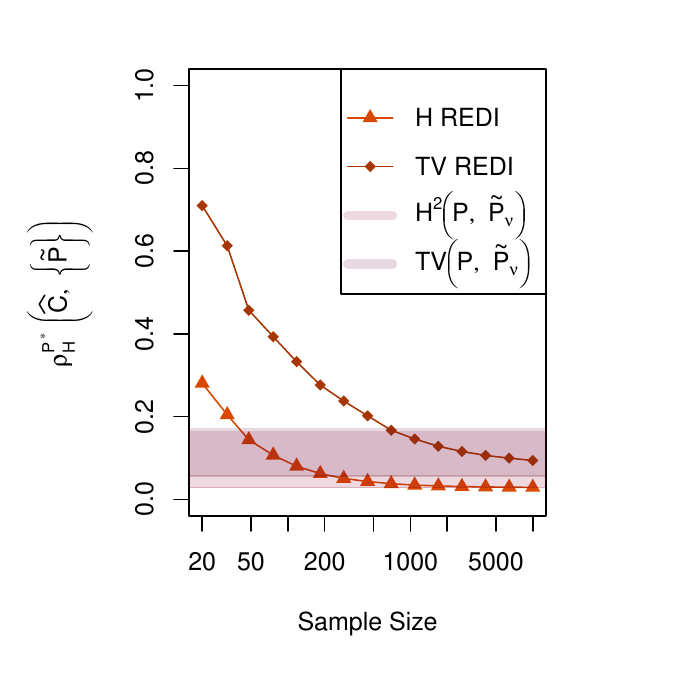}
    \end{subfigure}
    \begin{subfigure}{.32\textwidth}
        \centering
      \caption{Case 3: Heavily asymmetric}
        \includegraphics[trim={15 20 25 20},clip,width=\textwidth,height=1.7in]{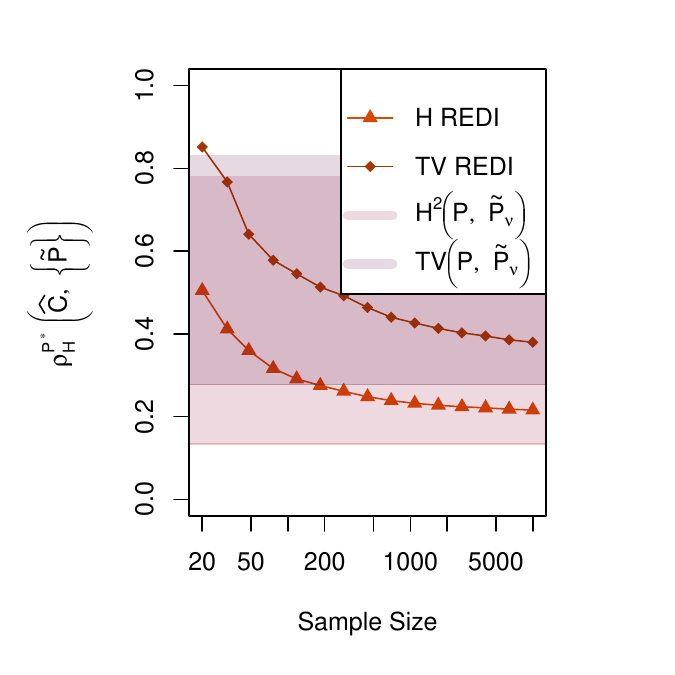}
    \end{subfigure}
    \begin{subfigure}{.32\textwidth}
        \centering
        \includegraphics[trim={15 20 25 20},clip,width=\textwidth,height=1.7in]{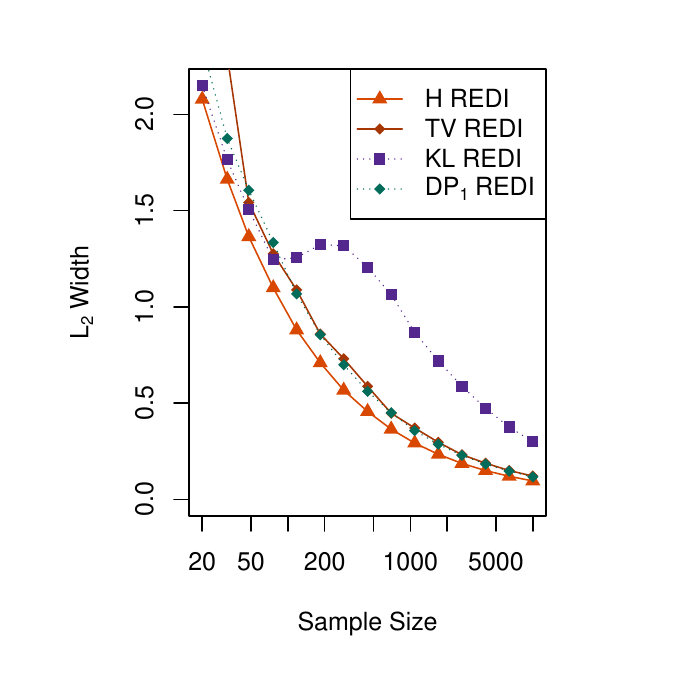}
    \end{subfigure}
    \begin{subfigure}{.32\textwidth}
        \centering
        \includegraphics[trim={15 20 25 20},clip,width=\textwidth,height=1.7in]{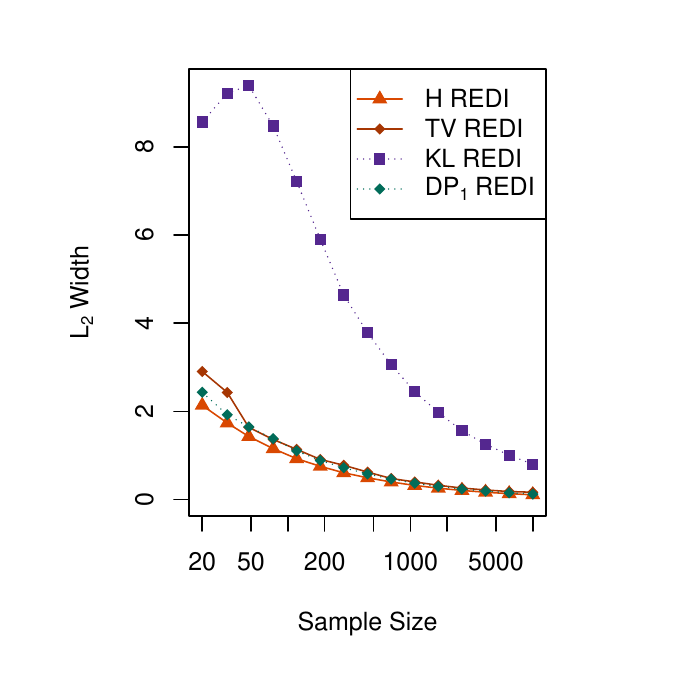}
    \end{subfigure}
    \begin{subfigure}{.32\textwidth}
        \centering
        \includegraphics[trim={15 20 25 20},clip,width=\textwidth,height=1.7in]{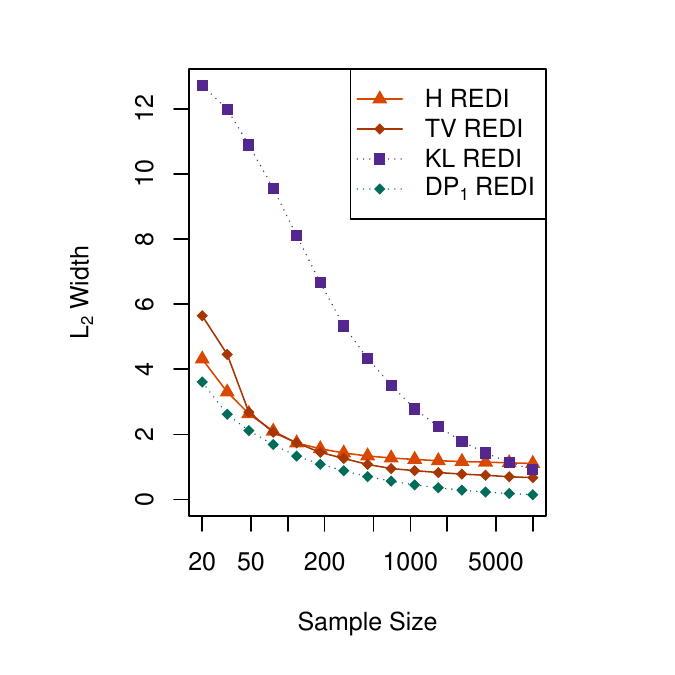}
    \end{subfigure}    
    \caption{Performance of confidence sets of Hellinger and TV projections on contamination examples over 1000 replications. (Top) Median $\rho$-diameter. (Bottom) Median interval width}    
\end{figure}

\section{Additional Results and Technical Details of Section~\ref{sec:LSEM}} \label{sec:LSEM_supp}

\subsection{\Redi sets for LSEM}

\textbf{KL \Redi}
Recall that KL\Redi utilizes log likelihood ratios as in sLRT. We follow the same strategy of profile sLRT with profile likelihood ratio and use the (corrupted) estimated variance of profile log likelihood ratios to set the critical value. Formally, suppose we observe iid $n$ observations $\Data := \{\bm{X}_1, \ldots \bm{X}_n \} \sim {\truedist}^{\otimes n}$. Randomly split $\Data$ into two equally sized data $\Data_0$ and $\Data_1$. Denote their sample sizes $n_0$ and $n_1$, respectively.
By LSEM construction~\ref{eq:LSEM}, we have the Gaussian log likelihood as follows:
\begin{align*}
  \ell ( \Sigma;  \bm{X}_i) = - \log (2\pi) - \frac{1}{2} \log |\Sigma| - \frac{1}{2} \bm{X}_i^\top \Sigma^{-1} \bm{X}_i.
\end{align*}

Since the functional of interest is $\psi := C(i\to j)$ and thus $\sigma^2$ is a nuisance parameter, we profile out the $\sigma$ for a given $\psi$, i.e.,
\begin{align*}
  \ell_\dagger (\psi; \bm{X}_i) := \ell(\hat{\sigma}^2_{0,\psi} \Lambda; X_i) = - \log (2\pi) - \frac{1}{2} \log \hat{\sigma}^2_{0,\psi} | \Lambda| - \frac{1}{2\hat{\sigma}^2_{0,\psi}} \bm{X}_i^\top \Lambda^{-1} \bm{X}_i,\qquad
  \hat{\sigma}^2_{0,\psi} = \argmax_{C(i\to j) = \psi } \ell (\Sigma; \Data_0),
\end{align*}
where $\Lambda \equiv \Lambda (\psi) $ is a structured correlation matrix depends on $\psi$. Conveniently, $\hat{\sigma}^2_{0,\psi}$ has an analytic solution for Gaussian likelihood:
\begin{align*}
  \hat{\sigma}^2_{0,\psi} = \tr{\Lambda(\psi)^{-1} S_0 } / 2, \quad S_0 = \frac{1}{n_0} \sum_{\bm{X}_i \in \Data_0} \bm{X}_i \bm{X}_i^\top.
\end{align*}
Let us detail $\Lambda(\psi)$. When $\psi \ne 0$, $\Sigma \in \mathcal{M}_{r1}$ and thus
\begin{align}\label{eq:Lambda_M1}
  \Lambda (\psi) = 
  \begin{pmatrix}
    1 & \psi \\
    \psi & 1 + \psi^2
  \end{pmatrix}.
\end{align}
When $\psi = 0$, $\Sigma$ can be any matrix in $\mathcal{M}_{r2}$ and thus
\begin{align}\label{eq:Lambda_M1_0}
  \Lambda (0) = B(\tilde{\xi}) =
  \begin{pmatrix}
    1 + \tilde{\xi}^2 & \tilde{\xi} \\
    \tilde{\xi} & 1
  \end{pmatrix} \text{ s.t. }
  \tilde{\xi} = \argmin_{\xi}\sum_{\bm{X}_i \in \Data_0} \bm{X}_i^\top B(\xi)^{-1} \bm{X}_i = (S_0)_{(i,j)} / (S_0)_{(j,j)},
\end{align}
since $|B(\xi)|  = 1$ for any $\xi $.
Then the test statistic of \emph{profile} KL \Redi is as follows:
\begin{align*}
  T_i (P_\psi, \widehat{P}_1) 
    = \ell_\dagger(\psi; \bm{X}_i) - \ell (\widehat{\Sigma}_1; \bm{X}_i) 
    = - \frac{1}{2} \left( \log \hat{\sigma}^2_{0,\psi} | \Lambda(\psi)| - \log |\widehat{\Sigma}_1| + \bm{X}_i^\top (\Lambda^{-1}(\psi) /  \hat{\sigma}^2_{0,\psi} - \widehat{\Sigma}_1^{-1}) \bm{X}_i\right),
\end{align*}
where $\widehat{\Sigma}_1$ is the pilot estimator of $\Sigma$ with $\Data_1$. A sensible choice of pilot estimator is MLE constrained in $\mathcal{M} = \mathcal{M}_{r1} \cup \mathcal{M}_{r2}$:
\begin{align*}
  \widehat{\Sigma}_1 = \argmax_{\Sigma \in \mathcal{M}} \ell (\Sigma; \Data_1) 
  =   \argmin_{\Sigma \in \mathcal{M}} 
  \left( \log |\Sigma| + \tr{\Sigma^{-1} S_1} \right), \quad S_1 = \frac{1}{n_1} \sum_{\bm{X}_i \in \Data_1} \bm{X}_i \bm{X}_i^\top.
\end{align*}
Finally we define the corresponding profile KL \Redi set for $C(i\to j)$ as
\begin{align}\label{eq:KLREDI_LSEM}
  \Chat_{\KL\Redi, n} = \left\{\psi \in \bbR : \overline{T}_{n_0, \delta}(P_\psi, \pilot) \le  \frac{z_{\alpha}  \hat{s}_{P,\delta}}{\sqrt{n_0}} \right\}.
\end{align}

\textbf{DP \Redi}
Given that DP \Redi statistic is the plug-in estimator of the difference in DP divergences, we adopt the similar profiling as we did for profile KL \Redi set~\ref{eq:KLREDI_LSEM}. Formally, the empirical DP divergence is proportional to the following:
\begin{align*}
  \DP_\beta (\bbP_{n_0} \| P) 
    &\propto \int p^{1+\beta} \d \lambda - \left(1 + \frac{1}{\beta}\right) \frac{1}{n_0} \sum_{\bm{X}_i \in \Data_0} p^\beta (\bm{X}_i)\\
    &= \frac{(2\pi)^{-\beta}}{|\Sigma|^{\beta/2}} \left[ \frac{1}{1+\beta} - \left(1 + \frac{1}{\beta}\right) \frac{1}{n_0} \sum_{\bm{X}_i \in \Data_0}  \exp \left( - \frac{\beta}{2} \bm{X}_i^\top \Sigma^{-1} \bm{X}_i\right)\right] := \frac{1}{n_0} \sum_{\bm{X}_i \in \Data_0} e\DP_\beta (\Sigma; \bm{X}_i),    
\end{align*}
where the last equality is from the fact that $P = \Normal(0, \Sigma)$.

Given that $\Sigma = \sigma^2 \Lambda(\psi)$ s.t. $|\Lambda| = 1$,
\begin{align*}
  e\DP_\beta (\sigma^2 \Lambda(\psi); \Data_0) = \frac{1}{(2\pi)^{\beta} \sigma^{2\beta}} \left[ \frac{1}{1+\beta} - \left(1 + \frac{1}{\beta}\right) \frac{1}{n_0} \sum_{\bm{X}_i \in \Data_0}  \exp \left( - \frac{\beta}{2 \sigma^2} \bm{X}_i^\top \Lambda^{-1} \bm{X}_i\right)\right] 
\end{align*}

Let us profile out the $\sigma$ for a given $\psi$, i.e., 
\begin{align*}
  e\DP_{\beta,\dagger} (\psi; \bm{X}_i) = e\DP_\beta \left(\hat{\sigma}^2_{0,\psi} \Lambda(\psi); \bm{X}_i\right),
  \qquad
  \hat{\sigma}^2_{0,\psi} = \argmin_{\sigma : C(i\to j) = \psi } e\DP_\beta (\sigma^2 \Lambda(\psi); \Data_0).
\end{align*}
When $\psi \ne 0$, $\Lambda$ is the same as \eqref{eq:Lambda_M1} and thus solving above minimization suffices.
When $\psi = 0$, $\Sigma$ can be any matrix in $\mathcal{M}_{r2}$ and thus has the same form as \eqref{eq:Lambda_M1_0} except that
\begin{align*}
  \tilde{\xi} = \argmin_{\xi } e\DP_\beta (\hat{\sigma}^2_{0,0} B(\xi); \Data_0).
\end{align*}
Unlike $\KL$\Redi, $\hat{\sigma}^2_{0,\psi}$ is analytically intractable and thus we iteratively update $\hat{\sigma}^2_{0,0}$ and $\tilde{\xi}$ using numerical optimization (i.e., fixed-point iteration).


Then we define the profile $\DP$ \Redi test statistic as follows:
\begin{align*}
  T_i (P_\psi, \pilot) = e\DP_{\beta, \dagger} (\psi; \bm{X}_i) - e\DP_\beta (\widehat{\Sigma}_1; \bm{X}_i),
\end{align*}
and construct profile DP \Redi set for $C(i\to j)$ analogous to \eqref{eq:KLREDI_LSEM}.

For the pilot estimator, it is sensible to choose minimum DP estimator constrained in $\mathcal{M}$:
\begin{align*}
  &\argmin_{\Sigma \in \mathcal{M}} e\DP_\beta (\Sigma; \Data_1).
\end{align*}

\subsection{Simulation}

We generate $n \in \{100, \ldots, 1000\}$ i.i.d samples following the model \eqref{eq:LSEM} with $\sigma^2 = 1, \beta_{21} = 0$ and thus $C(1 \to 2) = \beta_{12} \in \{0, 0.1, 0.2, 0.5\}$. We then construct 95\% confidence intervals for $C (1\to 2)$ using 6 different methods: \texttt{LRT1}, \texttt{LRT2}, \texttt{sLRT}, \texttt{KLREDI}, and \texttt{DPREDI} (with $\beta = 0.1$ and $1$). We also construct both crossfit and vanila \texttt{sLRT}, \texttt{KLREDI}, and \texttt{DPREDI} set, respectively. For each methods, we assess the empirical coverage by the average number of times that $C(1 \to 2)$ is included in the confidence interval and average maximum width out of 500 replications.

Figure~\ref{fig:coverage_well} shows the empirical coverage of each method. All methods yield valid coverage except the LRT1 slightly under covering in $n=500, C(1\to2) = 0.1$. Compared to \texttt{LRT1}, \texttt{LRT2}, both universal inference and robust universal inference methods shows conservative coverage. 
\begin{figure}[!htb]
\centering
\includegraphics[width=0.9\textwidth, trim={0 0 0 0},clip]{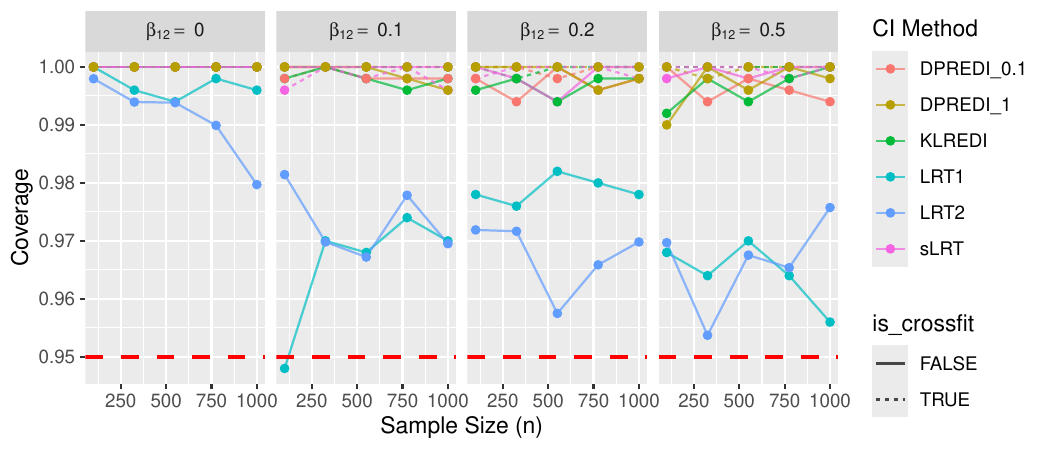}  
\caption{Empirical coverage by varying $C(1\to 2)$ and sample sizes in well-specified setting}\label{fig:coverage_well}
\end{figure}

Figure~\ref{fig:width_well} shows the average maximum width of the confidence set. Here we denote maximum width since the resulting confidence set may be the union of ${0}$ and some interval not including $0$. The width of confidence set the largest for DP\Redi with $\beta=1$, followed by DP\Redi with $\beta=0.1$, KL\Redi, sLRT, LRT1 and LRT2. This ordering underscores the conservatism of (robust) universal inference method as well as of choosing robust target of inference---The higher the $\beta$ is, the more robust the DP projection becomes---when the model is well specified. In all (robust) universal inference methods, crossfit reduced the size of the confidence set without sacrificing the validity. Particularly, the width of the crossfit KL\Redi is comparable to vanilla sLRT. This result precisely matches with the power comparison under Gaussian location model in the Supplementary Material S8.1.
\begin{figure}[!htb]
\centering
\includegraphics[width=0.9\textwidth, trim={0 0 0 0},clip]{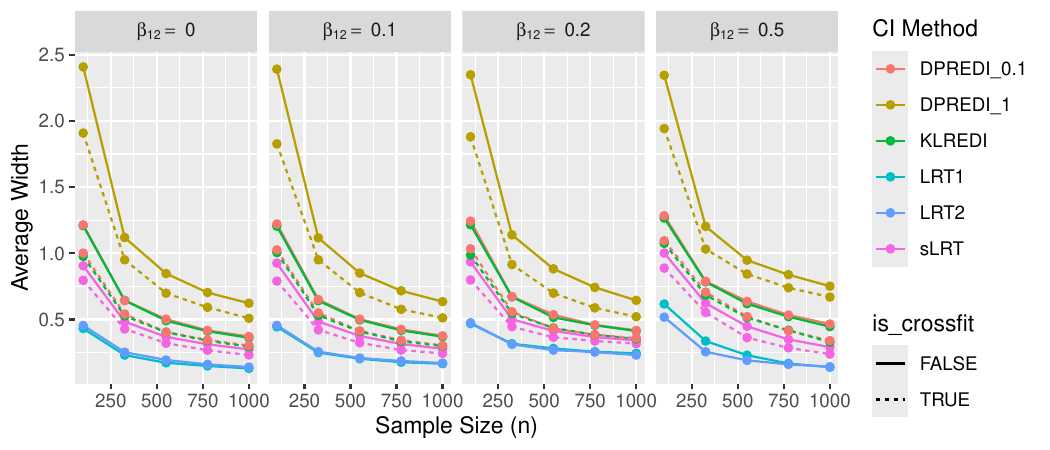}  
\caption{Average maximum width by varying $C(1\to 2)$ and sample sizes in well-specified setting}\label{fig:width_well}
\end{figure}

\section{KL\Redi for Conditional Estimation} \label{sec:KLRedi_regression}

Suppose we observe i.i.d. paired samples $(X_1, Y_1), \dots (X_n,Y_n) \in \calX \times \calY$ drawn from a joint distribution $\truedist_{X,Y} \in \trueclass$ where $\truedist_{Y|X} = \truedist_{Y|X} \otimes \truedist_X$, $\truedist_{Y|X} \in \trueclass_{Y|X}$ is the true conditional response distribution, and $\truedist_X \in \trueclass_X$ is a true marginal regressor distribution following the notation from \citet{buja_models_2019-2}. We assume that $\trueclass_{Y|X}$ is a class of regular conditional probability distributions. Suppose the working model $\model_{Y|X}$ is a collection of (regular) conditional response distributions parametrized by $\theta$, i.e., $\model_{Y|X} = \{P_{Y|X; \theta}: \theta \in \Theta\}$. Let the marginal regressor distribution $P_X \in \calM(\calX)$ be any probability distribution on $\calX$. Then the collection of the joint probability distribution from $\model_{Y|X; \Theta}$ and $\calM(\calX)$ is given by $\model_{\Theta} = \{P_{Y|X; \theta}\otimes P_X : \theta\in\Theta, P_X\in \calM(\calX) \}$.

Suppose our target of inference is the KL projection parameter defined as
\begin{align*}
  \projtheta 
    = \arginf_{\theta\in\Theta, P_X \in \calM(\calX)} \KL \left(\truedist_{X,Y} \| P_{Y|X;\theta} \otimes P_X \right)
    = \arginf_{\theta\in\Theta} \KL \left( \truedist_{Y|X} \| P_{Y|X; \theta} \right),
\end{align*}
where the last equality is due to the fact that the infimum over $\calM(\calX)$ is attained at $P_X = \truedist_X$ irrespective of the choice $\theta$:
\begin{align*}
  \inf_{P_X \in \calM(\calX)} \KL(\truedist_{X,Y} \| P_{Y|X} \otimes P_X )
  &= \inf_{P_X} \int \log \frac{\truedist_{X,Y}}{P_{Y|X} \otimes P_X} \d \truedist_{X,Y}\\
  &= \int \log \frac{\truedist_{Y|X}}{P_{Y|X} } \d \truedist_{Y|X} 
    + \inf_{P_X} \int \log \frac{ \truedist_{X}}{ P_X } \d \truedist_{X}\\
  &= \KL (\truedist_{Y|X} \| P_{Y|X}).
\end{align*}

To design a $\KL\Redi$ set of $\projtheta$, we follow steps analogous to those in Section~5.1 using the natural plug-in estimator of the difference in KL divergence. In particular, we can write the difference in the KL divergence as follows:
\begin{align*}
  \KL(\truedist_{X,Y} \| P_{Y|X;\theta} \otimes P_X ) - \KL (\truedist_{X,Y} \| P_{Y|X;\thetahat_1} \otimes P_X ) 
  &= \int \log \frac{P_{Y|X;\thetahat_1}}{P_{Y|X;\theta} } \d \truedist_{X,Y}.
\end{align*}
Therefore, the natural (split) plug-in estimator of the difference in KL divergence is 
\begin{align*}
  \overline{T}_{n_0} = \frac{1}{n_0} \sum_{i\in\Data_0} T_i (\theta, \thetahat_1), \qquad  T_i (\theta, \thetahat_1) \equiv T\left((X_i,Y_i); \theta, \thetahat_1 \right) = \log \frac{p_{Y|X_i; \thetahat_1} (Y_i)}{p_{Y|X_i; \theta} (Y_i)},
\end{align*}
where $\thetahat_1$ is a pilot estimate constructed from the split sample $\Data_1$. We can now construct the $\KL\Redi$ set $\Chat_{\KL\Redi, n}$ with $\delta$-corrupted samples analogous to (17).

\end{document}